\documentclass[11pt]{article}
\usepackage{amsfonts}
\usepackage{amssymb,amsmath,amsthm}
\usepackage{epsfig}
\usepackage{color}
\usepackage{latexsym}
\usepackage{enumerate}
\usepackage{algorithm}
\usepackage{algorithmic}

\usepackage{tweaklist}

\usepackage[dvips, paper=letterpaper, top=1in, bottom=.75in, left=1in, right=1in, nohead, includefoot, footskip=.25in]{geometry}

\newtheorem{theorem}{Theorem}
\newtheorem{corollary}{Corollary}
\newtheorem{lemma}{Lemma}

\newtheorem{proposition}{Proposition}

\newtheorem{definition}{Definition}

\newtheorem{observation}{Observation}

\newtheorem{remark}{Remark}

\newtheorem{informaltheorem}{Informal Theorem}
\newenvironment{prevproof}[2]{\noindent {\em {Proof of {#1}~\ref{#2}:}}}{$\Box$\vskip \belowdisplayskip}

\newcommand{\junk}[1]{}
\junk{

}


\newcommand{\poly}{\text{poly}}
\newcommand{\opt}{\text{OPT}}

\newcommand{\mattnote}[1]{{\color{blue}{#1}}}

\newcommand{\notshow}[1]{{}}


\DeclareMathOperator{\argmax}{argmax}


\definecolor{MyGray}{rgb}{0.8,0.8,0.8}

\begin{document}
\title{Optimal Multi-Dimensional Mechanism Design: Reducing Revenue to Welfare Maximization}
\author {Yang Cai\thanks{Supported by NSF Award CCF-0953960 (CAREER) and CCF-1101491.}\\
EECS, MIT \\
\tt{ycai@csail.mit.edu}
\and
Constantinos Daskalakis\thanks{Supported by a Sloan Foundation Fellowship and NSF Award CCF-0953960 (CAREER) and CCF-1101491.}\\
EECS, MIT \\
\tt{costis@mit.edu}
\and
S. Matthew Weinberg\thanks{Supported by a NSF Graduate Research Fellowship and NSF Award CCF-1101491.}\\
EECS, MIT\\
\tt{smw79@mit.edu}
}
\addtocounter{page}{-1}
\maketitle
\begin{abstract}
We provide a reduction from revenue maximization to welfare maximization in multi-dimensional {Bayesian auctions} with arbitrary {(possibly combinatorial)} feasibility constraints and independent bidders with arbitrary {(possibly combinatorial)} demand constraints, appropriately extending Myerson's single-dimensional result~\cite{Myerson81} to this setting. We also show that every feasible Bayesian auction can be implemented as a distribution over \emph{virtual VCG allocation rules}. A virtual VCG allocation rule has the following simple form: {Every bidder's type $t_i$} is transformed into a virtual type $f_i(t_i)$, {via a bidder-specific function}. Then, the allocation maximizing virtual welfare is chosen. Using this {characterization}, we show how to find and run the revenue-optimal auction given only black box access to an implementation of {the VCG allocation rule}. We generalize this result to arbitrarily correlated bidders, introducing the notion of a \emph{second-order} VCG allocation rule.

We obtain our reduction {from revenue to welfare optimization} via two algorithmic results on reduced form auctions in settings with arbitrary feasibility {and demand} constraints. First, we provide a separation oracle for determining feasibility of a reduced form auction. Second, we provide a geometric algorithm to decompose any feasible reduced form into a distribution over virtual VCG allocation rules. In addition, we show how to execute both algorithms given only black box access to an implementation of {the VCG allocation rule}. 

Our results are computationally efficient for all multi-dimensional settings where the bidders are additive (or can be efficiently mapped to be additive). In this case, {our mechanisms run in time polynomial in the number of items and the total number of bidder types, but \emph{not} type profiles. This running time is polynomial in the number of items, the number of bidders, and {the cardinality of the support of each bidder's value distribution}. For generic correlated distributions, this is the natural description complexity of the problem.}
The running time can be further improved to polynomial in only the number of items and the number of bidders in item-symmetric settings by making use of techniques from~\cite{DW12}.
\end{abstract}
\thispagestyle{empty}

\newpage

\section{Introduction} \label{sec:intro}
The \emph{multi-dimensional mechanism design} problem has received much attention from the economics community, and recently the computer science community as well. The problem description is simple: a seller has a limited supply of several heterogenous items for sale and many interested buyers. The goal is for the seller to design an auction for the buyers to play that will maximize her revenue. In order to make this problem tractable (not just computationally, but at all), some assumptions must be made. First, we assume that the seller has some bayesian prior $\mathcal{D}$ on the \emph{types} of buyers that will show up to the auction. Second, we assume that the buyers have the same prior as the seller, and that they will play any auction at a \emph{Bayes-Nash Equilibrium}. We also assume that all buyers are \emph{quasi-linear} and \emph{risk-neutral}, terms that are defined formally in Section~\ref{sec:notation}. Finally, we say that the goal of the seller is to maximize her \emph{expected revenue} over all auctions when played at a Bayes-Nash Equilibrium. All of these assumptions have become standard with regards to this problem. Indeed, all were made in Myerson's seminal paper on revenue-maximizing mechanism design where this problem is solved for a single item and product distributions~\cite{Myerson81}. In addition, Myerson introduces the \emph{revelation principle}, showing that every auction played at a Bayes-Nash Equilibrium is strategically equivalent to a \emph{Bayesian Incentive Compatible} (BIC) \emph{direct revelation} mechanism. In a direct revelation mechanism, each bidder reports a bid for each possible subset of items they may receive. A direct revelation mechanism is called BIC if it is a Bayes-Nash Equilibrium for each bidder to bid exactly their value for each subset. In essence, Myerson's revelation principle says that one only need to consider BIC direct revelation mechanisms rather than arbitrary auctions played at a Bayes-Nash Equilibrium to maximize revenue (or any other objective for that matter).

As we depart from Myerson's single-item setting, the issue of \emph{feasibility} arises. With only a single item for sale, it is clear that the right feasibility constraints are simply that the item is always awarded to at most a single bidder. With many heterogenous items, there are many natural scenarios that we would like to model. Here are some examples:
\begin{enumerate}
\item Maybe the items are houses. In this case, a feasible allocation awards each house to at most one bidder, and to each bidder at most one house.
\item Maybe the items are appointment slots with doctors. In this case, a feasible allocation does not award the same slot to more than one bidder, and does not award a bidder more than one  slot with the same doctor, or overlapping slots with different doctors.
\item Maybe the items are bridges built at different locations. In this case, a feasible allocation awards each bridge to everyone or to no one.
\end{enumerate}

Sometimes, feasibility constraints are imposed by the supply side of the problem: a doctor cannot meet with two patients at once, and a bridge cannot be built for one bidder but not another. Other times, feasibility constraints are imposed by the demand side of the problem: no bidder wants two houses or two appointments with the same doctor. Without differentiating where feasibility constraints come from, we model them in the following way: let ${\cal A} = [m] \times [n]$ denote the space of assignments (where $(i,j)$ denotes that bidder $i$ is assigned item $j$), and let $\mathcal{F}$ be a set system on ${\cal A}$ (that is, a subset of $2^{\cal A}$). Then in a setting with feasibility constraints $\mathcal{F}$, it is possible for the seller to simultaneously {make} any subset of {assignments} in $\mathcal{F}$. $\mathcal{F}$ may be a truly arbitrary set system, \emph{it need not even be downward-closed.}

As we leave the single-dimensional setting, we also need to consider how a bidder values a bundle of multiple items. In general, a bidder may have arbitrarily complicated ways of evaluating bundles of items, and this information is encoded into the bidder's type. For the problem to be computationally meaningful, however, one would want to either assume that the auctioneer only has oracle access to a bidder's valuation, or impose some structure on the bidders' valuations allowing them to be succinctly described. Indeed, virtually every recent result in revenue-maximizing literature~\cite{AlaeiFHHM12,BhattacharyaGGM10,CaiD11,CaiDW12, CaiH12, ChawlaHK07,ChawlaHMS10,DW12,KleinbergW12} assumes that bidders are {\em capacitated-additive}.\footnote{A bidder is capacitated-additive if for some constant $C$ her value for any subset $S$ of at most $C$ goods is equal to the sum of her values for each item in $S$, and her value for any subset $S$ of more than $C$ goods is equal to her value for her favorite $S' \subseteq S$ of at most $C$ goods.} {In fact, most results are for unit-demand bidders.}  It is easy to see that, if we {are allowed to incorporate arbitrary} demand constraints into the definition of $\mathcal{F}$, such bidders can be described in our model as simply additive. In fact, far more complex bidders can be modeled as well, as demand constraints could instead be some arbitrary set system. Because $\mathcal{F}$ is already an arbitrary set system, we may model bidders as simply additive and still capture virtually every bidder model studied in recent results, and more general ones as well. In fact, we note that every multi-dimensional setting can be mapped to an additive one, albeit not necessarily computationally efficiently.\footnote{The generic transformation is to introduce a meta-item for every possible subset of the items, and have the feasibility constraints (which are allowed to be arbitrary) be such that an allocation is feasible if and only if each bidder receives at most one meta-item, and the corresponding allocation of real items (via replacing each meta-item with the subset of real items it represents) is feasible in the original setting. Many non-additive settings allow much more computationally efficient transformations than the generic one.} So while we focus our discussion to additive bidders throughout this paper, our results apply to every auction setting, without need for any additivity assumption. In particular, our characterization result (Informal Theorem~\ref{infthm:characterization}) of feasible allocation rules holds for any multi-dimensional setting, and our reduction from revenue to welfare optimization (Informal Theorem~\ref{infthm:PTAS}) also holds for any setting, and we show that it can be carried out computationally efficiently for any additive setting.

\paragraph{Optimal Multi-dimensional Mechanism Design.} With the above motivation in mind, we formally state the revenue optimization problem we solve. We remark that virtually every known result in the multi-dimensional mechanism design literature (see references above) tackles a special case of this problem, possibly with budget constraints on the bidders (which can be easily incorporated in all results presented in this paper as discussed in Appendix~\ref{app:budgets}), and possibly replacing BIC with IC. We explicitly assume in the definition of the problem that the bidders are additive, recalling that this is not a restriction if computational considerations are not in place.

\smallskip\noindent  \framebox{
\begin{minipage}{\hsize}
\textbf{Revenue-Maximizing Multi-Dimensional Mechanism Design Problem (MDMDP):} Given as input $m$ distributions (possibly correlated across items) $\mathcal{D}_1,\ldots,\mathcal{D}_m$ over valuation vectors for $n$ heterogenous items and feasibility constraints $\mathcal{F}$, output a BIC mechanism $M$ whose allocation is in $\mathcal{F}$ with probability $1$ and whose expected revenue is optimal relative to any other, possibly randomized, BIC mechanism when played by $m$ additive bidders whose valuation vectors are sampled from $\mathcal{D} = \times_i \mathcal{D}_i$.
\end{minipage}}

\smallskip We provide a poly-time black box reduction from the MDMDP with feasibility constraints $\mathcal{F}$ to implementing VCG with feasibility constraints $\mathcal{F}$ by introducing the notion of a \emph{virtual VCG allocation rule}. A virtual VCG allocation rule is defined by a collection of functions $f_i$ for each bidder $i$. $f_i$ takes as input bidder $i$'s reported bid vector and outputs a virtual bid vector. When the reported bid vectors are $\vec{v}_1,\ldots,\vec{v}_m$, the virtual VCG allocation rule with functions $\{f_i\}_{i\in[m]}$ simply implements the VCG allocation rule (with feasibility constraints $\mathcal{F}$) on the virtual bid vectors $f_1(\vec{v}_1),\ldots,f_m(\vec{v}_m)$. We also note here that implementing VCG for additive bidders is in general \emph{much} easier than implementing VCG for arbitrary bidders.\footnote{When bidders are additive, implementing VCG is simply solving the following problem, which is very well understood for a large class of feasibility constraints: every element of ${\cal A}$ has a weight. The weight of any subset of ${\cal A}$ is equal to the sum of the weights of its elements. Find the max-weight subset of ${\cal A}$ that is in $\mathcal{F}$.} Our solution to the MDMDP is informally stated below, and is formally given as Theorem~\ref{thm:general} of Section~\ref{sec:revenue}:
\begin{informaltheorem}\label{infthm:PTAS} Let $A_{\mathcal{F}}$ be an implementation of the VCG allocation rule with respect to $\mathcal{F}$ (i.e. $A_{\mathcal{F}}$ takes as input a profile of bid vectors and outputs the VCG allocation). Then {for all $\mathcal{D}_1,\ldots,\mathcal{D}_m$} with finite support and all $\mathcal{F}$, {given $\mathcal{D}_1,\ldots,\mathcal{D}_m$ and black box access to $A_{\mathcal{F}}$ (and without need of knowledge of ${\cal F}$)}, there exists a fully polynomial-time randomized approximation scheme\footnote{{This is often abbreviated as FPRAS, and we provide its formal definition in Section~\ref{sec:notation}.}} for the MDMDP whose runtime is polynomial in {$n$},  the number of bidder types (and \emph{not} type profiles), and the runtime of $A_{\mathcal{F}}$. Furthermore, the allocation rule {of the output mechanism} is a distribution over virtual VCG allocation rules. 
\end{informaltheorem}

We remark that the functions defining a virtual VCG allocation rule may map a bidder type to a vector with negative coordinates. Therefore, our given implementation of the VCG allocation rule should be able to handle negative weights. This is not a restriction for arbitrary downwards-closed $\mathcal{F}$ as any implementation of VCG that works for non-negative weights can easily be (in a black-box way) converted into an implementation of VCG allowing arbitary (possibly negative) inputs.\footnote{The following simple black-box transformation achieves this: first zero-out all negative coordinates in the input vectors; then run VCG; in the VCG allocation, un-allocate item $j$ from bidder $i$ if the corresponding coordinate is negative; this is still a feasible allocation as the setting is downwards-closed.} But this is not necessarily true for non downwards-closed ${\cal F}$'s. If the given $A_{\mathcal{F}}$ cannot accommodate negative weights, we need to replace it with an algorithm that can in order for our results to be applicable.

Several extensions are stated and discussed in Section~\ref{sec:revenue}, including solutions for distributions of infinite support, and improved runtimes in certain cases that make use of techniques from~\cite{DW12}. We also extend all our solutions to accommodate strong budget constraints by the bidders in Appendix~\ref{app:budgets}. So how does our solution compare to Myerson's single-dimensional result? One interpretation of Myerson's optimal auction is the following: First, he shows that the allocation rule used by the optimal auction is just the Vickrey allocation rule, but {on} virtual bids instead of true bids. Second, he provides a closed form for each virtual transformation using (ironed) virtual values. And finally, he provides a closed form pricing rule that makes the entire mechanism BIC (in fact, IC). In the multi-dimensional setting, it is known that randomness is necessary to achieve optimal revenue, even with a single bidder and two items~\cite{BriestCKW10,ChawlaMS10}, so we cannot possibly hope for a solution as clean as Myerson's. However, we have come quite close in a very general setting: Our allocation rule is just a distribution over virtual VCG allocation rules. And instead of a closed form for each virtual transformation and the pricing rule, we provide a computationally efficient algorithm to find them. 

\paragraph{Characterization of Feasible Interim Allocation Rules.} In addition to our solution of the MDMDP, we provide a characterization of feasible interim allocation rules of multi-dimensional mechanisms in all (not necessarily additive) settings.\footnote{For non-additive settings, the characterization is more usable for the purposes of mechanism design when applied to meta-items (see discussion above), although it still holds when directly applied to items as well.}  We show the following informal theorem, which is stated formally as Theorem~\ref{thm:characterization} in Section~\ref{sec:independent}. Recall that a virtual VCG allocation rule is associated with a collection of functions $f_i$ that map types $t_i$ to virtual types $f_i(t_i)$ for each bidder $i$, and allocates the items as follows: for a given type vector $(t_1,...,t_m)$, the bidders' types  are transformed into virtual types $(f_1(t_1),\ldots,f_m(t_m))$; then the virtual welfare optimizing allocation is chosen.

\begin{informaltheorem}\label{infthm:characterization}

Let $\mathcal{F}$ be any set system of feasibility constraints, and $\mathcal{D}$ any (possibly correlated) distribution over bidder types. Then the interim allocation rule of any feasible mechanism can be implemented as the interim rule of a distribution over virtual VCG allocation rules.

\end{informaltheorem}

\subsection{Related Work}
\subsubsection{Structural Results}
Some structural results are already known for special cases of the MDMDP and its extension to correlated bidders. As we have already discussed, Myerson showed that the revenue-optimal auction for selling a single item is a virtual Vickrey auction: bids are transformed to virtual bids, and the item is awarded to the bidder with the highest non-negative virtual value~\cite{Myerson81}. It was later shown that this approach also applies to all single-dimensional settings (i.e. when bidders can't tell the difference between different houses, appointment slots, bridges, etc) as long as bidders' values are independent. In this setting, bids are transformed to virtual bids (via Myerson's transformation), and the virtual-welfare-maximizing feasible allocation is chosen. These structural results are indeed strong, but hold only in the single-dimensional setting and are therefore of very limited applicability.

On the multi-dimensional front, it was recently shown that similar structure exists in restricted settings. It is shown in~\cite{CaiDW12} that when selling multiple heterogenous items to additive bidders with \emph{no} demand constraints (i.e. $\mathcal{F}$ only ensures that each item is awarded to at most one bidder), the optimal auction \emph{randomly} maps bids to virtual bids (according to some function that depends on the distributions from which bidders' values are drawn), then \emph{separately} allocates each item to the highest virtual bidder.\footnote{In fact, the allocation rule of~\cite{CaiDW12} has even stronger structure in that each item is independently allocated to the bidder whose virtual value for that item is the highest, and moreover the random mapping defining virtual values for each item simply irons a \emph{total ordering} of all bidder types that depends on the underlying distribution.} It is shown in~\cite{AlaeiFHHM12} that when there are many copies of the same customizable item and a matroid constraint on which bidders can simultaneously receive an item (i.e. $\mathcal{F}$ only ensures that at most $k$ items are awarded, subject to a matroid constraint on the served bidders), that the optimal auction randomly maps bids to virtual bids (according to some function that depends on the distributions from which bidders' values are drawn), then allocates the items to maximize virtual surplus (and customizes them after). We emphasize that both results, while quite strong for their corresponding settings, are extemely limited in the settings where they can be applied. In particular, neither says anything about the simple setting of selling houses to unit-demand bidders (i.e. $\mathcal{F}$ ensures that each house is awarded at most once and each bidder receives at most one house: Example~1, Section~\ref{sec:intro}). Selling houses to unit-demand bidders is on the easy side of the settings considered in this paper, as we provide a solution in multi-dimensional settings with arbitrary feasibility constraints. We do not even assume that $\mathcal{F}$ is downward-closed.

For correlated bidders, the series of results by Cremer and McLean~\cite{CM85,CM88} and McAfee and Reny~\cite{MR92} solve for arbitrary feasibility constraints subject to a non-degeneracy condition on the bidder correlation (that is not met when bidders are independent). Under this assumption, they show that the optimal auction extracts full surplus (i.e. has expected revenue equal to expected welfare) and simply uses the VCG allocation rule (the prices charged are not the VCG prices, but a specially designed pricing menu based on the bidder correlation). Our structural results for correlated bidders apply to \emph{arbitrary} feasibility constraints as well as \emph{arbitrary} bidder correlation, removing the non-degeneracy assumption. Of course, the expected revenue extracted by our mechanisms cannot possibly always be as high as the expected maximum social welfare (as it happens in Cremer-McLean and McAfee-Reny) as they also apply to independent bidders, but our characterization is still quite simple: the optimal auction randomly maps pairs of actual bids and possible alternative bids to second-order bids. Then, the second-order bids are combined (based on the underlying bidder correlation) to form virtual bids, and the virtual-welfare-maximizing allocation is chosen.
\subsubsection{Algorithmic Results}

The computer science community has contributed computationally efficient solutions to special cases of the MDMDP in recent years. Many are constant factor approximations~\cite{Alaei11,BhattacharyaGGM10,ChawlaHK07,ChawlaHMS10,KleinbergW12}. These results cover settings where the bidders are unit-demand (or capacitated-additive) and the seller has matroid or matroid-intersection constraints on which bidders can simultaneously receive which items. All these settings are special cases of the MDMDP framework solved in this paper.\footnote{Again, in some of these results~\cite{Alaei11,BhattacharyaGGM10} bidders may also have budget constraints, which can be easily incorporated to the MDMDP framework without any loss, as is shown in Appendix~\ref{app:budgets}, and some replace BIC with IC~\cite{Alaei11,ChawlaHK07,ChawlaHMS10,KleinbergW12}.} In even more restricted cases near-optimal solutions have already been provided. Tools are developed in~\cite{CaiD11,CaiH12,DW12} that yield solutions for simple cases with one or few bidders. Cases with many asymmetric independent bidders are considered in~\cite{CaiDW12} and~\cite{AlaeiFHHM12}. As discussed above, in~\cite{CaiDW12}, the case where $\mathcal{F}$ ensures that each item is awarded at most once is solved. In~\cite{AlaeiFHHM12}, the case where $\mathcal{F}$ ensures that at most $k$ items are awarded, subject to a matroid constraint on the served bidders is solved. Our computational results push far beyond existing results, providing a computationally efficient solution in multi-dimensional settings with arbitrary feasibility constraints.

\subsubsection{Additional Discussion on Multi-Dimensionality}
In light of similarities between our results and those of~\cite{AlaeiFHHM12,CaiDW12}, we provide a short discussion to properly compare the settings where each result applies. We begin with what makes a setting single- or multi-dimensional. In a setting with multiple items, two aspects come into play. First, there are bidder preferences. Bidder preferences are single-dimensional if bidders cannot tell the difference between different items. Specifically, each bidder can be completely described by a single value, $v$, and their value for a feasible allocation that awards them $k$ items is $k\cdot v$. Next, there are feasibility constraints. Feasibility constraints are single-dimensional if they only enforce which bidders can simultaneously receive how many items, but \emph{not} which items they receive. Formally, this means $\mathcal{F}$ is such that if an allocation $\{(i_1,j_1),\ldots,(i_k,j_k)\}$ is feasible, so is $\{(i_1,j'_1),\ldots,(i_k,j'_k)\}$ for all (not necessarily distinct) items $j'_1,\ldots,j'_k$. 

The results of~\cite{AlaeiFHHM12,CaiDW12} solve the MDMDP for very special types of feasibility constraints. Specifically,~\cite{CaiDW12} only covers the following setting: the seller has a single copy of each of $n$ heterogenous items. Bidders have multi-dimensional preferences and are additive with no demand constraints. In~\cite{AlaeiFHHM12}, the feasibililty constraints are single-dimensional and a matroid, and bidders have multi-dimensional preferences and are unit-demand. In this paper, the feasibility constraints may be arbitrary and multi-dimensional. Below are some specific examples to emphasize the differences. We again remark that the examples below are on the ``easy'' side of settings covered in this paper, as our results allow $\mathcal{F}$ to be arbitrary.
\begin{enumerate}
\item There is one copy of a single painting for sale. Then both the feasibility constraints and bidder preferences are single-dimensional. This is solved by Myerson~\cite{Myerson81}, Alaei et. al.~\cite{AlaeiFHHM12}, Cai et al.~\cite{CaiDW12}, and this paper.
\item There is one copy of each of several (different) paintings for sale, bidders are additive with no demand constraints. Then both the feasibility constraints and bidder preferences are multi-dimensional. This is solved by Cai et al.~\cite{CaiDW12} and this paper.
\item There are several copies of the same painting for sale, a matroid constraint on which bidders may simultaneously receive the painting, and bidders are unit-demand. Then both the feasibility constraints and bidder preferences are single-dimensional. This is solved by Myerson~\cite{Myerson81}, Alaei et al.~\cite{AlaeiFHHM12}, and this paper.
\item There are several copies of the same car for sale, and a matroid constraint on which bidders may simultaneously receive a car. The seller can freely customize each car in several different ways, and bidders are unit-demand. Then the feasibility constraints are single-dimensional, and bidder preferences are multi-dimensional.\footnote{To see why the feasibility constraints of this setting are single-dimensional, introduce an item for each possible customization of the car. Then, if an allocation of customized cars to bidders is feasible, any re-customization of these cars is still feasible.} This is solved by Alaei et al.~\cite{AlaeiFHHM12} and this paper.
\item There are several houses for sale to unit-demand bidders (Example~1, Section~\ref{sec:intro}). Then both the feasibility constraints and bidder preferences are multi-dimensional. This is previously unsolved, and solved in this paper.
\item There are appointment slots with several doctors available (Example~2, Section~\ref{sec:intro}). Then both the feasibility constraints and bidder preferences are multi-dimensional. This is previously unsolved, and solved in this paper.
\end{enumerate}

In view of the above discussion, the present paper is the first to obtain solutions in a multi-dimensional setting where both preferences and feasibility constraints are truly multi-dimensional. The settings considered by Myerson~\cite{Myerson81} are single-dimensional, and the settings considered by Alaei et al.~\cite{AlaeiFHHM12} have single-dimensional feasibility constraints. While the settings considered by Cai et al.~\cite{CaiDW12} do technically have multi-dimensional feasibility constraints, their results are enabled by a simple reduction to a setting with single-dimensional feasibility constraints (namely, allocating each item independently of the others). Our results are the first of their kind that apply in a truly multi-dimensional setting, without single-dimensional feasibility constraints or reductions to such settings.

\subsection{Our Approach and Intermediate results}
Since receiving attention from computer scientists, several special cases of the MDMDP have been solved computationally efficiently by linear programming~\cite{CaiH12,DW12}. Simply put, these algorithms explictly store a variable for every possible bidder profile denoting the probability that bidder $i$ receives item $j$ on that profile, and write a linear program to maximize expected revenue subject to feasibility and BIC constraints. Unfortunately, the number of variables required for such a program is exponential in the number of bidders, making such an explicit description prohibitive. More recent solutions have used the \emph{reduced form} of an auction~\cite{AlaeiFHHM12,CaiDW12} to sidestep this curse. The reduced form of an auction was first studied in~\cite{Border91,MR84,Matthews84} and contains, for every bidder $i$, for every type $A$ of bidder $i$, and every item $j$, the probability that bidder $i$ receives item $j$ when truthfully reporting type $A$ over the randomness of the auction \emph{and the randomness in the other bidders' types, assuming they report truthfully}. Indeed, the reduced form auction contains all the necessary information to verify that an auction is BIC when bidders are independent, although verifying its feasibility {(i.e. verifying whether a feasible mechanism exists matching these probabilities)} appears to be difficult. Despite this difficulty, computationally efficient separation oracles were discovered for independent bidders and a single item~\cite{AlaeiFHHM12,CaiDW12}. {The techniques of~\cite{AlaeiFHHM12} also accomodate many copies of the same item and a matroid constraint on which bidders may simultaneously be served}. In this paper, we step far beyond both existing results and consider reduced forms in settings with arbitrary feasibility constraints. Surprisingly, we are able to provide a simple proof of a strong characterization result: for arbitrary feasibility constraints, every feasible reduced form can be implemented by a distribution over virtual VCG allocation rules. Our proof is in Section~\ref{sec:independent} and follows the spirit of~\cite{Border91,CaiDW12}: we examine the region of feasible reduced forms {(we show it is always a polytope)} and identify special structure in the extreme points of this region. 

In Section~\ref{sec:algorithms} we provide a separation oracle {for feasible reduced forms}, as well as a decomposition algorithm to explicitly write any feasible reduced form as a distribution over virtual VCG allocation rules in all settings, given only black box access to an implementation of VCG with respect to $\mathcal{F}$. In order to make these algorithms exact in all settings, we must use time polynomial in $|\mathcal{D}|$, making them {practically} unusable. In Section~\ref{sec:approximations}, we show how to {$\epsilon$-implement} both the separation oracle and decomposition algorithm in time polynomial in $\sum_{i=1}^{m} |\mathcal{D}_i|$ {and $1/\epsilon$} with high probability. {By {$\epsilon$-implementing} a separation oracle and decomposition algorithm for polytope $P$, we mean computing a polytope $P'$ such that every point in $P$ is within $\epsilon/\poly(n\sum_{i=1}^{m} |\mathcal{D}_i|)$ (in $\ell_\infty$ distance) of a point in $P'$ and vice versa, and exactly implementing a separation oracle and decomposition algorithm for $P'$. We show that this is sufficient for computationally efficiently deciding whether a reduced form that is $\epsilon$-far (in $\ell_\infty$) from the boundary of $P$ lies inside $P$, as well as for computing a distribution over virtual VCG allocation rules that is within $\epsilon/\poly(n\sum_{i=1}^{m} |\mathcal{D}_i|)$ (in $\ell_\infty$ distance) of any given feasible reduced form that is $\epsilon$-far from the boundary of $P$.}

In Section~\ref{sec:revenue}, we show how to combine the linear programs from~\cite{CaiDW12,DW12} with our algorithms for reduced forms to obtain an {FPRAS} for MDMDP using only black box access to an implementation of VCG for $\mathcal{F}$. In generic cases, the runtime is polynomial in the number of items and $\sum_{i=1}^{m} |\mathcal{D}_i|$ (but \emph{not} $|\mathcal{D}|$). In many settings {(e.g. when there is correlation among item values, or when the value distributions have sparse supports)} this is {the natural description complexity of the problem}, and several recent algorithms~\cite{Alaei11,AlaeiFHHM12,BhattacharyaGGM10,CaiDW12,DobzinskiFK11} have the same computational complexity, {namely polynomial in the number of bidders, {the number of items and the cardinality of the support of each bidder's value distribution}. Additionally, by using results of~\cite{DW12} we can reduce the runtime to polynomial in only the number of items and number of bidders in item-symmetric settings, as well as extend our solution to distributions with infinite support. {Our mechanisms can be made interim or ex-post individually rational without any difference in revenue. We are also able to naturally accommodate hard budget constraints in our solutions.} The simple modification that is necessary is shown in Appendix~\ref{app:budgets}.

Finally, in Section~\ref{sec:correlated}, we prove a characterization result for correlated bidders. To do this, we introduce the notion of a second-order reduced form, and show that every second-order reduced form can be implemented as a distribution over second-order VCG allocation rules. With this modification, all the related techniques of Section~\ref{sec:algorithms} also apply to correlated bidders. 

\section{Preliminaries and notation}\label{sec:notation}

We denote the number of bidders by $m$, the number of items by $n$, and the type space of bidder $i$ by $T_i$. To ease notation, we sometimes use $A$ ($B$, $C$, etc.) to denote the \emph{type} of a bidder, without emphasizing whether it is a vector or a scalar. The elements of $\times_i T_i$ are called {\em type profiles}, and specify a type for every bidder. We assume type profiles are sampled from a distribution ${\cal D}$ over $\times_i T_i$. We denote by ${\cal D}_i$ the marginal of this distribution on bidder $i$'s type. For independent bidders, we use ${\cal D}_{-i}$ to denote the marginal of ${\cal D}$ over the types of all bidders, except bidder $i$. For correlated bidders, we use $\mathcal{D}_{-i}(\vec{v}_i)$ to denote the conditional distribution over the types of all bidders except for $i$, conditioned on bidder $i$'s type being $\vec{v}_i$. We use $t_i$ for the random variable representing the type of bidder $i$. So when we write $\Pr[t_i = A]$, we mean the probability that bidder $i$'s type is $A$. In Appendix~\ref{sec:input distribution}, we discuss how our algorithms access distribution ${\cal D}$.

We let ${\cal A} = [m]\times[n]$ denote the set of possible \emph{assignments} (i.e. the element $(i,j)$ denotes that bidder $i$ was awarded item $j$). {We call {(distributions over)} subsets of ${{\cal A}}$ {(randomized)} \emph{allocations}, and functions mapping {type} profiles to {(possibly randomized)} allocations \emph{allocation rules}. We call an allocation combined with a price charged to each bidder an \emph{outcome}, and an allocation rule combined with a pricing rule a {(direct revelation)} \emph{mechanism}.} {As discussed in Section~\ref{sec:intro}, we may also have a set system $\mathcal{F}$ on ${\cal A}$ (that is, a subset of $2^{\cal A}$), encoding constraints on what assignments can be made simultaneously by the mechanism. ${\cal F}$ may be incorporating arbitrary demand constraints imposed by each bidder, and  supply constraints imposed by the seller, and will be referred to as our {\em feasibility constraints}. In this case, we restrict all allocation rules to be supported on ${\cal F}$.}

The {\em reduced form} of an allocation rule (also called the \emph{interim allocation rule}) is a vector function $\pi(\cdot)$, specifying values $\pi_{ij}(A)$, for all items $j$, bidders $i$ and types $A \in T_i$. $\pi_{ij}(A)$ is the probability that bidder $i$ receives item $j$ when truthfully reporting type $A$, where the probability is over the randomness of all other bidders' types {(drawn from ${\cal D}_{-i}$ in the case of independent bidders, and ${\cal D}_{-i}(A)$ in the case of correlated bidders)} and the internal randomness of the allocation rule, assuming that the other bidders report truthfully their types. Sometimes, we will want to think of the reduced form as a $n\sum_{i=1}^{m} |T_i|$-dimensional vector, and may write $\vec{\pi}$ to emphasize this view.

Given a reduced form $\pi$, we will be interested in whether the form {is ``feasible'', or can be ``implemented.''}  By this we mean designing a feasible allocation rule $M$ (i.e. one that respects feasibility constraints ${\cal F}$ on every type profile {with probability $1$ over the randomness of the allocation rule}) such that the probability $M_{ij}(A)$ that bidder $i$ receives item $j$ when truthfully reporting type $A$ is exactly $\pi_{ij}(A)$, where the probability is computed with respect to the randomness in the allocation rule and the randomness in the types of the other bidders, assuming that the other bidders report truthfully.  While viewing reduced forms as vectors, we will denote by $F(\mathcal{F},\mathcal{D})$ the set of feasible reduced forms when the feasibility constraints are $\mathcal{F}$ and consumers are sampled from $\mathcal{D}$.

{A bidder is {\em additive} if her value for a bundle of items is the sum of her values for the items in that bundle. If bidders are additive, to specify the preferences of bidder $i$, we can provide a valuation vector $\vec{v}_i$,} with the convention that $v_{ij}$ represents her value for item $j$.  {Even in the presence of arbitrary demand constraints, the \emph{value} of additive bidder $i$ of type $\vec{v}_i$ for a randomized allocation that respects the bidder's demand constraints with probability $1$, and whose expected probability of allocating item $j$ to the bidder is $\pi_{ij}$, is just the bidder's expected value, namely $\sum_j v_{ij} \cdot \pi_{ij}$. The \emph{utility} of bidder $i$ for the same allocation when paying price $p_i$ is just $\sum_j v_{ij} \cdot \pi_{ij} - p_i$.} Such bidders whose value for a distribution of allocations is their expected value for the sampled allocation are called \emph{risk-neutral}. Bidders subtracting price from expected value are called \emph{quasi-linear}. 

Throughout this paper, we denote by $\opt$ the expected revenue of an optimal solution to MDMDP. Also, most of our results for this problem construct a {\em fully polynomial-time randomized approximation scheme}, or FPRAS. This is an algorithm that takes as input two additional parameters $\epsilon, \eta >0$ and outputs a mechanism {(or succinct description thereof)} whose revenue is at least $\opt - \epsilon$, with probability at least $1-\eta$ {(over the coin tosses of the algorithm)}, in time polynomial in $n\sum_i |T_i|,1/\epsilon$, and $\log (1/\eta)$. 

Finally, some arguments will involve reasoning about the \emph{bit complexity} of a rational number. We say that a rational number has bit complexity $b$ if it can be written with a binary numerator and denominator that each have at most $b$ bits. Also, for completeness, we define in Appendix~\ref{app:prelims} the standard notion of Bayesian Incentive Compatibility (BIC) and Individual Rationality (IR) of mechanisms for independent bidders, and state a well-known property of the Ellipsoid Algorithm for linear programs. 
\section{Characterization of Feasible Reduced Forms}\label{sec:independent}
In this section, we provide our characterization result, showing that every feasible reduced form can be implemented as a distribution over virtual VCG allocation rules. {For space considerations, all proofs of this section are in Appendix~\ref{app:characterization}}. {In the following definition, $VCG_{\mathcal{F}}$ denotes the allocation rule of VCG with feasibility constraints $\mathcal{F}$. That is, on input $\vec{v} =(\vec{v}_1,\ldots,\vec{v}_m)$, $VCG_{\mathcal{F}}$ outputs the allocation that VCG selects when the reported types are $\vec{v}$.

\begin{definition} {A \textbf{virtual VCG} allocation rule is defined by a collection of weight functions, $f_{i}:T_i \rightarrow \mathbb{R}^n$. $f_{i}$ maps a type of bidder $i$ to a \emph{virtual type} of bidder $i$. On any type profile $\vec{v}$, the virtual VCG allocation rule with functions $\{f_i\}_{i\in[m]}$ runs $VCG_{\mathcal{F}}$ on input $(f_1(\vec{v}_1),\ldots,f_m(\vec{v}_m))$.}\footnote{If there are multiple VCG allocations, break ties arbitrarily, {but consistently}. A consistent lexicographic tie-breaking rule is discussed in Section~\ref{sec:tiebreaking}. {For concreteness, the reader can use this rule for all results of this section.}} {$VVCG_{\mathcal{F}}(\{f_i\}_{i\in[m]})$ denotes the virtual VCG allocation rule with feasibility constraints $\mathcal{F}$ and weight functions $\{f_i\}_{i\in[m]}$.} \end{definition}

In other words, a virtual VCG allocation rule is simply a VCG allocation rule, but maximizing virtual welfare instead of true welfare. {It will be convenient to introduce the following notation, viewing the weight functions as a (scaled) $n\sum_{i=1}^{m} |T_i|$-dimensional vector. Below, $f_{ij}$ denotes the $j^{th}$ component of $f_i$.

\begin{definition}\label{def:vvcg} Let $\vec{w} \in \mathbb{R}^{n\sum_{i=1}^{m} |T_i|}$. Define $f_i$ so that $f_{ij}(A) = \frac{w_{ij}(A)}{\Pr[t_i = A]}$. Then $VVCG_{\mathcal{F}}(\vec{w})$ is the virtual VCG allocation rule $VVCG_{\mathcal{F}}(\{f_i\}_{i\in[m]})$.
\end{definition}

It is easy to see that every virtual VCG allocation rule can be defined using the notation of Definition~\ref{def:vvcg} by simply setting $w_{ij}(A) = f_{ij}(A)\cdot \Pr[t_i = A]$. We scale the weights this way only for notational convenience (which first becomes useful in Lemma~\ref{lem:vvcg}). We say that a virtual VCG allocation rule is simple iff, {for all $\vec{v}_1,\ldots,\vec{v}_m$, $VCG_{\mathcal{F}}(f_1(\vec{v}_1),\ldots,f_m(\vec{v}_m))$ has a unique max-weight allocation.}} We now state the main theorem of this section, which completely characterizes all feasible reduced forms.

\begin{theorem}\label{thm:characterization} Let $\mathcal{F}$ be any set system of feasibility constraints, and $\mathcal{D}$ be any (possibly correlated) distribution over bidder types with finite support. Then every feasible reduced form (with respect to $\mathcal{F}$ and $\mathcal{D}$) can be implemented as a distribution over at most $n\sum_{i=1}^{m}|T_{i}|+1$ simple virtual VCG allocation rules. 
\end{theorem}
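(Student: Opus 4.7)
The plan is to realize $F(\mathcal{F},\mathcal{D})$ as a convex polytope in $\mathbb{R}^{n\sum_{i}|T_i|}$, identify its vertices with (reduced forms of) simple virtual VCG rules, and then apply Carath\'eodory's theorem. First, I would observe that the set of allocation rules -- maps sending each type profile $\vec t$ to a distribution over $\mathcal{F}$ -- is a polytope (a product of simplices indexed by profiles), and the map sending an allocation rule to its reduced form is linear. Hence $F(\mathcal{F},\mathcal{D})$ is the image of a polytope under a linear map, so itself a polytope with finitely many vertices.

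Next, I would characterize the maximizers of any linear objective $\vec w\in\mathbb{R}^{n\sum_i|T_i|}$ over $F$. For any allocation rule $M$ with reduced form $\vec\pi$, using $\Pr[t_i=A]\cdot\Pr_{\mathcal{D}_{-i}(A)}[\vec t_{-i}]=\Pr[(A,\vec t_{-i})]$ (valid for both independent and correlated bidders), one rewrites
\[
\vec w\cdot\vec\pi \;=\; \sum_{i,j,A} w_{ij}(A)\,\pi_{ij}(A) \;=\; \sum_{\vec t}\Pr[\vec t]\sum_{i,j} f_{ij}(t_i)\,M_{ij}(\vec t),
\]
with $f_{ij}(A)=w_{ij}(A)/\Pr[t_i=A]$ as in Definition~\ref{def:vvcg}. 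So this objective is maximized by picking, independently on each profile $\vec t$, any max-virtual-welfare allocation in $\mathcal{F}$ -- which is exactly what $VVCG_\mathcal{F}(\vec w)$ does. Thus every linear objective is maximized over $F$ by the reduced form of some VVCG rule, and every vertex of $F$ arises as such a reduced form (take $\vec w$ whose unique $F$-maximizer is that vertex).

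The step I expect to be the main obstacle is upgrading these VVCG realizations to \emph{simple} ones (unique max-weight allocation on every profile). Fix a vertex $\vec\pi^*$ and some $\vec w$ uniquely maximizing it. Since $F$ has finitely many vertices, strict optimality of $\vec\pi^*$ persists in a small open neighborhood $U$ of $\vec w$. On the other hand, for each profile $\vec t$ and each pair of distinct allocations in $\mathcal{F}$, the weights inducing equal virtual welfare form a hyperplane in weight space, so across all profiles one only needs to avoid finitely many hyperplanes. Any $\vec w'\in U$ avoiding all of them (such $\vec w'$ exist because hyperplanes are nowhere dense) satisfies: $VVCG_\mathcal{F}(\vec w')$ is simple, and its unique $F$-maximizer is still $\vec\pi^*$. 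Hence each vertex is implementable as a simple VVCG rule.

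Finally, since $F\subseteq\mathbb{R}^{n\sum_i|T_i|}$, Carath\'eodory's theorem expresses every point of $F$ as a convex combination of at most $n\sum_i|T_i|+1$ vertices of $F$, each implementable as a simple VVCG rule by the previous paragraph. The induced distribution over these simple VVCG rules implements the given feasible reduced form, completing the proof.
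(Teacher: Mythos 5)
Your proof is correct and uses essentially the same approach as the paper's: realize $F(\mathcal{F},\mathcal{D})$ as a polytope, observe via the $\vec w\cdot\vec\pi$ expansion that any linear objective over it is maximized pointwise by a virtual VCG rule, show every vertex is the reduced form of a \emph{simple} VVCG rule, and finish with Carath\'eodory. The only genuine variation is in the simplicity step: you perturb a uniquely-maximizing direction $\vec w$ inside a small open neighborhood so as to avoid the finitely many tie hyperplanes, whereas the paper argues that the original $\vec w$ already yields a simple rule --- if some positive-probability profile had two max-weight allocations, switching to the other one there would produce a distinct reduced form $\vec\pi_B$ with $\vec\pi_B\cdot\vec w=\vec\pi^*\cdot\vec w$, contradicting the uniqueness of the maximizer. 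Both arguments are sound; the paper's is marginally more economical (no perturbation or hyperplane count needed), while yours is a soft genericity argument of the kind that generalizes with little modification.
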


{Before outlining the proof, we provide a brief example illustrating the content of Theorem~\ref{thm:characterization}. We are \emph{not} claiming that every feasible allocation rule can be implemented as a distribution over virtual VCG allocation rules. This is not true. What we are claiming is that every feasible allocation rule has the same reduced form as some distribution over virtual VCG allocation rules. Consider a scenario with a single item and two bidders each with two types, $A$ and $B$ that are sampled independently and uniformly at random. If $M$ is the allocation rule that awards bidder $1$ the item when the types match, and bidder $2$ the item when they don't, then $M$ cannot be implemented as a distribution over simple virtual VCG allocation rules. Because bidder $1$ gets the item when both types match, we must always have $w_{11}(A) > w_{21}(A)$ and $w_{11}(B) > w_{21}(B)$. Similarly, because bidder $2$ gets the item when the types don't match we must have $w_{21}(A) > w_{11}(B)$ and $w_{21}(B) > w_{11}(A)$. Clearly, no weights can simultaneously satisfy all four inequalities. However, there is a distribution over simple virtual VCG allocation rules with the same reduced form.\footnote{{Specifically, the reduced form of $M$ is $\frac{1}{2}\cdot \vec{1}$. If we define $w^{(1)}_{11}(A) = w^{(1)}_{11}(B) = 1$, $w^{(1)}_{21}(A) = w^{(1)}_{21}(B) = 0$, and $w^{(2)}_{11}(A) = w^{(2)}_{11}(B) = 0$, $w^{(2)}_{21}(A) = w^{(2)}_{21}(B) = 1$, then the allocation rule that chooses uniformly at random between $VVCG_{\mathcal{F}}(\vec{w}^{(1)})$ and $VVCG_{\mathcal{F}}(\vec{w}^{(2)})$ also has reduced form $\frac{1}{2}\cdot \vec{1}$.}} The proof of Theorem~\ref{thm:characterization} begins with a simple observation and proposition, whose proofs are in Appendix~\ref{app:characterization}.}
\begin{observation}\label{obs:deterministic}
An allocation rule is feasible if and only if it is a distribution over feasible deterministic allocation rules.
\end{observation}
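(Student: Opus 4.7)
My plan is to prove the two directions of the equivalence separately, neither of which should be substantial. For the easy direction $(\Leftarrow)$, I would simply unpack definitions: if an allocation rule $M$ is realized by sampling a deterministic rule $D$ from some distribution $\mu$ supported on feasible deterministic rules and then returning $D(t)$ on type profile $t$, then on every $t$ the output lies in $\mathcal{F}$ with probability $1$, since $D(t)\in\mathcal{F}$ for every $D$ in the support.

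For the converse $(\Rightarrow)$, the idea is to decouple the internal randomness of $M$ across type profiles. Fix a feasible randomized allocation rule $M$ and note that because every $T_i$ is finite (or, more weakly, because $\mathcal{D}$ has finite support), only finitely many type profiles $t\in\times_i T_i$ are relevant. For each such $t$, $M$ already induces a distribution $M(t)$ supported on $\mathcal{F}$ by the assumed feasibility of $M$. I would then define a distribution $\mu$ over deterministic rules by independently sampling $a_t\sim M(t)$ for each $t$ and taking the rule $D:t\mapsto a_t$. Every rule in the support of $\mu$ is feasible on every profile (with probability $1$), and by construction the marginal of $\mu$ at any fixed $t$ equals $M(t)$.

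The one conceptual point I would want to make explicit, and the only place anything nontrivial happens, is that two allocation rules are considered equivalent precisely when they induce the same distribution over allocations at each type profile: the mechanism is evaluated at a single realized profile, so joint correlations of the internal randomness \emph{across} different profiles carry no operational content (and in particular are not reflected in the reduced form or the revenue). Once this is in place, the product-sampling construction produces a distribution over feasible deterministic rules equivalent to $M$, completing the proof. I do not expect a serious obstacle; the main care is invoking the finiteness of the type supports so that the product distribution is well-defined as a finite mixture of deterministic rules (in fact, a mixture of at most $\prod_t |\mathrm{supp}(M(t))|$ of them).
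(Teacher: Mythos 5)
Your proof is correct, and the easy direction matches the paper verbatim. For the harder direction, you use a \emph{product} construction: independently sample $a_t \sim M(t)$ for each profile $t$ and take the deterministic rule $t \mapsto a_t$. The paper instead uses a \emph{coupled} construction: draw a single uniform $x\in[0,1]$ and, on every profile $\vec{v}$, select the deterministic allocation whose CDF interval under $M(\vec{v})$ contains $x$. Both are valid because, as you note, an allocation rule is precisely a map from type profiles to distributions over allocations, so only the per-profile marginal of the mixing distribution matters; correlation of the internal randomness across profiles is operationally invisible. (This is not a subtle equivalence you need to argue for separately — it is immediate from the paper's definition of an allocation rule — so you can state it more briefly.) The paper's coupled construction has the cosmetic advantage of producing a mixture supported on at most roughly $1+\sum_t(|\mathrm{supp}(M(t))|-1)$ deterministic rules rather than $\prod_t|\mathrm{supp}(M(t))|$, but for the purposes of this observation (and of Proposition~\ref{prop:convex polytope}, which only needs $F(\mathcal{F},\mathcal{D})$ to be the convex hull of finitely many deterministic reduced forms) the size of the support is irrelevant, so your argument is equally adequate.
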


\begin{proposition}\label{prop:convex polytope}
If $|\mathcal{D}|$ is finite, $F(\mathcal{F},\mathcal{D})$ is a convex polytope.
\end{proposition}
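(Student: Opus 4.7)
The plan is to realize $F(\mathcal{F},\mathcal{D})$ explicitly as the convex hull of a finite set of points, from which the conclusion that it is a convex polytope is immediate.

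First, I would enumerate the feasible deterministic allocation rules. Since $|\mathcal{D}|$ is finite, the set of type profiles $\times_i T_i$ is finite, and since $\mathcal{F}\subseteq 2^{\mathcal{A}}$ with $\mathcal{A}=[m]\times[n]$, the set $\mathcal{F}$ itself is finite. A deterministic feasible allocation rule is just a function from $\times_i T_i$ to $\mathcal{F}$, so the collection of such rules is finite; call it $\{M^{(1)},\dots,M^{(K)}\}$, and let $\vec{\pi}^{(k)}\in\mathbb{R}^{n\sum_i|T_i|}$ denote the reduced form of $M^{(k)}$ (computed with respect to the appropriate marginal or conditional distributions of $\mathcal{D}_{-i}$ depending on whether bidders are independent or correlated).

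Next I would show that $F(\mathcal{F},\mathcal{D}) = \mathrm{conv}\{\vec{\pi}^{(1)},\dots,\vec{\pi}^{(K)}\}$. For the $\supseteq$ inclusion, any convex combination $\sum_k \lambda_k \vec{\pi}^{(k)}$ is implemented by the randomized allocation rule that, before seeing any reports, picks $k$ with probability $\lambda_k$ and then runs $M^{(k)}$; on every profile this is supported on $\mathcal{F}$, and by linearity of expectation its reduced form is exactly $\sum_k \lambda_k \vec{\pi}^{(k)}$. For the $\subseteq$ inclusion, take any $\vec{\pi}\in F(\mathcal{F},\mathcal{D})$ implemented by some feasible (possibly randomized) allocation rule $M$. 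By Observation~\ref{obs:deterministic}, $M$ is a distribution $\{\mu_k\}$ over the $M^{(k)}$'s, and again by linearity of expectation the reduced form of $M$ equals $\sum_k \mu_k \vec{\pi}^{(k)}$, which is a convex combination of the $\vec{\pi}^{(k)}$.

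Finally, since the convex hull of a finite set of points in a finite-dimensional Euclidean space is by definition a convex polytope, this establishes the proposition. The only subtlety I would want to handle carefully is the application of linearity of expectation in the $\subseteq$ direction under correlated bidders: when we decompose $M$ as a mixture of deterministic rules, the randomness over $k$ is independent of the types, so for each bidder $i$ and each reported type $A$, averaging over $t_{-i}\sim \mathcal{D}_{-i}(A)$ and over the choice of $k$ gives $\Pr[M_{ij}(A)=1] = \sum_k \mu_k \cdot \pi^{(k)}_{ij}(A)$ as required. This is the only potentially tricky step; everything else is a routine finiteness and convex-combination argument.
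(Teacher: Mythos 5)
Your proposal is correct and follows essentially the same route as the paper: both note that there are only finitely many feasible deterministic allocation rules, invoke Observation~\ref{obs:deterministic} to decompose any feasible rule as a mixture of them, and conclude that $F(\mathcal{F},\mathcal{D})$ is exactly the convex hull of the finitely many reduced forms of deterministic rules, hence a polytope. Your extra remark that the mixing randomness is independent of the reported types (so linearity of expectation applies even for correlated bidders) is a correct and slightly more explicit rendering of what the paper's proof uses implicitly.
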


{Now that we know that $F(\mathcal{F},\mathcal{D})$ is a convex polytope, we want to look at the extreme points by examining, for any $\vec{w}$, the allocation rule of $F(\mathcal{F},\mathcal{D})$ whose reduced form maximizes $\vec{\pi} \cdot \vec{w}$. Lemma~\ref{lem:vvcg} and Proposition~\ref{prop:VCG} characterize the extreme points of $F(\mathcal{F},\mathcal{D})$, which allows us to prove Theorem~\ref{thm:characterization}. All three proofs are simple, and provided in Appendix~\ref{app:characterization}.}
\begin{lemma}\label{lem:vvcg} Let $\vec{\pi}$ be the reduced form of $VVCG_{\mathcal{F}}(\vec{w})$ {(with an arbitrary tie-breaking rule)} when bidders are sampled from $\mathcal{D}$. Then, for all $\vec{\pi}' \in F(\mathcal{F},\mathcal{D})$, $\vec{\pi} \cdot \vec{w} \geq \vec{\pi}' \cdot \vec{w}.$
\end{lemma}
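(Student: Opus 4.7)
The plan is to show that the inner product $\vec{\pi}\cdot\vec{w}$ is, for any feasible allocation rule $M$ with reduced form $\vec{\pi}$, exactly the expected virtual welfare that $M$ achieves when types are drawn from $\mathcal{D}$. Since $VVCG_{\mathcal{F}}(\vec{w})$ maximizes virtual welfare \emph{pointwise} on every type profile, it maximizes expected virtual welfare among all feasible rules, so its reduced form maximizes $\vec{\pi}\cdot\vec{w}$ over $F(\mathcal{F},\mathcal{D})$.

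Concretely, I would first unpack the definition using $w_{ij}(A) = f_{ij}(A)\cdot\Pr[t_i=A]$, which gives
\[
\vec{\pi}\cdot\vec{w} \;=\; \sum_{i,j,A} \pi_{ij}(A)\,f_{ij}(A)\,\Pr[t_i=A].
\]
The next step is to observe that, whether bidders are independent or correlated, the definition of the reduced form yields $\pi_{ij}(A)\Pr[t_i=A] = \Pr[\text{bidder }i\text{ receives item }j\text{ and }t_i=A]$ (in the correlated case this uses that $\pi_{ij}(A)$ is the probability conditional on $t_i=A$). Summing $A$ out of this marginalization and then summing over $i,j$ identifies
\[
\vec{\pi}\cdot\vec{w} \;=\; \mathbb{E}_{\vec{t}\sim\mathcal{D}}\!\left[\,\sum_{i,j} f_{ij}(t_i)\cdot \mathbb{1}[\text{bidder }i\text{ gets }j\text{ under }M(\vec{t})]\,\right],
\]
which is precisely the expected virtual welfare of the underlying allocation rule $M$.

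Now I would invoke Observation~\ref{obs:deterministic}: any feasible reduced form $\vec{\pi}'\in F(\mathcal{F},\mathcal{D})$ arises from some feasible allocation rule $M'$ that is a distribution over feasible deterministic rules. For any type profile $\vec{t}$, the allocation $VCG_{\mathcal{F}}(f_1(\vec{t}_1),\ldots,f_m(\vec{t}_m))$ chosen by $VVCG_{\mathcal{F}}(\vec{w})$ is, by definition of VCG with weights $f_i(\vec{t}_i)$, a feasible set maximizing $\sum_{i,j} f_{ij}(t_i)\mathbb{1}[(i,j)\in S]$ over $S\in\mathcal{F}$. Therefore the integrand in the expectation above is pointwise at least as large for $VVCG_{\mathcal{F}}(\vec{w})$ as for $M'$ on every profile $\vec{t}$, and taking expectations yields $\vec{\pi}\cdot\vec{w}\geq\vec{\pi}'\cdot\vec{w}$.

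The only subtle point is verifying the marginalization identity in the correlated case, where one must be careful that $\pi_{ij}(A)$ is defined with respect to the conditional distribution $\mathcal{D}_{-i}(A)$; but this is exactly the conditional probability that makes the product $\pi_{ij}(A)\Pr[t_i=A]$ equal the joint probability. Aside from this, the argument is essentially just moving from a pointwise optimality statement (VCG maximizes weighted welfare per profile) to its expected-value consequence, so no further technical obstacles should arise, and the tie-breaking rule used in defining $VVCG_{\mathcal{F}}(\vec{w})$ is irrelevant since every tie-broken allocation attains the same maximum virtual welfare.
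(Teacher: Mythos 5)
Your proof is correct and follows essentially the same approach as the paper: expand $\vec{\pi}\cdot\vec{w}$ using $w_{ij}(A)=f_{ij}(A)\Pr[t_i=A]$, identify the result as the expected virtual weight of the allocation, and observe that $VVCG_{\mathcal{F}}(\vec{w})$ maximizes this quantity pointwise on every profile. The only difference is that you spell out the marginalization identity $\pi_{ij}(A)\Pr[t_i=A]=\Pr[\text{$i$ gets $j$ and }t_i=A]$ and its validity in the correlated case, which the paper leaves implicit.
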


\begin{proposition}\label{prop:VCG} Every corner of $F(\mathcal{F},\mathcal{D})$ can be implemented by a simple virtual VCG allocation rule, and the reduced form of any simple virtual VCG allocation rule is a corner of $F(\mathcal{F},\mathcal{D})$.
\end{proposition}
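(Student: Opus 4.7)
The proof has two directions, and the starting point for both is Lemma~\ref{lem:vvcg}, which tells us that the reduced form $\vec{\pi}$ of $VVCG_{\mathcal{F}}(\vec{w})$ maximizes the linear functional $\vec{\pi}' \cdot \vec{w}$ over $F(\mathcal{F},\mathcal{D})$. My plan is to upgrade this to a \emph{unique} maximizer statement when the virtual VCG rule is simple, and to run the converse by a perturbation argument on the weights.

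For the direction ``simple virtual VCG $\Rightarrow$ corner,'' let $\vec{w}$ be weights making $VVCG_{\mathcal{F}}(\vec{w})$ simple, and let $\vec{\pi}^\star$ be its reduced form. I would first observe that for \emph{any} feasible mechanism $M$ with reduced form $\vec{\pi}$, the expected virtual welfare of $M$ (over random types and internal randomness) equals $\vec{\pi} \cdot \vec{w}$; this is the calculation
\[
\sum_{i,j,A} \Pr[t_i=A]\, f_{ij}(A)\, \pi_{ij}(A) \;=\; \sum_{i,j,A} w_{ij}(A)\, \pi_{ij}(A) \;=\; \vec{\pi}\cdot\vec{w},
\]
which is exactly the reason for the scaling in Definition~\ref{def:vvcg}. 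Now suppose $\vec{\pi}' \in F(\mathcal{F},\mathcal{D})$ with $\vec{\pi}' \cdot \vec{w} = \vec{\pi}^\star \cdot \vec{w}$ is implemented by some feasible $M'$. On every type profile $\vec{v}$, $VVCG_{\mathcal{F}}(\vec{w})$ outputs the \emph{unique} virtual-welfare-maximizing allocation, so any $M'$ that fails to place full probability on this allocation on a positive-probability profile has strictly smaller expected virtual welfare than $VVCG_{\mathcal{F}}(\vec{w})$. Hence $M'$ must coincide with $VVCG_{\mathcal{F}}(\vec{w})$ on every profile with probability $1$, which forces $\vec{\pi}' = \vec{\pi}^\star$. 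So $\vec{\pi}^\star$ is the unique maximizer of $\vec{\pi} \cdot \vec{w}$ over the polytope $F(\mathcal{F},\mathcal{D})$, and is therefore a corner.

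For the direction ``corner $\Rightarrow$ simple virtual VCG,'' let $\vec{\pi}^\star$ be a corner of $F(\mathcal{F},\mathcal{D})$. Since $F(\mathcal{F},\mathcal{D})$ is a polytope (Proposition~\ref{prop:convex polytope}) and $\vec{\pi}^\star$ is a vertex, its normal cone $C = \{\vec{w} : \vec{\pi}^\star \text{ is a maximizer of } \vec{\pi}\cdot\vec{w} \text{ over } F\}$ is a full-dimensional cone with nonempty interior $C^\circ$, and on $C^\circ$ the maximizer is unique. I would then argue that generic weights produce no VCG ties: for each of the finitely many type profiles $\vec{v}$ and each pair of distinct feasible allocations $S \neq S' \in \mathcal{F}$, the set of $\vec{w}$ for which $S$ and $S'$ achieve the same virtual welfare on $\vec{v}$ is a hyperplane in weight space (a single linear equation in the $w_{ij}(A)$'s, using that $f_{ij}(A) = w_{ij}(A)/\Pr[t_i=A]$). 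The union $H$ of these finitely many hyperplanes is nowhere dense, so $C^\circ \setminus H$ is nonempty. Picking any $\vec{w}' \in C^\circ \setminus H$, the rule $VVCG_{\mathcal{F}}(\vec{w}')$ is simple (no ties anywhere) and its reduced form maximizes $\vec{\pi} \cdot \vec{w}'$ over $F$ by Lemma~\ref{lem:vvcg}; since $\vec{w}' \in C^\circ$ that maximizer is unique and equals $\vec{\pi}^\star$.

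The main obstacle I expect is making the two ``uniqueness'' arguments airtight. For the first direction, one must be careful that a mechanism $M'$ with the same expected virtual welfare as a simple $VVCG_{\mathcal{F}}(\vec{w})$ actually agrees with it on every profile (not merely in expectation); this is where uniqueness of the per-profile maximizer is essential. For the second direction, the subtle point is that a vertex of a polytope in $\mathbb{R}^d$ really does have a full-dimensional normal cone (so that $C^\circ \neq \emptyset$), which I would justify by writing $F(\mathcal{F},\mathcal{D})$ using its finitely many facet inequalities and taking $\vec{w}$ to be a strictly positive combination of the outward normals of the facets incident to $\vec{\pi}^\star$. Everything else is a direct application of Lemma~\ref{lem:vvcg}.
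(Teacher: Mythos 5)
Your proof is correct. The direction ``simple virtual VCG $\Rightarrow$ corner'' matches the paper's argument essentially line for line: the scaling $w_{ij}(A)=f_{ij}(A)\Pr[t_i=A]$ makes $\vec{\pi}\cdot\vec{w}$ equal expected virtual welfare, simplicity forces any feasible $M'$ attaining the maximum to agree with $VVCG_{\mathcal{F}}(\vec{w})$ profile by profile, so $\vec{\pi}^\star$ is the \emph{unique} maximizer of $\vec{\pi}\cdot\vec{w}$ over $F(\mathcal{F},\mathcal{D})$ and hence a vertex.

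For ``corner $\Rightarrow$ simple virtual VCG'' you take a genuinely different route. The paper fixes any $\vec{w}$ for which the corner is the strict unique maximizer (such $\vec{w}$ exists since the corner is a vertex), notes via Lemma~\ref{lem:vvcg} that the corner must then be the reduced form of $VVCG_{\mathcal{F}}(\vec{w})$, and establishes simplicity \emph{by contradiction}: a tie on any profile would let you flip to the other max-weight allocation on just that profile, producing a second feasible reduced form with the same inner product against $\vec{w}$ but distinct from the corner, contradicting strict uniqueness. You instead \emph{construct} a simple rule directly by a genericity perturbation inside the interior of the normal cone at the corner, avoiding the finitely many codimension-one tie hyperplanes. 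Both routes are correct. The paper's is shorter and requires no normal-cone or genericity machinery; yours is slightly heavier but is constructive in a way that parallels the lexicographic perturbation of Lemma~\ref{lem:tiebreaking} which the paper relies on algorithmically, and it transparently eliminates ties on \emph{every} profile, not only those that affect the reduced form.
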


We conclude this section by providing necessary and sufficient conditions for feasibility of a reduced form. The proof is again simple and given in Appendix~\ref{app:characterization}.} For the following statement, for any weight vector $\vec{w} \in \mathbb{R}^{n\sum_{i=1}^{m} |T_i|}$, $W_{\mathcal{F}}(\vec{w})$ denotes the total expected weight of items awarded by $VVCG_{\mathcal{F}}(\vec{w})$ (where we assume that the weight of giving item $j$ to bidder $i$ of type $A$ is $f_{ij}(A) = w_{ij}(A)/\Pr[t_i = A]$).  {The proof of Lemma~\ref{lem:vvcg} implies that the tie-breaking rule used in $VVCG_{\cal F}(\vec{w})$ does not affect the value of  $W_{\mathcal{F}}(\vec{w})$, and that no feasible allocation rule can possibly exceed $W_{\mathcal{F}}(\vec{w})$}. The content of the next corollary is that this condition is also sufficient.

\begin{corollary}\label{cor:independent}
A reduced form $\vec{\pi}$ is feasible (with respect to $\mathcal{F}$ and $\mathcal{D}$) if and only if, for all $\vec{w} \in [-1,1]^{n\sum_{i=1}^{m} |T_i|}$, $\vec{\pi} \cdot \vec{w} \leq W_{\mathcal{F}}(\vec{w}).$
\end{corollary}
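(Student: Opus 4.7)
The plan is to prove the two directions separately, relying on Lemma~\ref{lem:vvcg} for the ``only if'' direction and on Proposition~\ref{prop:convex polytope} combined with a separating hyperplane argument for the ``if'' direction.

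For the forward direction, I would first verify the useful identity $W_{\mathcal{F}}(\vec{w}) = \vec{\pi}^* \cdot \vec{w}$, where $\vec{\pi}^*$ is the reduced form of $VVCG_{\mathcal{F}}(\vec{w})$. This follows by unpacking the definitions: the expected weight allocated by $VVCG_{\mathcal{F}}(\vec{w})$ equals
\[ \sum_{i,j} \sum_{A \in T_i} \Pr[t_i = A]\, f_{ij}(A)\, \pi^*_{ij}(A) = \sum_{i,j} \sum_{A \in T_i} w_{ij}(A)\, \pi^*_{ij}(A) = \vec{w}\cdot\vec{\pi}^*, \]
where the first equality uses the scaling $f_{ij}(A) = w_{ij}(A)/\Pr[t_i = A]$ from Definition~\ref{def:vvcg}. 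Combined with Lemma~\ref{lem:vvcg}, which says $\vec{\pi}\cdot \vec{w} \leq \vec{\pi}^* \cdot \vec{w}$ for any feasible $\vec{\pi}$, this immediately yields $\vec{\pi}\cdot\vec{w} \leq W_{\mathcal{F}}(\vec{w})$ for every $\vec{w}$, and in particular for all $\vec{w} \in [-1,1]^{n\sum_i |T_i|}$.

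For the reverse direction, I would argue the contrapositive. Suppose $\vec{\pi}$ is infeasible, i.e., $\vec{\pi} \notin F(\mathcal{F},\mathcal{D})$. By Proposition~\ref{prop:convex polytope}, $F(\mathcal{F},\mathcal{D})$ is a closed convex polytope, so the separating hyperplane theorem produces a nonzero vector $\vec{w}$ and a scalar $b$ with $\vec{\pi}\cdot\vec{w} > b \geq \vec{\pi}'\cdot\vec{w}$ for all $\vec{\pi}' \in F(\mathcal{F},\mathcal{D})$. Taking $\vec{\pi}'$ to be the reduced form of $VVCG_{\mathcal{F}}(\vec{w})$ (which is feasible) and applying the identity above gives $b \geq W_{\mathcal{F}}(\vec{w})$, hence $\vec{\pi}\cdot\vec{w} > W_{\mathcal{F}}(\vec{w})$.

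The one remaining step, and the main subtlety, is that this separating $\vec{w}$ need not lie in $[-1,1]^{n\sum_i |T_i|}$. Here I would exploit positive homogeneity: rescaling $\vec{w}$ by $c = 1/\|\vec{w}\|_\infty > 0$ gives $c\vec{w} \in [-1,1]^{n\sum_i |T_i|}$, and both sides of the inequality scale by the same factor $c$, since $\vec{\pi}\cdot(c\vec{w}) = c(\vec{\pi}\cdot\vec{w})$ and $W_{\mathcal{F}}(c\vec{w}) = c\cdot W_{\mathcal{F}}(\vec{w})$ (the latter because scaling all virtual weights by a positive constant does not change the VCG outcome and multiplies the objective by $c$). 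The strict inequality therefore survives rescaling, producing a witness $\vec{w}' = c\vec{w}$ in the cube for which $\vec{\pi}\cdot\vec{w}' > W_{\mathcal{F}}(\vec{w}')$, contradicting the assumed condition. I expect the homogeneity/rescaling step to be the only place where some care is needed; the rest follows immediately from the already-established machinery.
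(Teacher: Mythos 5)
Your proposal is correct and is essentially the paper's own argument, just with the standard convexity step unpacked: the paper invokes in one line the fact that a convex polytope $P$ satisfies $\vec{\pi}\in P$ iff $\vec{\pi}\cdot\vec{w}\le\max_{\vec{\pi}'\in P}\vec{\pi}'\cdot\vec{w}$ for all $\vec{w}$ in the unit cube, and then identifies the right-hand side with $W_{\mathcal{F}}(\vec{w})$, whereas you expand that fact into the separating-hyperplane plus positive-homogeneity rescaling argument. The identity $W_{\mathcal{F}}(\vec{w})=\vec{\pi}^*\cdot\vec{w}$ and the appeal to Lemma~\ref{lem:vvcg} are exactly the ingredients the paper uses, so the two proofs coincide in substance.
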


\subsection{Tie-breaking}\label{sec:tiebreaking}
{Here we discuss tie-breaking. This is important in later sections because we will want to argue that any virtual VCG allocation rule we use is simple. Because we only have black-box access to $A_{\mathcal{F}}$, we do not necessarily have any control over the tie-breaking rule used, which could be problematic. Instead, we would like to enforce a particularly simple tie-breaking rule by changing $\vec{w}$ to $\vec{w}'$ such that $VVCG_{\mathcal{F}}(\vec{w}')$ also maximizes $\vec{\pi} \cdot \vec{w}$ over all reduced forms in $F(\mathcal{F},\mathcal{D})$, and $VVCG_{\mathcal{F}}(\vec{w}')$ is simple. Additionally, we would like the bit complexity of coordinates of $\vec{w}'$ to be polynomial in the bit complexity of coordinates of $\vec{w}$. Lemma~\ref{lem:tiebreaking} in Appendix~\ref{app:characterization} states formally that a simple lexicographic tie-breaking rule can be implemented in the desired manner. From now on, whenever we use the term $VVCG_{\mathcal{F}}(\vec{w})$, we will implicitly assume that this tie-breaking rule has been applied. Sometimes we will explicitly state so, if we want to get our hands on $\vec{w}'$. }

\section{Algorithms for Reduced Forms}\label{sec:algorithms}
{The characterization result of Section~\ref{sec:independent} hinges on the realization that $F(\mathcal{F},\mathcal{D})$ is a polytope whose corners can be implemented by especially simple allocation rules, namely simple virtual VCG allocation rules. To compute the reduced form of an optimal mechanism, we would like to additionally optimize a linear objective (expected revenue) over $F(\mathcal{F},\mathcal{D})$, so we need a separation oracle for this polytope. Additionally, once we have found the revenue-optimal reduced form in $F(\mathcal{F},\mathcal{D})$, we need some way of implementing it. As we know that every corner of $F(\mathcal{F},\mathcal{D})$ can be implemented by an especially simple allocation rule, we would like a way to decompose a given feasible reduced form into an explicit convex combination of corners (which then corresponds to a distribution over simple virtual VCG allocation rules). In this section, we provide both algorithms.} For now, we will not worry about the running time of our algorithms, but just provide a generic framework that applies to all settings. In Section~\ref{sec:approximations} we will describe how to approximately implement these algorithms efficiently with high probability obtaining an {FPRAS} with only black box access to an implementation of the VCG allocation rule. {The same algorithms with the obvious modifications also apply to ``second-order reduced forms'', using the techniques of Section~\ref{sec:correlated}.}

\subsection{Separation Oracle}\label{sec:separation}

We know from Corollary~\ref{cor:independent} that if a reduced form $\vec{\pi}$ is infeasible, then there is some weight vector $\vec{w} \in [-1,1]^{n\sum_{i=1}^{m} |T_i|}$ such that $\vec{\pi} \cdot \vec{w} > W_{\mathcal{F}}(\vec{w}).$ Finding such a weight vector explicitly gives us a hyperplane separating $\vec{\pi}$ from $F(\mathcal{F},\mathcal{D})$, provided we can also compute $W_{\mathcal{F}}(\vec{w})$. So consider the function:

$$g_{\vec{\pi}}(\vec{w}) = W_{\mathcal{F}}(\vec{w}) - \vec{\pi} \cdot \vec{w}.$$

We know that $\vec{\pi}$ is feasible if and only if $g_{\vec{\pi}}(\vec{w}) \geq 0$ {for all $\vec{w}\in [-1,1]^{n\sum_{i=1}^{m} |T_i|}$. So the goal of our separation oracle {$SO$} is to minimize $g_{\vec{\pi}}(\vec{w})$ over the hypercube, and check if the minimum is negative. If negative, the reduced form is infeasible, and the minimizer bears witness. Otherwise, the reduced form is feasible. To write a linear program to minimize $g_{\vec{\pi}}(\vec{w})$, recall that $W_{\mathcal{F}}(\vec{w}) = \max_{\vec{x} \in F(\mathcal{F},\mathcal{D})} \{\vec{x} \cdot \vec{w}\}$, so $g_{\vec{\pi}}(\vec{w})$ is a piece-wise linear function. Using standard techniques, we could add a variable, $t$, for $W_{\mathcal{F}}(\vec{w})$, add constraints to guarantee that $t \geq \vec{x} \cdot \vec{w}$ for all $\vec{x} \in F(\mathcal{F},\mathcal{D})$, and minimize $t - \vec{\pi} \cdot \vec{w}$. As this is a burdensome number of constraints, we will use an internal separation oracle $\widehat{SO}$, whose job is simply to verify that $t \geq \vec{x} \cdot \vec{w}$ for all $\vec{x} \in F(\mathcal{F},\mathcal{D})$, and output a violating hyperplane otherwise.

\notshow{Since $F(\mathcal{F},\mathcal{D})$ is a convex polytope (Proposition~\ref{prop:convex polytope}), by the Fundamental Theorem of Linear Programming, we know that a maximum of any linear function over this region occurs at a corner. So let $C(\mathcal{F},\mathcal{D})$ denote the corners of $F(\mathcal{F},\mathcal{D})$. Then we can rewrite $W_{\mathcal{F}}(\vec{w})$ as $W_{\mathcal{F}}(\vec{w}) = \max_{\vec{x}\in C(\mathcal{F},\mathcal{D})} \{\vec{x} \cdot \vec{w}\}$. So the following linear program minimizes $g_{\vec{\pi}}$:

\begin{figure}[ht]
\colorbox{MyGray}{
\begin{minipage}{\textwidth} {
\noindent\textbf{Variables:}
\begin{itemize}
\item $t$, denoting the value of $W_{\mathcal{F}}(\vec{w})$.
\item $w_{ij}(A)$ for all bidders $i$, items $j$, and types $A \in T_i$. 
\end{itemize}
\textbf{Constraints:}
\begin{itemize}
\item $-1 \leq w_{ij}(A) \leq 1$ for all bidders $i$, items $j$, and types $A \in T_i$, guaranteeing that the weights lie in $[-1,1]^{n\sum_{i=1}^{m} |T_i|}$.
\item $\vec{x} \cdot \vec{w} \leq t$, for all {$\vec{x} \in C(\mathcal{F},\mathcal{D})$}, guaranteeing that $t \geq W_{\mathcal{F}}(\vec{w})$.
\end{itemize}
\textbf{Minimizing:}
\begin{itemize}
\item $t - \vec{\pi} \cdot \vec{w}$, this is $g_{\vec{\pi}}(\vec{w})$ provided $t = W_{\mathcal{F}}(\vec{w})$.\\
\end{itemize}}
\end{minipage}}
\end{figure}
}
{\notshow{
The obvious way to solve this linear program requires us to enumerate all $\vec{x} \in C(\mathcal{F},\mathcal{D})$. To avoid enumeration we replace the constraints $\vec{x} \cdot \vec{w} \leq t$ with a separation oracle $\widehat{SO}$.}} To implement $\widehat{SO}$, let {$R_{\mathcal{F}}(\vec{w})$} denote the reduced form of $VVCG_{\mathcal{F}}(\vec{w})$. Then we know that {$R_{\mathcal{F}}(\vec{w}) \cdot \vec{w} \geq \vec{x} \cdot \vec{w}$ for all $\vec{x} \in F(\mathcal{F},\mathcal{D})$}. So if any equation of the form $\vec{x} \cdot \vec{w} \leq t$ is violated, then certainly {$R_{\mathcal{F}}(\vec{w}) \cdot \vec{w} \leq t$} is violated. Therefore, for an input $\vec{w}, t$, we need only check a single constraint of this form. So let {$\widehat{SO}(\vec{w},t)$} output ``yes'' if {$R_{\mathcal{F}}(\vec{w}) \cdot \vec{w} \leq t$}, and output the violated hyperplane $R_{\mathcal{F}}(\vec{w})\cdot \vec{z} - y \leq 0$ otherwise. $\widehat{SO}$ allows us to reformulate a more efficient linear program to minimize $g_{\vec{\pi}}(\vec{w})$. This LP is explicitly shown in Figure~\ref{fig:separation oracle} of Appendix~\ref{app:algorithms}.

So our separation oracle $SO$ to check if $\vec{\pi} \in F(\mathcal{F},\mathcal{D})$ is as follows: run the linear program of Figure~\ref{fig:separation oracle} to minimize $g_{\vec{\pi}}(\vec{w})$. {Let the optimum output by the LP be $t^{*},\vec{w}^{*}$}. If the value of the LP is negative, we know that {$\vec{w}^{*} \cdot \vec{\pi} > t^{*} = W_{\mathcal{F}}(\vec{w}^*)$}, and we have our violated hyperplane. Otherwise, the reduced form is feasible, so we output ``yes.''

We conclude this section with a lemma relating the bit complexity of the corners of $F(\mathcal{F},\mathcal{D})$ to the bit complexity of the output of our separation oracle. This is handy for efficiently implementing our algorithms in later sections. The proof is simple, and provided in Appendix~\ref{app:algorithms}. We make use a standard property of the Ellipsoid algorithm (see Theorem~\ref{thm:ellipsoid}).

\begin{lemma}\label{lem:low bit SO}
If all coordinates of each corner of $F(\mathcal{F},\mathcal{D})$ are rational numbers of bit complexity $\ell$, then every coefficient of any hyperplane output by $SO$ is a rational number of bit complexity $\poly(n\sum_{i=1}^{m}|T_{i}|,\ell)$. 
\end{lemma}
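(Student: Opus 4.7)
The plan is to trace how $SO$ produces its output hyperplane and appeal to the standard polynomial bit-complexity guarantee of the Ellipsoid algorithm (Theorem~\ref{thm:ellipsoid}). Recall that $SO$ runs the LP of Figure~\ref{fig:separation oracle}, whose variables are $t$ together with the $n\sum_{i=1}^{m}|T_i|$ weights $w_{ij}(A)$; whose constraints are the box constraints $-1 \leq w_{ij}(A) \leq 1$ together with (implicitly, via $\widehat{SO}$) the family $\vec{x}\cdot\vec{w} \leq t$ for every corner $\vec{x}$ of $F(\mathcal{F},\mathcal{D})$; and whose objective is to minimize $t - \vec{\pi}\cdot\vec{w}$. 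When $SO$ reports infeasibility, the hyperplane it returns is $\vec{w}^{*}\cdot\vec{z} - t^{*} \leq 0$, where $(\vec{w}^{*},t^{*})$ is the optimal LP solution. So it suffices to bound the bit complexity of each coordinate of $(\vec{w}^{*},t^{*})$.

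First I would observe that every constraint of the LP has coefficients of bit complexity at most $\ell$: the box constraints have bit complexity $1$, and any separating constraint generated by $\widehat{SO}$ has the form $R_{\mathcal{F}}(\vec{w})\cdot\vec{z} - y \leq 0$, whose coefficient vector is the reduced form $R_{\mathcal{F}}(\vec{w})$, which is a corner of $F(\mathcal{F},\mathcal{D})$ and therefore has coordinates of bit complexity $\ell$ by hypothesis. Next I would invoke Theorem~\ref{thm:ellipsoid}: the optimum returned by Ellipsoid is a vertex of the feasible polytope, i.e., the unique solution of some full-rank system of $n\sum_{i=1}^{m}|T_i|+1$ tight linear equations chosen from the constraint set above. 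By Cramer's rule, each coordinate of such a solution is the ratio of two determinants of $(n\sum_{i=1}^{m}|T_i|+1)\times(n\sum_{i=1}^{m}|T_i|+1)$ matrices whose entries have bit complexity at most $\ell$, and a standard bound on determinants (Hadamard's inequality applied to integer numerators and denominators after clearing) then gives each coordinate of $(\vec{w}^{*},t^{*})$ bit complexity $\poly(n\sum_{i=1}^{m}|T_i|,\ell)$, as required.

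The main subtlety is that $\widehat{SO}$ may produce exponentially many constraints over the course of Ellipsoid's execution, so one cannot simply write a polynomial-size LP and quote a textbook bit-complexity bound. The resolution is exactly the content of Theorem~\ref{thm:ellipsoid}: it suffices to exhibit the polynomially-many tight constraints that define the returned vertex, and by the argument above each of those has low bit complexity, independent of how many other constraints were examined along the way. A minor point, not really an obstacle, is that the objective's bit complexity depends on $\vec{\pi}$; but this plays no role in the bit complexity of the output vertex, since a vertex is pinned down entirely by its tight constraints and the objective only selects which vertex is chosen.
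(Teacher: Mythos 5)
Your proposal is correct and follows essentially the same route as the paper's own proof: bound the bit complexity of every constraint the LP of Figure~\ref{fig:separation oracle} can see (box constraints and $\widehat{SO}$-constraints whose coefficient vectors are corners of $F(\mathcal{F},\mathcal{D})$, hence bit complexity $\ell$), conclude via Gaussian elimination/Cramer that every vertex of that LP has polynomial bit complexity, and invoke property~3 of Theorem~\ref{thm:ellipsoid} that the returned optimum is a vertex. The only thing the paper is slightly more explicit about is the parenthetical pointer to Section~\ref{sec:tiebreaking}, which is what guarantees $R_{\mathcal{F}}(\vec{w})$ is genuinely a corner of $F(\mathcal{F},\mathcal{D})$ (i.e., that the virtual VCG rule queried by $\widehat{SO}$ is simple); you assume this implicitly, which is fine given the standing convention in the paper.
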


\subsection{Decomposition Algorithm via a Corner Oracle}\label{sec:decomposition}
We provide an algorithm for writing a feasible reduced-form as a convex combination of corners of $F(\mathcal{F},\mathcal{D})$, i.e. reduced forms of simple virtual VCG allocation rules. A decomposition algorithm for arbitrary polytopes $P$ is already given in~\cite{CaiDW12}, and the only required ingredients for the algorithm is a separation oracle for $P$, \emph{corner oracle} for $P$, and bound $b$ on the bit complexity of the coefficients of any hyperplane that can possibly be output by the separation oracle. {The goal of this section is to define both oracles and determine $b$ for our setting.} But let us first recall the result of~\cite{CaiDW12}. Before stating the result, let us specify the required functionality of the corner oracle.

The corner oracle for polytope $P$ takes as input $k$ (where $k$ is at most the dimension, in our case $n \sum_i |T_i|$) hyperplanes $H_1,\ldots,H_k$ (whose coefficients are all rational numbers of bit complexity $b$) and has the following behavior: If no hyperplane intersects $P$ in its interior and there is a corner of $P$ that lies in all hyperplanes, then such a corner is output. Otherwise, the behavior may be arbitrary. Below is the theorem from~\cite{CaiDW12}.

\begin{theorem}\label{thm:geometric}(\cite{CaiDW12}) Let $P$ be a $d$-dimensional polytope with corner oracle $CO$ and separation oracle $SO$ such that each coefficient of every hyperplane ever output by $SO$ is a rational number of bit complexity $b$. Then there is an algorithm that decomposes any point $\vec{x} \in P$ into a convex combination of at most $d+1$ corners of $P$. Furthermore, if $\ell$ is the maximum number of bits needed to represent a coordinate of $\vec{x}$, then the runtime is polynomial in $d,b,\ell$ and the runtimes of $SO$ and $CO$ on inputs of bit complexity $\poly(d,b,\ell)$.
\end{theorem}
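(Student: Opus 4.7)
The plan is to prove Theorem~\ref{thm:geometric} via an iterative dimension-reduction procedure in the spirit of a constructive Carath\'eodory decomposition. Starting from the input point $\vec{x}\in P$, at each step we peel off a corner of $P$ and reduce the problem to decomposing a residual point lying on a face of $P$ of strictly lower dimension. After at most $d$ peelings the residual point is itself a corner, yielding a convex combination of at most $d+1$ corners.

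Concretely, set $\vec{x}_0 := \vec{x}$ and begin with an empty list of ``active'' hyperplanes. At iteration $i\ge 1$, I would (a) invoke $CO(H_1,\ldots,H_{i-1})$ to obtain a corner $c_i$ of $P$ lying on all previously collected hyperplanes, (b) shoot the ray from $c_i$ through $\vec{x}_{i-1}$ and continue past $\vec{x}_{i-1}$ until it exits $P$, using $SO$ together with binary search on the ray parameter to locate the exit point $\vec{x}_i$, and (c) record $\lambda_i \in [0,1]$ with $\vec{x}_{i-1}=\lambda_i c_i+(1-\lambda_i)\vec{x}_i$ and append the tight hyperplane $H_i$ returned by $SO$ at the exit point. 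A short argument shows $H_i$ is linearly independent from $H_1,\ldots,H_{i-1}$: since $\vec{x}_{i-1}$ lies strictly inside the face $P\cap H_1\cap\cdots\cap H_{i-1}$, the hyperplane $H_i$ (tight at $\vec{x}_i$ but not at $\vec{x}_{i-1}$) must cut this face non-trivially. Hence the face containing $\vec{x}_i$ loses a dimension at each step; after at most $d$ iterations it is $0$-dimensional, so $\vec{x}_d$ is itself a corner $c_{d+1}$, and unrolling the recursion expresses $\vec{x}$ as a convex combination of $c_1,\ldots,c_{d+1}$.

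The main obstacle is controlling the bit complexity of all intermediate quantities so that $CO$ and $SO$ are only ever queried on inputs of bit complexity $\poly(d,b,\ell)$ and so that the binary search terminates in polynomially many steps. Each $\vec{x}_i$ is obtained as the intersection of a line through two previously computed points with the hyperplane $H_i$ returned by $SO$; since $H_i$ has bit complexity $b$, a Cramer's-rule / Hadamard-type bound propagates the bit complexity of $\vec{x}_i$ by only a polynomial factor per iteration, yielding $\poly(d,b,\ell)$ after $d$ iterations. Consequently, all hyperplanes subsequently handed to $CO$ also have bit complexity $\poly(d,b,\ell)$, which is what that oracle requires. The exact $\lambda_i$ is a rational with denominator of the same size, so binary search on the ray parameter need only run to accuracy $2^{-\poly(d,b,\ell)}$ before a continued-fraction rounding recovers $\lambda_i$ exactly. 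Combining these bounds with the fact that each iteration calls $CO$ and $SO$ a polynomial number of times yields the claimed runtime, completing the proof plan.
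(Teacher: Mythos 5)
The paper does \emph{not} prove Theorem~\ref{thm:geometric}; it is imported verbatim from~\cite{CaiDW12} and used as a black box. So there is no in-paper proof to compare against, and your proposal should be judged as an independent reconstruction of the argument in~\cite{CaiDW12}.

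Your high-level plan is essentially the algorithm of~\cite{CaiDW12} (itself a constructive Carath\'eodory argument in the style of Gr\"otschel--Lov\'asz--Schrijver): invoke $CO$ on the collected tight hyperplanes to get a corner $c_i$ of the current face, shoot a ray from $c_i$ through the current point, find the exit point and a tight hyperplane at it, record the convex weight, and recurse on the exit point. The loop invariant is exactly the one you state: after step $i$ the residual point lies in $P\cap H_1\cap\cdots\cap H_i$, a proper face of $P\cap H_1\cap\cdots\cap H_{i-1}$, so the dimension drops and at most $d+1$ corners appear. Your dimension-drop justification is slightly overstated when it invokes ``$\vec{x}_{i-1}$ lies strictly inside the face'' (not guaranteed); the clean argument is that $c_i$ lies on $H_1,\ldots,H_{i-1}$ but $H_i$ is violated beyond $\vec{x}_i$ along the ray emanating from $c_i$, so $c_i\notin H_i$ and hence $H_i$ strictly cuts the face containing $c_i$. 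That is a cosmetic fix.

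The substantive gap is in the bit-complexity control, which is precisely the delicate part of this theorem. You write that Cramer/Hadamard ``propagates the bit complexity of $\vec{x}_i$ by only a polynomial factor per iteration, yielding $\poly(d,b,\ell)$ after $d$ iterations.'' A multiplicative ``polynomial factor per iteration'' compounds over $d$ iterations to something exponential in $d$, not polynomial; in fact the naive recursion $\vec{x}_i=c_i+t_i(\vec{x}_{i-1}-c_i)$ with $t_i$ depending on $\vec{x}_{i-1}$ gives $\mathrm{bit}(\vec{x}_i)\approx\mathrm{bit}(\vec{x}_{i-1})+\mathrm{bit}(t_i)$ where $\mathrm{bit}(t_i)$ itself grows with $\mathrm{bit}(\vec{x}_{i-1})$, so the bound roughly doubles per step. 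To get the claimed $\poly(d,b,\ell)$ bound one needs a \emph{global} argument rather than a step-by-step one: for instance, observe that $\vec{x}_i$ lies simultaneously in $\mathrm{aff}(c_1,\ldots,c_i,\vec{x})$ (an affine subspace spanned by points of bit complexity $\poly(d,b)$ plus $\vec{x}$ of bit complexity $\ell$) and in $H_1\cap\cdots\cap H_i$ (hyperplanes of bit complexity $b$), and pin down $\vec{x}_i$ as the solution of a single linear system in those bounded-bit-complexity data, which Cramer's rule then bounds by $\poly(d,b,\ell)$ without compounding. Likewise, for the binary search you cannot simply round each $\vec{x}_i$ and iterate, since accumulated rounding must be shown not to push the residual outside the face; the usual remedy is to recover $\vec{x}_i$ exactly by intersecting the ray with the concrete $H_i$ returned by $SO$ rather than trusting the rounded binary-search value. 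Without these corrections, the runtime and correctness claims do not follow from the proposal as written, even though the underlying decomposition idea is the right one.
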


So all we need to do is define $CO$ and $SO$, and provide a bound on the bit complexity of the hyperplanes output by $SO$. {We've already defined $SO$ and bounded the bit complexity of hyperplanes output by it by $\poly(n\sum_{i=1}^{m} |T_i|,\ell)$, where $\ell$ is the maximum number of bits needed to represent a coordinate in a corner of $F(\mathcal{F},\mathcal{D})$ (see {Lemma~\ref{lem:low bit SO}} of Section~\ref{sec:separation}).} So now we define $CO$ and state its correctness in Theorem~\ref{thm:corneroracle} whose proof is in Appendix~\ref{app:algorithms}. In the last line, CO outputs the weights $\vec{w}'$ as well so that we can actually implement the reduced form that is output.

\begin{algorithm}[ht]
        \caption{Corner Oracle for $F(\mathcal{F},\mathcal{D})$}
    \begin{algorithmic}[1]\label{alg: corner}
        \STATE Input: Hyperplanes $(\vec{w}_1, h_{1}), \ldots,(\vec{w}_a$, $h_a)$, {$a \leq n\sum_{i=1}^{m} |T_i|$.}
        \STATE Set $\vec{w} = \sum_{j=1}^{a} \frac{1}{a} \vec{w}_j$.
        \STATE Use the tie-breaking rule of Section~\ref{sec:tiebreaking} (stated formally in Lemma~\ref{lem:tiebreaking} of Appendix~\ref{app:characterization}) on $\vec{w}$ to obtain $\vec{w}'$.
        \STATE Output the reduced form of $VVCG_{\mathcal{F}}(\vec{w}')$, as well as $\vec{w}'$.
     \end{algorithmic}
\end{algorithm}

\begin{theorem}\label{thm:corneroracle} The Corner Oracle of Algorithm~\ref{alg: corner} correctly outputs a corner of $F(\mathcal{F},\mathcal{D})$ contained in $\cap_{j=1}^{a} H_j$ whenever the hyperplanes {$H_1,\ldots,H_a$ are boundary hyperplanes of $F(\mathcal{F},\mathcal{D})$ and $\cap_{j=1}^{a} H_j$ contains a corner}. {Furthermore, if all coordinates of all $H_j$ are rational numbers of bit complexity $b$, and $\Pr[t_i = A]$ is a rational number of bit complexity $\ell$ for all $i,A \in T_i$, then every coordinate of the weight vector $\vec{w}'$ is a rational number of bit complexity $\poly(n\sum_{i=1}^{m}|T_i|,b,\ell)$.}
\end{theorem}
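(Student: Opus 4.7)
The plan is to treat the algorithm as an instance of the standard ``averaging of supporting hyperplanes'' trick for extracting a shared corner. The first step is to unpack what ``boundary hyperplane'' means here: each $H_j = \{\vec{x} : \vec{w}_j \cdot \vec{x} = h_j\}$ must be a supporting hyperplane of $F(\mathcal{F},\mathcal{D})$, so $h_j = \max_{\vec{x} \in F(\mathcal{F},\mathcal{D})} \vec{w}_j \cdot \vec{x} = W_{\mathcal{F}}(\vec{w}_j)$, and $F(\mathcal{F},\mathcal{D}) \subseteq \{\vec{x} : \vec{w}_j \cdot \vec{x} \leq h_j\}$. I would state this upfront as the only structural fact used about the input hyperplanes.

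For correctness, set $\vec{w} = \frac{1}{a}\sum_{j=1}^a \vec{w}_j$. For any $\vec{x} \in F(\mathcal{F},\mathcal{D})$, linearity gives $\vec{w} \cdot \vec{x} = \frac{1}{a}\sum_j \vec{w}_j \cdot \vec{x} \leq \frac{1}{a}\sum_j h_j$, with equality if and only if $\vec{w}_j \cdot \vec{x} = h_j$ for every $j$, i.e., $\vec{x} \in \cap_j H_j$. By hypothesis this intersection contains a corner of $F(\mathcal{F},\mathcal{D})$, so the upper bound is actually attained and the set of maximizers of $\vec{w} \cdot \vec{x}$ over $F(\mathcal{F},\mathcal{D})$ is contained in $\cap_j H_j$. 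Now invoke Lemma~\ref{lem:vvcg}: the reduced form of $VVCG_{\mathcal{F}}(\vec{w})$ maximizes $\vec{w} \cdot \vec{x}$ over $F(\mathcal{F},\mathcal{D})$ for any tie-breaking rule. Applying the tie-breaking rule of Section~\ref{sec:tiebreaking} (Lemma~\ref{lem:tiebreaking}), we obtain $\vec{w}'$ such that $VVCG_{\mathcal{F}}(\vec{w}')$ is \emph{simple} and its reduced form still maximizes $\vec{w} \cdot \vec{x}$ over $F(\mathcal{F},\mathcal{D})$. Hence this reduced form lies in $\cap_j H_j$, and by Proposition~\ref{prop:VCG} it is a corner of $F(\mathcal{F},\mathcal{D})$, as required.

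For the bit-complexity bound, each coordinate of $\vec{w}$ is $\frac{1}{a}\sum_{j=1}^a (\vec{w}_j)_{\text{coord}}$, a sum of $a \leq n\sum_i |T_i|$ rationals of bit complexity $b$, divided by $a$. A routine estimate on sums and divisions of rationals shows that each coordinate of $\vec{w}$ has bit complexity $\poly(n\sum_i|T_i|, b)$. Lemma~\ref{lem:tiebreaking} then promises that $\vec{w}'$ produced by the tie-breaking rule has coordinates whose bit complexity is polynomial in the bit complexity of $\vec{w}$ and in $\ell$ (the latter enters because the tie-breaking rule must interact with the scaling by $\Pr[t_i = A]$ from Definition~\ref{def:vvcg}). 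Composing these bounds yields bit complexity $\poly(n\sum_i|T_i|, b, \ell)$ for $\vec{w}'$.

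The conceptual step that matters most is the ``equality forces membership in every $H_j$'' argument, i.e. realizing that averaging supporting hyperplane normals gives an objective whose optimum over $F(\mathcal{F},\mathcal{D})$ can only be attained on the intersection of the original hyperplanes. Everything else is bookkeeping: Lemma~\ref{lem:vvcg} and Proposition~\ref{prop:VCG} hand us the reduction to a simple virtual VCG allocation rule, and Lemma~\ref{lem:tiebreaking} hands us both the simplicity and the required bit-complexity control. The only part that might look subtle is ensuring the tie-breaking perturbation does not push the reduced form off of $\cap_j H_j$; this is handled by the guarantee of Lemma~\ref{lem:tiebreaking} that $VVCG_{\mathcal{F}}(\vec{w}')$ achieves the same $\vec{w}$-objective as $VVCG_{\mathcal{F}}(\vec{w})$, so the equality-case reasoning above still applies.
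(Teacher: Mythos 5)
Your proof is correct and follows essentially the same route as the paper: average the normals, observe that the averaged objective is maximized exactly on $\cap_j H_j$, invoke Lemma~\ref{lem:vvcg} to see that $VVCG_{\mathcal{F}}(\vec{w})$ attains the maximum, apply Lemma~\ref{lem:tiebreaking} to get the simple rule $VVCG_{\mathcal{F}}(\vec{w}')$ without losing the objective value, and conclude via Proposition~\ref{prop:VCG}. The only presentational difference is that the paper factors your ``equality forces membership in every $H_j$'' step out as a standalone lemma (Lemma~\ref{lem:choosecorner}), whereas you prove it inline; the substance is identical.
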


\section{Efficient Implementation of Algorithms for Reduced Forms}\label{sec:approximations}
In this section, we show how to approximately implement the separation oracle (SO) of Section~\ref{sec:separation} and the corner oracle (CO) of Section~\ref{sec:decomposition} efficiently with high probability, {thereby obtaining also an approximate decomposition algorithm for $F(\mathcal{F},\mathcal{D})$}. We begin by bounding the runtime of an exact implementation, showing that it is especially good when ${\cal D}$ is a uniform (possibly non-product) distribution of small support. As above, $A_{\mathcal{F}}$  denotes an algorithm that implements the VCG allocation rule with respect to feasibility constraints $\mathcal{F}$, and $rt_{\mathcal{F}}(b)$ denotes the runtime of $A_{\mathcal{F}}$ when each input weight has bit complexity $b$.
\subsection{Exact Implementation}\label{sec:exact}

The only tricky step in implementing $SO$ and $CO$ is computing $R_{\mathcal{F}}(\vec{w})$ for a given $\vec{w}$. A simple approach is to just enumerate every profile in the support of $\mathcal{D}$ and check if $VVCG_{\mathcal{F}}(\vec{w})$ awards bidder $i$ item $j$. This can be done in time polynomial in {the cardinality $|\mathcal{D}|$ of the support of ${\cal D}$, the bit complexity $\ell$ of the probabilities used by ${\cal D}$ and $rt_{\mathcal{F}}(\poly(b,\ell))$}, where $b$ is the bit complexity of $\vec{w}$'s coordinates. {So, if $b$ is an upper bound on the bit complexity of the coordinates of the weight vectors $\vec{w}$ for which $R_{\mathcal{F}}(\vec{w})$ is computed in an execution of $SO$ ($CO$), then $SO$ ($CO$) can be implemented in time polynomial in $n\sum_i |T_i|,|\mathcal{D}|$, $\ell$, $b$, $c$, and $rt_{\mathcal{F}}(\poly(b,\ell))$, where $c$ is the bit complexity of the numbers in the input of $SO$ ($CO$).} Alone, this result is not very helpful as we can do much more interesting computations in time polynomial in $|\mathcal{D}|$, including exactly solve MDMDP~\cite{DW12}. The interesting corollary is that when $\mathcal{D}$ is a (possibly correlated) uniform distribution over a collection of profiles (possibly with repetition) whose number is polynomial in $n\sum_i |T_i|$, the runtime of all algorithms of Section~\ref{sec:algorithms} becomes polynomial in $n\sum_i |T_i|$, $c$, and $rt_{\mathcal{F}}(\poly (n\sum_i |T_i|, c))$, where $c$ is the bit complexity of the numbers in the input to these algorithms. Corollaries~\ref{cor:uniform} and~\ref{cor:geometric alg for uniform} in Appendix~\ref{app:exact} quantify this statement precisely. It is these corollaries that enable an efficient approximation for arbitrary distributions in the next section.

\subsection{Approximate Implementation}\label{sec:approximate}
Now, we show how to ``approximately implement'' both algorithms in time polynomial in only $\sum_{i=1}^{m} |\mathcal{D}_i|$, where $|{\cal D}_i|$ is the cardinality of the support of ${\cal D}_i$, using the results of Section~\ref{sec:exact}. But we need to use the right notion of approximation. Simply implementing both algorithms approximately, e.g. separating out reduced forms that are not even approximately feasible and decomposing reduced forms that are approximately feasible, might not get us very far, as we could lose the necessary linear algebra to solve LPs. So we use a different notion of approximation. We compute a polytope $P'$ that, with high probability, is a ``good approximation'' to $F(\mathcal{F},\mathcal{D})$ in the sense that instead of optimizing over $F(\mathcal{F},\mathcal{D})$ we can optimize over $P'$ instead. Then we implement both the separation and the decomposition algorithms for $P'$ exactly so that their running time is polynomial in $n$, $\sum_{i=1}^{m}|T_i|$, $c$ and $rt_{\mathcal{F}}(\poly(n\sum_{i=1}^{m}|T_i|, c))$, where $c$ is the number of bits needed to describe a coordinate of the input to these algorithms\notshow{ and $rt_{\mathcal{F}}(\cdot)$ denotes, as above, the running time of the algorithm $A_{\mathcal{F}}$ that implements the VCG allocation rule with respect to feasibility constraints $\mathcal{F}$}.

\smallskip \noindent {\bf Approach:} So how can we compute an approximating polytope? Our starting point is a natural idea: Given an arbitrary distribution $\mathcal{D}$, we can sample profiles $P_1,\ldots,P_k$ from $\mathcal{D}$ independently at random and define a new distribution $\mathcal{D}'$ that samples a profile uniformly at random from $P_1,\ldots,P_k$ (i.e. chooses each $P_i$ with probability $1/k$). Clearly as $k \rightarrow \infty$ the polytope $F(\mathcal{F},\mathcal{D}')$ should approximate $F(\mathcal{F},\mathcal{D})$ better and better. The question is how large $k$ should be taken for a good approximation. If taking $k$ to be polynomial in $n\sum_{i=1}^{m} |T_i|$ suffices, then {Section~\ref{sec:exact} (specfically, Corollaries~\ref{cor:uniform} and \ref{cor:geometric alg for uniform} of Appendix~\ref{app:exact})} also implies that we can implement both the separation and the decomposition algorithms for $F(\mathcal{F},\mathcal{D}')$ in the desired running time. 

However this approach fails, as some types may very well have $\Pr[t_i = A] << \frac{1}{\poly(n\sum_{i=1}^{m} |T_i|)}$. Such types likely wouldn't even appear in the support of $\mathcal{D}'$ if $k$ scales polynomially in $n\sum_{i=1}^{m} |T_i|$. So how would then the proxy polytope $F(\mathcal{F},\mathcal{D}')$ inform us about $F(\mathcal{F},\mathcal{D})$ in the corresponding dimensions? To cope with this, for each bidder $i$ and type $A \in T_i$, we take an additional $k'$ samples from $\mathcal{D}_{-i}$ and set $t_i = A$. $\mathcal{D}'$ still picks uniformly at random from all $k + k'\sum_{i=1}^{m} |T_i|$ profiles. 

Now here is what we can guarantee. In Corollary~\ref{cor:oneway} (stated and proved in Appendix~\ref{app:oneway}), we show that with high probability every $\vec{\pi}$ in $F(\mathcal{F},\mathcal{D})$ has some $\vec{\pi}' \in F(\mathcal{F},\mathcal{D}')$ with $|\vec{\pi}-\vec{\pi}'|_{\infty}$ small. This is done by taking careful concentration and union bounds. In Corollary~\ref{cor:otherway} (stated and proved in Appendix~\ref{app:otherway}), we show the converse: that with high probability every $\vec{\pi}' \in F(\mathcal{F},\mathcal{D}')$ has some $\vec{\pi} \in F(\mathcal{F},\mathcal{D})$ with $|\vec{\pi} - \vec{\pi}'|_{\infty}$ small. This requires a little more care as the elements of $F(\mathcal{F},\mathcal{D}')$ are not fixed a priori (i.e. before taking samples from ${\cal D}$ to define ${\cal D}'$), but depend on the choice of $\mathcal{D}'$, which is precisely the object with respect to which we want to use the probabilistic method. We resolve this apparent circularity by appealing to some properties of the algorithms of Section~\ref{sec:algorithms} {(namely,  bounds on the bit complexity of any output of $SO$ and $CO$)}. Finally, in Theorems~\ref{thm:appxSO} and~\ref{thm:appxdecomp} (stated below and proved in Appendix~\ref{app:final}), we put our results together to prove that our approximations behave as desired while taking $k$ and $k'$ both polynomial in $n\sum_{i=1}^{m} |T_i|$, thereby achieving the desired runtime.

In the following theorems, Algorithm~\ref{alg:preprocess} refers to a pre-processing algorithm (see Appendix~\ref{app:final}) that explicitly chooses $k$ and $k'$, both polynomial in $n\sum_i |T_i|$, so that the polytopes $F(\mathcal{F},\mathcal{D})$ and $F(\mathcal{F},\mathcal{D}')$ are close with high probability. Algorithm~\ref{alg:decomposition} refers to a decomposition algorithm (Appendix~\ref{app:final}) combining the geometric algorithm of~\cite{CaiDW12} with some bookkeeping to decompose any reduced form in $F(\mathcal{F},\mathcal{D}')$ into an explicit distribution over simple virtual VCG allocation rules. Also in Appendix~\ref{app:final} are Corollaries~\ref{cor:appxSO} and~\ref{cor:appxdecomp}, which show that Theorems~\ref{thm:appxSO} and~\ref{thm:appxdecomp} imply that using $\mathcal{D}'$ as a proxy yields an approximate separation oracle and decomposition algorithm for $\mathcal{D}$.

\notshow{
Now, we \mattnote{use Corollaries~\ref{cor:oneway} and~\ref{cor:otherway}} to obtain ``approximate separation and decomposition oracles'' for arbitrary distributions $\mathcal{D}$. By this we mean generating a proxy distribution ${\cal D}'$ for ${\cal D}$, whose support is polynomial in the approximation parameter $1/\epsilon$ and the dimension $n \sum_i |T_i|$, so that for the purposes of approximately optimizing over $F(\mathcal{F},\mathcal{D})$ we can use instead the separation and decomposition oracles for $F(\mathcal{F},\mathcal{D}')$, whose running time scales gracefully. (The application of this idea is exhibited in the next section for computing revenue optimal auctions.) We start with the following algorithm that fixes some parameters. \mattnote{Algorithm~\ref{alg:decomposition} is our approximate decomposition algorithm. Theorem~\ref{thm:appxSO} proves the correctness of the separation oracle with a proxy $\mathcal{D}'$ generated by Algorithm~\ref{alg:preprocess}, and Theorem~\ref{thm:appxdecomp} proves the correctness of the decomposition algorithm with a proxy $\mathcal{D}'$ generated by Algorithm~\ref{alg:preprocess}.}}

\begin{theorem}\label{thm:appxSO} 
Given the choice of $k$, $k'$ and ${\cal D'}$ in Algorithm~\ref{alg:preprocess}, the following are true with probability at least $1-e^{-\Omega(n\sum_{i=1}^{m}|T_i|/\epsilon)}$:

\begin{enumerate}
\item For all $\vec{\pi} \in F(\mathcal{F},\mathcal{D})$, there is a $\vec{\pi}' \in F(\mathcal{F},\mathcal{D}')$ with $|\vec{\pi}-\vec{\pi}'|_\infty \leq \epsilon$.
\item For all $\vec{\pi}' \in F(\mathcal{F},\mathcal{D}')$, there is a $\vec{\pi} \in F(\mathcal{F},\mathcal{D})$ with $|\vec{\pi}-\vec{\pi}'|_\infty \leq \epsilon$.
\end{enumerate}
Moreover the separation oracle of Section~\ref{sec:separation} for feasibility set ${\cal F}$ and distribution ${\cal D}'$ runs in time polynomial in $n$, $\sum_{i=1}^{m} |T_i|$, $1/\epsilon$, $c$, and $rt_{\mathcal{F}}(\poly(n\sum_{i=1}^{m} |T_i|,\log{1/\epsilon},c))$, where $c$ is the bit complexity of the coordinates of its input.
\end{theorem}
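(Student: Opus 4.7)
The plan is to assemble Theorem~\ref{thm:appxSO} from the two one-sided approximation guarantees already established as Corollaries~\ref{cor:oneway} and~\ref{cor:otherway}, then read off the runtime of the separation oracle from the exact-implementation bounds of Section~\ref{sec:exact}. In other words, parts (1) and (2) of the theorem are direct instantiations of those corollaries at the sample sizes $k$ and $k'$ prescribed by Algorithm~\ref{alg:preprocess}, and the runtime claim is an appeal to Corollary~\ref{cor:uniform} in Appendix~\ref{app:exact}, once we observe that $\mathcal{D}'$ is by construction a uniform distribution over polynomially many profiles.

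First I would handle part (1) via Corollary~\ref{cor:oneway}. The intuition is that if $\vec{\pi} \in F(\mathcal{F},\mathcal{D})$ is implemented by a feasible allocation rule $M$, then the reduced form of $M$ computed with respect to the empirical $\mathcal{D}'$ deviates from $\vec{\pi}$ only by a concentration error in each of the $n\sum_i |T_i|$ coordinates. Using that by Theorem~\ref{thm:characterization} it suffices to concentrate on the (finitely many) corners of $F(\mathcal{F},\mathcal{D})$ rather than arbitrary $\vec{\pi}$, a Chernoff bound together with a union bound over these corners pushes the failure probability below $e^{-\Omega(n\sum_i|T_i|/\epsilon)}$ as long as $k,k'$ are polynomial in $n\sum_i|T_i|/\epsilon$, which is exactly how Algorithm~\ref{alg:preprocess} sets them.

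Second, I would handle part (2) via Corollary~\ref{cor:otherway}. The hard part, as the paper already flags, is that the polytope $F(\mathcal{F},\mathcal{D}')$ itself is a random object and so we cannot quantify directly over its points to invoke concentration. My plan is to break this apparent circularity using the bit-complexity machinery already developed: by Lemma~\ref{lem:low bit SO} and Theorem~\ref{thm:corneroracle} combined with the tie-breaking reduction of Section~\ref{sec:tiebreaking}, every corner of $F(\mathcal{F},\mathcal{D}')$ arises as the reduced form of $VVCG_{\mathcal{F}}(\vec{w}')$ for some weight vector $\vec{w}'$ of polynomial bit complexity, so the set of \emph{possible} corners ranging over all realizations of $\mathcal{D}'$ is drawn from a single, a priori enumerable family of size $2^{\poly(n\sum_i|T_i|)}$. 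I can then union-bound a Chernoff concentration over this family, comparing each candidate corner's reduced form under $\mathcal{D}'$ to its reduced form under $\mathcal{D}$, which lives in $F(\mathcal{F},\mathcal{D})$. Taking $k,k'$ polynomial in $n\sum_i|T_i|/\epsilon$ suffices to absorb the union bound and to yield the stated tail.

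Finally, for the runtime claim, the proxy $\mathcal{D}'$ is a uniform distribution over $k+k'\sum_i|T_i|=\poly(n\sum_i|T_i|,1/\epsilon)$ profiles with rational probabilities of bit complexity $\poly(\log(1/\epsilon),\log(n\sum_i |T_i|))$, so Corollary~\ref{cor:uniform} of Appendix~\ref{app:exact} gives exactly the claimed runtime for the separation oracle of Section~\ref{sec:separation} on input $\mathcal{D}'$. The main obstacle throughout is the circularity in part (2); my workaround is to replace quantification over the random polytope by quantification over the deterministic, bit-complexity-bounded family of weight vectors that can ever define its corners.
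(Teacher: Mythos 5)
Your proposal tracks the paper's proof exactly: it establishes parts (1) and (2) by instantiating Corollaries~\ref{cor:oneway} and~\ref{cor:otherway} at the parameters $t$, $k'$, $k$ fixed in Algorithm~\ref{alg:preprocess} and verifying that the resulting failure probability is $e^{-\Omega(n\sum_i|T_i|/\epsilon)}$, and gets the runtime from Corollary~\ref{cor:uniform} using that $\mathcal{D}'$ is uniform over $k+k'\sum_i|T_i| = \poly(n\sum_i|T_i|,1/\epsilon)$ profiles. One small caveat, which does not affect your plan since you invoke Corollary~\ref{cor:oneway} as a black box: your stated intuition for part (1) (a union bound over the corners of $F(\mathcal{F},\mathcal{D})$ via Theorem~\ref{thm:characterization}) is not how that corollary is actually proved --- the paper union-bounds over an $\epsilon$-$\ell_\infty$-cover of $F(\mathcal{F},\mathcal{D})$ of size $(1/t)^{n\sum_i|T_i|}$, because the number (and bit complexity) of corners of $F(\mathcal{F},\mathcal{D})$ cannot be bounded independently of $|\mathcal{D}|$, so a direct union bound over them would not yield a distribution-free sample size.
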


\begin{theorem}\label{thm:appxdecomp} Algorithm~\ref{alg:decomposition} has the following property on input $\vec{\pi}' \in  F(\mathcal{F},\mathcal{D}')$ with probability at least $1-e^{-O(n\sum_{i=1}^{m}|T_i|/\epsilon)}$: Let $\vec{\pi}$ denote the reduced form of the output allocation rule when consumers are sampled from $\mathcal{D}$. Then $|\vec{\pi}-\vec{\pi}'|_\infty \leq \epsilon$. Furthermore, the running time of the algorithm is polynomial in $n,\sum_{i=1}^{m} |T_i|, 1/\epsilon,$ $c$, and $rt_{\mathcal{F}}(\poly(n\sum_{i=1}^{m} |T_i|, \log1/\epsilon, c))$, where $c$ is the bit complexity of the coordinates of its input. 
\end{theorem}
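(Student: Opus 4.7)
The plan is to handle the running time and the $\ell_\infty$ approximation guarantee separately, leveraging the exact results of Sections~\ref{sec:algorithms} and~\ref{sec:exact} on the proxy polytope $F(\mathcal{F},\mathcal{D}')$ and then transferring reduced forms from $\mathcal{D}'$ back to $\mathcal{D}$ via a concentration argument. Algorithm~\ref{alg:decomposition} essentially runs the geometric decomposition of~\cite{CaiDW12} (Theorem~\ref{thm:geometric}) on $F(\mathcal{F},\mathcal{D}')$ using the separation oracle $SO$ of Section~\ref{sec:separation} and the corner oracle $CO$ of Algorithm~\ref{alg: corner} instantiated with the proxy $\mathcal{D}'$ produced by Algorithm~\ref{alg:preprocess}. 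Since Algorithm~\ref{alg:preprocess} chooses $k$ and $k'$ polynomial in $n\sum_i |T_i|$ and $1/\epsilon$, the proxy $\mathcal{D}'$ is uniform over polynomially many profiles, so Corollaries~\ref{cor:uniform} and~\ref{cor:geometric alg for uniform} immediately yield the claimed running time bound in terms of $rt_{\mathcal{F}}$.

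For the approximation guarantee, the decomposition outputs at most $N \le n\sum_i |T_i| + 1$ simple virtual VCG allocation rules $VVCG_{\mathcal{F}}(\vec{w}_1),\ldots,VVCG_{\mathcal{F}}(\vec{w}_N)$ together with mixing weights $\lambda_1,\ldots,\lambda_N$ summing to $1$, satisfying $\vec{\pi}' = \sum_j \lambda_j R_{\mathcal{F},\mathcal{D}'}(\vec{w}_j)$, where $R_{\mathcal{F},\mathcal{D}''}(\cdot)$ denotes the reduced form under the relevant distribution $\mathcal{D}''$. When this same mixture is executed on bidders drawn from the true $\mathcal{D}$, its reduced form is $\vec{\pi} = \sum_j \lambda_j R_{\mathcal{F},\mathcal{D}}(\vec{w}_j)$, so by convexity $|\vec{\pi}-\vec{\pi}'|_\infty \le \max_j |R_{\mathcal{F},\mathcal{D}}(\vec{w}_j) - R_{\mathcal{F},\mathcal{D}'}(\vec{w}_j)|_\infty$. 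It therefore suffices to show that, with the claimed probability, every $\vec{w}_j$ produced by the decomposition satisfies $|R_{\mathcal{F},\mathcal{D}}(\vec{w}_j) - R_{\mathcal{F},\mathcal{D}'}(\vec{w}_j)|_\infty \le \epsilon$. For any \emph{a priori} fixed $\vec{w}$, each coordinate of $R_{\mathcal{F},\mathcal{D}'}(\vec{w})$ is an empirical mean over the samples defining $\mathcal{D}'$ whose expectation is the corresponding coordinate of $R_{\mathcal{F},\mathcal{D}}(\vec{w})$, so Hoeffding's inequality yields $\epsilon$-closeness except with probability $e^{-\Omega(n\sum_i |T_i|/\epsilon)}$ once $k$ and $k'$ are sufficiently large polynomials.

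The hard part, and the only place requiring real care, is that the weight vectors $\vec{w}_j$ are themselves functions of $\mathcal{D}'$, so a naive union bound over all possible $\vec{w}$ is unavailable. The resolution — which is exactly the content of Corollary~\ref{cor:otherway} in Appendix~\ref{app:otherway} — is to exploit the bit-complexity bounds already established. Lemma~\ref{lem:low bit SO} bounds the bit complexity of every hyperplane output by $SO$ in terms of the bit complexity of corners of $F(\mathcal{F},\mathcal{D}')$, and Theorem~\ref{thm:corneroracle} then bounds the bit complexity of every $\vec{w}'$ produced by $CO$ in terms of the bit complexity of its input hyperplanes and of the probabilities $\Pr[t_i = A]$ used by $\mathcal{D}'$. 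Because $\mathcal{D}'$ is uniform over polynomially many samples, all these quantities are polynomial in $n\sum_i |T_i|$, $\log(1/\epsilon)$, and the input complexity $c$; hence every $\vec{w}_j$ that can ever arise during the execution of Algorithm~\ref{alg:decomposition} lies in a universe of size at most $2^{\poly(n\sum_i |T_i|,\log(1/\epsilon),c)}$. Applying Hoeffding and union-bounding first over this universe and then over the $N$ decomposition components preserves the overall failure probability at $e^{-\Omega(n\sum_i |T_i|/\epsilon)}$ while keeping $k$ and $k'$ at the required polynomial size, thus completing the proof.
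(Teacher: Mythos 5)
Your proposal follows essentially the same route as the paper. The paper's proof of Theorem~\ref{thm:appxdecomp} is terse: it cites the correctness of the geometric algorithm of~\cite{CaiDW12}, then invokes part~(ii) of Corollary~\ref{cor:otherway} for the $\ell_\infty$ bound and Corollary~\ref{cor:geometric alg for uniform} for the running time. What you have done is unpack Corollary~\ref{cor:otherway} inline: decompose $\vec{\pi}'$ as a mixture $\sum_j \lambda_j R_{\mathcal{F},\mathcal{D}'}(\vec{w}_j)$, reduce to per-corner closeness by convexity, apply concentration for a fixed $\vec{w}$, and break the circularity (the $\vec{w}_j$ depend on $\mathcal{D}'$) via the bit-complexity bounds of Lemma~\ref{lem:low bit SO} and Theorem~\ref{thm:corneroracle}. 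That is exactly the content of Lemma~\ref{lem:costascorner} and Corollary~\ref{cor:otherway}, so the approach is the same.

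One step is stated slightly too strongly. You assert that for a fixed $\vec{w}$, each coordinate $R_{\mathcal{F},\mathcal{D}'}(\vec{w})_{ij}(A)$ is an empirical mean \emph{whose expectation equals} $R_{\mathcal{F},\mathcal{D}}(\vec{w})_{ij}(A)$, so that Hoeffding applies directly. This is not literally true because $\mathcal{D}'$ contains \emph{biased} profiles: for each $i'$ and $A' \in T_{i'}$, $k'$ profiles are drawn conditioning on $t_{i'}=A'$, and among those profiles, some will happen to have $t_i = A$ for $i \neq i'$ even though their $t_{-i}$ are not drawn from $\mathcal{D}_{-i}$. Consequently the empirical estimate at coordinate $(i,j,A)$ is a ratio with a biased numerator and denominator. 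The paper addresses this in Lemmas~\ref{lem:finalclose} and~\ref{lem:allfine} and Corollary~\ref{cor:allfine}, bounding the number of biased profiles $|B_i(A)|$ relative to the genuine $|G_i(A)|$ and showing the resulting systematic error is small once $k \gg k'\sum_i |T_i|$, on top of the Hoeffding fluctuation. Your argument implicitly absorbs this into ``once $k,k'$ are sufficiently large polynomials,'' which is the right conclusion, but the Hoeffding step as written omits the additive bias term $\bigl(2t+\tfrac{k'}{k}\bigr)\sum_i |T_i|$ that appears in Corollary~\ref{cor:wholeform}. This is a small technical gap in exposition, not in the overall structure of the argument.
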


\begin{remark}\label{costasremark:FPRAS?}
Observe that, despite the non-canonical way Theorems~\ref{thm:appxSO} and~\ref{thm:appxdecomp} are stated for FPRASs, the dependence of the running time on the approximation error and the failure probability is the typical one. Namely, using the stated results as black box we can simultaneously achieve error probability at most $\eta$ and approximation error at most $\epsilon$ in time polynomial in $\log 1/\eta$, $1/\epsilon$, $n$, $\sum_i |T_i|$, $c$ and $rt_{\mathcal{F}}(\poly(n\sum_{i=1}^{m} |T_i|, \log1/\epsilon, \log \log (1/\eta), c))$, where $c$ is the bit complexity of the coordinates of the input to our algorithms. 
\end{remark}

\section{Revenue-Maximizing Mechanisms}\label{sec:revenue}
In this section we describe how to use the results of Section~\ref{sec:approximations} to obtain computationally efficient nearly-optimal solutions to MDMDP using only black box access to an implementation of the VCG allocation rule. {As our notable contribution to obtain these results is the techniques of Sections~\ref{sec:independent} through~\ref{sec:approximations}, we only state our results here. In Appendix~\ref{app:revenue}, we provide a high-level overview of how to combine our results with the LPs of~\cite{CaiDW12,DW12} to solve the MDMDP, followed by proofs}. In all {theorem statements}, the allocation rule of the mechanism output by our algorithm is a distribution over simple virtual VCG allocation rules. There is no special structure in the pricing rule, it is just the output of a linear program. As usual, we denote by $A_{\mathcal{F}}$ an algorithm that implements the VCG allocation rule with feasibility constraints $\mathcal{F}$, and denote by $rt_{\mathcal{F}}(b)$ the runtime of $A_{\mathcal{F}}$ when each input weight has bit complexity $b$. {We note that the mechanisms output by the following theorems can be made interim or ex-post individually rational without any difference in revenue. We are also able to accommodate bidders with hard budget constraints in our solutions. The proofs presented in Appendix~\ref{app:revenue} provide interim individually rational mechanisms without budget constraints. Ex-post individual rationality and budgets are discussed in Appendix~\ref{app:budgets}.}

\begin{theorem}\label{thm:general} For all $\epsilon, \eta>0$, all $\mathcal{D}$ of finite support in $[0,1]^{nm}$, and all $\mathcal{F}$, given {$\mathcal{D}$ and} black box access to $A_{\mathcal{F}}$ there is an additive FPRAS for MDMDP. In particular, the FPRAS obtains expected revenue $\opt - \epsilon$, {with probability at least $1-\eta$}, in time polynomial in {$\ell$}, $m, n, \max_{i\in[m]}\{|T_i|\},1/\epsilon, \log (1/\eta)$ and {$rt_{\mathcal{F}}(\poly(n\sum_{i=1}^{m} |T_i|, \log1/\epsilon, \log \log (1/\eta), \ell))$, where $\ell$ is an upper bound on the bit complexity of the coordinates of the points in the support of ${\cal D}$, as well as of the probabilities assigned by ${\cal D}_1,\ldots,{\cal D}_m$ to the points in their support. The output mechanism is $\epsilon$-BIC}, its allocation rule is a distribution over simple virtual VCG allocation rules, and it can be implemented in the afore-stated running time.
\end{theorem}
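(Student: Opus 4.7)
\textbf{Proof proposal for Theorem~\ref{thm:general}.}

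The plan is to combine the approximate separation oracle and decomposition algorithm of Section~\ref{sec:approximations} with a linear program (LP) in the spirit of~\cite{CaiDW12,DW12} whose variables are the interim allocation rule $\vec{\pi}$ and the interim prices $p_i(A)$, and whose objective is expected revenue $\sum_i \sum_{A \in T_i} \Pr[t_i = A] \cdot p_i(A)$. The constraints are: (i) the BIC inequalities $\sum_j A_j \pi_{ij}(A) - p_i(A) \geq \sum_j A_j \pi_{ij}(B) - p_i(B)$ for all $i$ and $A,B \in T_i$, tightened by an additive slack $\delta$ to be chosen later; (ii) the IR inequalities $\sum_j A_j \pi_{ij}(A) - p_i(A) \geq 0$, again tightened by $\delta$; and (iii) the feasibility constraint $\vec{\pi} \in F(\mathcal{F},\mathcal{D})$. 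Constraints (i) and (ii) are polynomial in number and can be written down explicitly; constraint (iii) has no compact description and is handled by separation.

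First, I would run Algorithm~\ref{alg:preprocess} to produce the proxy distribution $\mathcal{D}'$. By Theorem~\ref{thm:appxSO}, with probability $1 - e^{-\Omega(n\sum_i |T_i|/\epsilon')}$, the polytopes $F(\mathcal{F},\mathcal{D})$ and $F(\mathcal{F},\mathcal{D}')$ are $\epsilon'$-close in $\ell_\infty$ in both directions. I would then solve the LP using the ellipsoid method, where feasibility is enforced via the exact separation oracle for $F(\mathcal{F},\mathcal{D}')$ given by Theorem~\ref{thm:appxSO}. Call the resulting solution $(\vec{\pi}^*, p^*)$. Because any truly optimal reduced form $\vec{\pi}^{\opt}\in F(\mathcal{F},\mathcal{D})$ is within $\epsilon'$ of some $\widetilde{\vec{\pi}}\in F(\mathcal{F},\mathcal{D}')$, and because BIC/IR are $n$-Lipschitz in $\vec{\pi}$ (each coordinate of $A \in [0,1]^n$ is bounded), slightly relaxing prices by $O(n\epsilon')$ makes $(\widetilde{\vec{\pi}}, p^{\opt})$ feasible for the tightened LP at a revenue loss of at most $O(n\epsilon')$. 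Hence the LP optimum is within $O(n\epsilon' + \delta)$ of $\opt$.

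Second, I would decompose $\vec{\pi}^*$ into an explicit convex combination of at most $n\sum_i |T_i|+1$ simple virtual VCG allocation rules using Algorithm~\ref{alg:decomposition}. By Theorem~\ref{thm:appxdecomp}, the true reduced form $\vec{\pi}$ of the resulting randomized allocation rule under $\mathcal{D}$ satisfies $|\vec{\pi} - \vec{\pi}^*|_\infty \leq \epsilon'$ with high probability. Pairing this allocation rule with the prices $p^*$ from the LP, each side of every BIC and IR inequality changes by at most $n\epsilon'$ relative to the LP solution; since those inequalities held with slack $\delta$, setting $\delta = 2n\epsilon'$ ensures the implemented mechanism is $\epsilon$-BIC (and interim IR) once $\epsilon' = \Theta(\epsilon/(n \sum_i |T_i|))$ is chosen small enough. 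The expected revenue changes only through the reduced form entering the objective indirectly (the objective depends only on $p^*$, which is fixed), so the loss is at most $O(n\epsilon')$. Accounting for all three sources of error (LP-approximation, decomposition, and tightening) and choosing $\epsilon' = \Theta(\epsilon/(n\sum_i |T_i|))$ gives total additive loss at most $\epsilon$.

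Finally, the failure probability of the oracles is $e^{-\Omega(n\sum_i|T_i|/\epsilon')}$, which is already much smaller than any polynomial in the inputs; standard repetition (running the whole algorithm $O(\log 1/\eta)$ times and choosing the best outcome, verified by sampling-based revenue estimation) drives it below $\eta$ with only a $\log(1/\eta)$ overhead. Putting everything together yields the claimed runtime, polynomial in $\ell, m, n, \max_i |T_i|, 1/\epsilon, \log(1/\eta)$, and $rt_{\mathcal{F}}(\poly(n\sum_i|T_i|,\log(1/\epsilon),\log\log(1/\eta),\ell))$. The main obstacle is precisely the bookkeeping at the interface between the two approximations: the LP is solved over $F(\mathcal{F},\mathcal{D}')$ but the implemented mechanism lives over $\mathcal{D}$, and BIC is a strict linear constraint that can be broken by any perturbation of $\vec{\pi}$. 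Tightening BIC/IR by $2n\epsilon'$ inside the LP, and carefully tracking how this tightening trades against the $O(n\epsilon')$ revenue loss, is the crux of the argument.
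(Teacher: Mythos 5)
Your high-level plan matches the paper's Algorithm~\ref{alg: MDMD}: run Algorithm~\ref{alg:preprocess} to produce $\mathcal{D}'$, solve an LP over $F(\mathcal{F},\mathcal{D}')$ using the separation oracle of Section~\ref{sec:separation}, decompose the optimal reduced form via Algorithm~\ref{alg:decomposition}, and account for the $\ell_\infty$ error from Theorems~\ref{thm:appxSO} and~\ref{thm:appxdecomp}. However, there is a genuine gap in the way you formulate and analyze the LP: you propose to \emph{tighten} both the BIC and IR inequalities by $\delta$, and then argue the tightened LP remains near-optimal because ``slightly relaxing prices by $O(n\epsilon')$'' makes a perturbation $(\widetilde{\vec{\pi}},p^{\opt})$ of the true optimum feasible. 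That argument is fine for IR, but it fails for BIC: a uniform shift of all prices $p_i(\cdot) \mapsto p_i(\cdot) - c$ translates \emph{both} sides of every BIC constraint $\vec{v}_i\cdot\vec{\pi}_i(\vec{v}_i) - p_i(\vec{v}_i) \geq \vec{v}_i\cdot\vec{\pi}_i(\vec{w}_i) - p_i(\vec{w}_i)$ by the same amount $c$, so it does not create any BIC slack. Since the optimal mechanism may have tight BIC constraints, and perturbing $\vec{\pi}^{\opt}$ to $\widetilde{\vec{\pi}}$ can push those constraints negative by $\Theta(n\epsilon')$, there is no elementary modification of prices that restores tightened BIC while preserving near-optimal revenue, and you give none. (Your own conclusion also hints at a sign confusion: if the LP solution really held every BIC inequality with extra slack $\delta = 2n\epsilon'$, then after losing $2n\epsilon'$ in the decomposition the output would be \emph{exactly} BIC, not $\epsilon$-BIC.)

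The paper avoids this issue by going in the opposite direction: the LP of Figure~\ref{fig:LPMDMD} \emph{relaxes} the BIC inequalities by $\delta$ (solving over $\delta$-BIC interim rules), never tightens them, and only adjusts prices---by a uniform subtraction of $\delta$---to restore IR after the decomposition perturbs $\vec{\pi}$. This gives near-optimality of the LP essentially for free (the perturbed true optimum $(\widetilde{\vec{\pi}},p^{\opt}-\tfrac{\delta}{2}\vec{1})$ is automatically $\delta$-BIC by Lipschitzness in $\vec{\pi}$), at the cost of the output being only $\epsilon$-BIC rather than exactly BIC, which is precisely what Theorem~\ref{thm:general} claims. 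You can salvage your approach by flipping the BIC relaxation direction (relax BIC, tighten or price-correct IR), but as written the near-optimality argument for the tightened BIC LP does not go through.
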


\begin{theorem}\label{thm:itemsym} For all $\epsilon, \eta>0$, all item-symmetric $\mathcal{D}$ of finite support in $[0,1]^{nm}$, and all item-symmetric $\mathcal{F}$,\footnote{Distributions and feasibility constraints are item-symmetric if they are invariant under every item permutation.} given {$\mathcal{D}$ and} black box access to $A_{\mathcal{F}}$, there is an additive FPRAS for MDMDP. The FPRAS obtains expected revenue $\opt - \epsilon$, {with probability at least $1-\eta$}, in time polynomial in ${\ell}, m,n^c,1/\epsilon, \log 1/\eta$ and $rt_{\mathcal{F}}(\poly(n^{c}m, \log1/\epsilon, \log \log (1/\eta), \ell))$, where $c = \max_{i,j}|\mathcal{D}_{ij}|$, where $|\mathcal{D}_{ij}|$ is the cardinality of the support of the marginal of $\mathcal{D}$ on bidder $i$ and item $j$, and $\ell$ is as in the statement of Theorem~\ref{thm:general}. The output mechanism  is $\epsilon$-BIC, its allocation rule is a distribution over simple virtual VCG allocation rules, and it can be implemented in the afore-stated running time.
\end{theorem}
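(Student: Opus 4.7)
The strategy is to exploit item-symmetry of both $\mathcal{D}$ and $\mathcal{F}$ to replace the type space $T_i$ (whose cardinality can be as large as $c^n$) by its sorted (multiset) representatives, whose cardinality is $\binom{n+c-1}{c-1}=O(n^c)$, and to carry out all reduced-form computations on the resulting low-dimensional projection.

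First, a standard symmetrization argument (as in \cite{DW12}) shows that the optimal mechanism may be taken to be item-symmetric at no loss of revenue: given any BIC mechanism $M$, the randomized mechanism $M^{\mathrm{sym}}$ obtained by drawing a uniformly random permutation $\sigma$ of items, relabeling the reported types via $\sigma^{-1}$, running $M$, and permuting the output allocation back via $\sigma$, is feasible (since $\mathcal{F}$ is item-symmetric), BIC (since $\mathcal{D}$ is item-symmetric), and has the same expected revenue. For item-symmetric mechanisms, $\pi_{ij}(\vec v_i)$ is determined by the bidder $i$, the multiset of values in $\vec v_i$, and the value at item $j$, so the natural projection of $F(\mathcal{F},\mathcal{D})$ onto these coordinates, which I will call $F_{\mathrm{sym}}(\mathcal{F},\mathcal{D})$, has dimension $O(mc n^c)$.

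Second, I would adapt the characterization and algorithms of Sections~\ref{sec:independent}--\ref{sec:approximations} to $F_{\mathrm{sym}}(\mathcal{F},\mathcal{D})$. The key observation is that for any item-symmetric weight vector $\vec w$, $VVCG_{\mathcal{F}}(\vec w)$ (under a symmetric lexicographic tie-breaking rule) is itself item-symmetric, so its reduced form already lies in $F_{\mathrm{sym}}(\mathcal{F},\mathcal{D})$. A symmetric analog of Proposition~\ref{prop:VCG} then identifies the extreme points of $F_{\mathrm{sym}}(\mathcal{F},\mathcal{D})$ with reduced forms of symmetric simple virtual VCG rules, and a symmetric analog of Corollary~\ref{cor:independent} yields the separation oracle. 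To avoid paying the $(n^c)^m$ cost of enumerating sorted profiles directly, I would invoke the sampling scheme of Section~\ref{sec:approximate}: replace $\mathcal{D}$ by a proxy $\mathcal{D}'$ drawn from $\mathcal{D}$, and prove symmetric analogs of Corollaries~\ref{cor:oneway} and~\ref{cor:otherway}. Because $F_{\mathrm{sym}}(\mathcal{F},\mathcal{D})$ has only $O(mc n^c)$ coordinates, a Hoeffding bound plus union bound over these coordinates shows that a sample of size polynomial in $mn^c$ suffices for $F_{\mathrm{sym}}(\mathcal{F},\mathcal{D}')$ to be $\epsilon$-close to $F_{\mathrm{sym}}(\mathcal{F},\mathcal{D})$ in $\ell_\infty$.

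Finally, following the template of Theorem~\ref{thm:general}, I would plug the symmetric separation and decomposition oracles into the symmetric LP of \cite{DW12}: this LP has $O(mc n^c)$ reduced-form variables and $O(m n^{2c})$ BIC constraints (one per ordered pair of sorted types per bidder), so the Ellipsoid algorithm with the symmetric separation oracle solves it in time polynomial in $mn^c$, $\ell$, $1/\epsilon$, $\log(1/\eta)$, and $rt_{\mathcal{F}}$, and the symmetric decomposition algorithm then outputs the desired distribution over simple virtual VCG allocation rules. The principal obstacle is the symmetric concentration step: although each symmetric coordinate is an average of indicator variables, the samples are drawn from the full profile space, so one must carefully track how sample complexity depends on the symmetric projection dimension (rather than on $n\sum_i |T_i|$) in order to obtain the claimed runtime, and argue that the symmetric VVCG reduced forms remain computable via black-box calls to $A_{\mathcal{F}}$ with bit-complexity blowup that matches the bounds stated in the theorem.
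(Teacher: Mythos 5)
Your high-level outline tracks the paper's actual proof closely: symmetrize to reduce to item-symmetric (and, in the paper, also strongly-monotone) mechanisms, work with sorted representative types $E_i$ (dimension $\mathrm{poly}(mn^c)$), identify corners of the symmetric feasibility polytope with item-symmetric virtual VCG rules, use sampling to build a proxy $\mathcal{D}'$, and plug the resulting separation/decomposition oracles into the succinct LP. That skeleton is correct.

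However, there is a genuine gap in the step where you assert that ``for any item-symmetric weight vector $\vec w$, $VVCG_{\mathcal{F}}(\vec w)$ (under a symmetric lexicographic tie-breaking rule) is itself item-symmetric.'' No deterministic tie-breaking rule can be both item-symmetric and yield a \emph{simple} (unique-winner) allocation rule: when $\vec w$ is item-symmetric and two allocations that differ only by an item permutation are tied in weight, any item-symmetric rule must treat them symmetrically, while simplicity demands it single out one. The paper's resolution is to add a fresh layer of randomness: apply the (non-symmetric) perturbation of Lemma~\ref{lem:tiebreaking} to get $\vec w'$, draw a uniformly random item permutation $\sigma$, relabel items by $\sigma$, run the simple (non-symmetric) $VVCG_{\mathcal{F}}(\vec w')$, and un-relabel. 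The result ($S.VVCG_{\mathcal{F}}(\vec w)$) is item-symmetric as a randomized rule but is explicitly \emph{not} simple in the Section~\ref{sec:independent} sense; the paper is careful to note this. Your argument, as written, assumes a deterministic simple symmetric tie-break that does not exist.

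A second issue is the one you flag yourself but do not resolve: computing the succinct reduced form of $S.VVCG_{\mathcal{F}}(\vec w)$ with respect to $\mathcal{D}'$ appears to require averaging over all $n!$ item permutations per sampled profile. The paper's fix is to sample \emph{one} permutation $\sigma_\alpha$ per sampled profile $P_\alpha$ and fix it, so that each profile requires a single call to $A_{\mathcal{F}}$; it then recovers the symmetric reduced-form entries $\pi_{ij}(\vec v_i)$ from a single deterministic allocation per profile via an explicit combinatorial count (the quantities $x_\alpha(i,\vec v_i)$ and $y_\alpha(i,\vec v_i)$ counting item permutations $\tau$ with $\tau(\vec v_i)$ matching $t_i(\alpha)$ and $\tau(j)$ awarded). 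Without this de-randomization and counting argument, the claimed running time does not follow. So the proposal is directionally correct but has a wrong claim about tie-breaking and leaves the key computational step open.
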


\begin{theorem}\label{thm:bounded} For all $\epsilon, \eta, \delta>0$, all item-symmetric $\mathcal{D}$ supported on $[0,1]^{nm}$ and all item-symmetric $\mathcal{F}$, given {$\mathcal{D}$ and} black box access to $A_{\mathcal{F}}$, there is an additive bi-criterion PRAS algorithm for MDMDP with the following {guarantee}: If $C$ is the maximum number of items that are allowed to be allocated simultaneously by $\mathcal{F}$, the algorithm obtains expected revenue $\opt - {(\sqrt{\epsilon} + \sqrt{\delta}) C}$, 
{with probability $1-\eta$}, in time polynomial in $m, n^{1/\delta}, 1/\epsilon, \log (1/\eta),$ and $rt_{\mathcal{F}}(\poly({n^{1/\delta}m},{\log 1/\epsilon}, \log \log 1/\eta))$. In particular, the runtime does \emph{not} depend on $|\mathcal{D}|$ at all). The output mechanism is $\epsilon$-BIC, and can be implemented in the afore-stated running time.
\end{theorem}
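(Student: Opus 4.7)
The plan is to reduce this infinite-support item-symmetric setting to the finite-support case of Theorem~\ref{thm:itemsym} via discretization, exploiting the fact that item-symmetry drastically cuts the effective type count. Given $\delta>0$, I would define the discretized distribution $\mathcal{D}^{(\delta)}$ by drawing $\vec{v}\sim\mathcal{D}$ and rounding each coordinate of $\vec{v}$ down to the nearest multiple of $\delta$; the resulting distribution is supported on $\{0,\delta,2\delta,\ldots,1\}^{nm}$ and, because coordinate-wise rounding commutes with item-permutations, it remains item-symmetric, and the corresponding feasibility constraints $\mathcal{F}$ are unchanged.

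\textbf{Counting symmetry classes and applying Theorem~\ref{thm:itemsym}.} Thanks to item-symmetry, each bidder's type in $\mathcal{D}^{(\delta)}$ is characterized, up to the item-permutations that the mechanism already treats identically, by the multiset of its $n$ coordinate values drawn from $\{0,\delta,\ldots,1\}$. The number of such multisets is $\binom{n+\lfloor1/\delta\rfloor}{\lfloor1/\delta\rfloor}=\poly(n^{1/\delta})$, so invoking Theorem~\ref{thm:itemsym} on $\mathcal{D}^{(\delta)}$ with the parameters given to us produces, in the claimed runtime, an $\epsilon$-BIC mechanism $M^{(\delta)}$---a distribution over simple virtual VCG allocation rules---whose expected revenue under $\mathcal{D}^{(\delta)}$ is within $\epsilon$ of $\opt(\mathcal{D}^{(\delta)})$. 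I would then deploy this mechanism on real samples from $\mathcal{D}$ by having each bidder's report first rounded down to the $\delta$-grid before being fed to $M^{(\delta)}$. Since the symmetry-class count is what drives the Theorem~\ref{thm:itemsym} runtime, no dependence on $|\mathcal{D}|$ remains.

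\textbf{Revenue-loss analysis and the main obstacle.} What remains is to bound (i) $\opt(\mathcal{D})-\opt(\mathcal{D}^{(\delta)})$, and (ii) the revenue difference between running $M^{(\delta)}$ on $\mathcal{D}^{(\delta)}$ versus on rounded samples from $\mathcal{D}$. For (i), I would lift the optimal BIC mechanism $M^{*}$ for $\mathcal{D}$ to a mechanism on $\mathcal{D}^{(\delta)}$ by giving it the rounded report; each coordinate is distorted by at most $\delta$, and since at most $C$ coordinates are ever allocated and values lie in $[0,1]$, one obtains a revenue gap of order $\sqrt{\delta}\cdot C$ via a Cauchy--Schwarz/AM-GM trade-off between coordinate distortion and allocation probability (the square root absorbing a mild price-shading required to restore exact BIC on the rounded distribution). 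For (ii), the $\epsilon$-BIC guarantee on $\mathcal{D}^{(\delta)}$ combined with value-boundedness lets a bidder whose true type is sampled from $\mathcal{D}$ gain at most $\epsilon+O(\delta)\cdot C$ in expected utility by deviating from truthful reporting; converting this utility slack into a revenue shortfall via the same trade-off yields an additional $O(\sqrt{\epsilon})\cdot C$ term. The main obstacle is precisely this conversion step: arguing that running $M^{(\delta)}$ on the rounded reports from $\mathcal{D}$ preserves $\epsilon$-BIC with respect to the original, continuous $\mathcal{D}$ rather than just with respect to $\mathcal{D}^{(\delta)}$, and that the combined revenue shortfall is captured cleanly by $(\sqrt{\epsilon}+\sqrt{\delta})\cdot C$ uniformly over $|\mathcal{D}|$, which is what enables the runtime to be independent of the support size.
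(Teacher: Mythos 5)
There is a genuine gap in your proposal, and you have in fact put your finger on it yourself in the last sentences: after discretizing $\mathcal{D}$ and invoking Theorem~\ref{thm:itemsym}, simply rounding a real-world bidder's report to the $\delta$-grid and feeding it to $M^{(\delta)}$ does \emph{not} yield an $\epsilon$-BIC mechanism with respect to $\mathcal{D}$. A direct argument of this kind only gives $(\epsilon+\delta)$-BIC, and since $\delta$ is a fixed constant (your runtime is $\poly(n^{1/\delta})$, so $\delta$ cannot go to zero with the input size the way $\epsilon$ can), the resulting mechanism would be only $O(\delta)$-BIC — a qualitatively weaker guarantee than the $\epsilon$-BIC claimed in the theorem. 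The ``Cauchy--Schwarz/AM-GM trade-off'' you invoke does not exist as a free-standing argument: there is no price-shading or report-rounding scheme that converts constant-size misreporting incentives into vanishing ones while only sacrificing $O(\sqrt{\delta})C$ revenue.

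The missing tool is the approximately-BIC-to-BIC reduction from Section~6 of~\cite{DW12} (built on the replica-surrogate construction of Hartline et al.). That reduction runs a separate, auxiliary VCG auction per bidder, in which bidder $i$ ``buys'' a surrogate type from a pool of surrogates drawn from the discretized distribution $\mathcal{D}^{(\delta)}_i$, competing against replicas of herself drawn from the true $\mathcal{D}_i$; the purchased surrogates then play $M^{(\delta)}$. Truthfulness of the composed mechanism follows from truthfulness of VCG, and the $\sqrt{\delta}$ and $\sqrt{\epsilon}$ revenue-loss terms arise from the balance between the number of replicas/surrogates and the discretization error, not from any Cauchy--Schwarz bound. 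The paper additionally has to prove Lemma~\ref{lem:reduction}, a robustness version of this reduction: because we can only estimate the reduced form of $M^{(\delta)}$ with respect to $\mathcal{D}^{(\delta)}$ within $\ell_1$-error $\epsilon$ (not exactly), the per-bidder VCG auction uses perturbed surrogate values, and the output is therefore $2\epsilon$-BIC rather than exactly BIC. This is precisely why the theorem's conclusion is $\epsilon$-BIC rather than BIC. Your proposal's discretization and symmetry-class count for the Theorem~\ref{thm:itemsym} step are correct, but without the replica-surrogate reduction (and its robustness to approximate reduced-form knowledge), the incentive and revenue guarantees of Theorem~\ref{thm:bounded} cannot be established.
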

\begin{remark} The assumption that $\mathcal{D}$ is supported in $[0,1]^{mn}$ as opposed to some other bounded set is w.l.o.g., as we could just scale the values down by a multiplicative $v_{\max}$. This would cause the additive approximation error to be $\epsilon v_{\max}$. In addition, the point of the additive error in the revenue of Theorem~\ref{thm:bounded} is \emph{not} to set $\epsilon,\delta$ so small that they cancel out the factor of $C$, but rather to accept the factor of $C$ as lost revenue. For ``reasonable'' distributions, the optimal revenue scales with $C$, so it is natural to expect that the additive loss should scale with $C$ as well.
\end{remark}

\section{Characterization for Correlated Biddders}\label{sec:correlated}
Here we provide the analogue of Theorem~\ref{thm:characterization} for correlated bidders. We begin by observing that, in fact, Theorem~\ref{thm:characterization} already holds for correlated bidders. The reduced form is still well-defined, and nothing about the proofs in Section~\ref{sec:independent} requires independence across bidders. What's wrong is that the information contained in the reduced form is not sufficient for correlated bidders. Indeed, for independent bidders, the information contained in the reduced form is sufficient to verify both feasibility (due to Section~\ref{sec:separation}) and bayesian incentive compatibility. For correlated bidders, while feasibility can still be verified, bayesian incentive compatibility cannot. This is because in order for bidder $i$ to decide whether she wishes to report type $A$ or type $B$ when her true type is $B$, she needs to know the probability of receiving each item if she reports $A$, conditioned on the fact that the remaining bidders are sampled according to the conditional distribution induced by $t_i = B$. This information is simply not contained in the reduced form. To cope with this issue, we first propose an extended definition of reduced form for the case of correlated bidders, and a proper analogue of a virtual VCG allocation rule.

\begin{definition} A \textbf{second-order reduced form} is a vector valued function $\pi( \cdot )$ such that $\pi_{ij}(A,B)$ denotes the probability that bidder $i$ receives item $j$ when reporting type $A$, where the probability is taken over the randomness of the mechanism and the other bidders' types, assuming they are sampled from $\mathcal{D}_{-i}(B)$ and bid truthfully.
\end{definition}

\begin{definition} A \textbf{second-order VCG allocation rule} is defined by a collection of second-order weight functions $w_{ij}: T_i \times T_i \rightarrow \mathbb{R}$. $w_{ij}$ maps a reported type of bidder $i$ and true type of bidder $i$ to a second-order bid for item $j$. On any profile $\vec{v}$, the second-order VCG allocation rule with weights $\vec{w}$ (denoted $SOVCG_{\mathcal{F}}(\vec{w})$) on input $\vec{v}$ selects the max-weight feasible allocation using the weights:
{
$$f_{ij}(\vec{v}) = \sum_{B \in T_i} w_{ij}(\vec{v}_i,B) \Pr[\vec{v}_{-i} \leftarrow \mathcal{D}_{-i}(B)]$$}
\end{definition}

{We say that a second-order VCG allocation rule with weights $\vec{w}$ is simple, if on every profile $\vec{v}$, there is a unique max-weight allocation, where the weight for allocating item $j$ to bidder $i$ is $f_{ij}(\vec{v})$.}
We now quickly observe a connection between second-order VCG allocation rules and virtual VCG allocation rules when bidders are independent, and follow with a statement of the analogue of Theorem~\ref{thm:characterization} for second-order reduced forms.

\begin{observation} If bidders are independent, then for any second-order weight vector $\vec{w}$ and virtual weight vector $\vec{w}'$ with $w'_{ij}(A) = \sum_{B \in T_i} w_{ij}(A,B)$, the allocation rules $VVCG_{\mathcal{F}}(\vec{w}')$ and $SOVCG_{\mathcal{F}}(\vec{w})$ are identical for all $\mathcal{F}$.
\end{observation}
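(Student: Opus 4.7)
The plan is to unwind both allocation rules on an arbitrary type profile $\vec{v}$ in the support of $\mathcal{D}$, show that the vectors of weights fed into the underlying VCG max-weight computation differ by a single global positive scalar, and conclude that both $VVCG_{\mathcal{F}}(\vec{w}')$ and $SOVCG_{\mathcal{F}}(\vec{w})$ pick out the same feasible allocation (assuming a consistent tie-breaking rule, as discussed in Section~\ref{sec:tiebreaking}).

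First I would compute the two relevant weight vectors explicitly. By Definition~\ref{def:vvcg}, the weight that $VVCG_{\mathcal{F}}(\vec{w}')$ assigns to awarding item $j$ to bidder $i$ on profile $\vec{v}$ is
\[
f^{V}_{ij}(\vec{v}_i) \;=\; \frac{w'_{ij}(\vec{v}_i)}{\Pr[t_i=\vec{v}_i]} \;=\; \frac{1}{\Pr[t_i=\vec{v}_i]}\sum_{B\in T_i} w_{ij}(\vec{v}_i,B).
\]
On the other hand, the second-order VCG weight is
\[
f^{S}_{ij}(\vec{v}) \;=\; \sum_{B\in T_i} w_{ij}(\vec{v}_i,B)\,\Pr[\vec{v}_{-i}\leftarrow \mathcal{D}_{-i}(B)].
\]
Under independence, the conditional distribution $\mathcal{D}_{-i}(B)$ does not depend on $B$ at all and factorizes as $\prod_{k\neq i}\mathcal{D}_k$, so $\Pr[\vec{v}_{-i}\leftarrow \mathcal{D}_{-i}(B)] = \prod_{k\neq i}\Pr[t_k=\vec{v}_k]$. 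Pulling this factor outside the sum over $B$ yields
\[
f^{S}_{ij}(\vec{v}) \;=\; \Big(\prod_{k\neq i}\Pr[t_k=\vec{v}_k]\Big)\, w'_{ij}(\vec{v}_i).
\]

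The key step is then the observation that, relative to the $VVCG$ weights,
\[
f^{S}_{ij}(\vec{v}) \;=\; \Big(\prod_{k}\Pr[t_k=\vec{v}_k]\Big)\cdot f^{V}_{ij}(\vec{v}_i).
\]
The multiplier $\prod_{k}\Pr[t_k=\vec{v}_k]$ is the same positive scalar for every bidder $i$ and every item $j$ on this fixed profile (it only depends on $\vec{v}$, and is strictly positive whenever $\vec{v}$ is in the support of $\mathcal{D}$, which is the only case we need to worry about since $f^{V}_{ij}$ requires $\Pr[t_i=\vec{v}_i]>0$ to even be defined). Since $VCG_{\mathcal{F}}$ on a weight vector selects the feasible allocation maximizing a linear objective in those weights, multiplying all weights by a common positive constant preserves the argmax. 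Thus the max-weight feasible allocation computed by the two rules coincides, and because the lexicographic tie-breaking rule of Section~\ref{sec:tiebreaking} depends only on the identity of the tied allocations, the two rules select exactly the same allocation on $\vec{v}$.

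There is no real obstacle here; the only subtlety is keeping track of where the per-bidder normalizations $\Pr[t_i=\vec{v}_i]$ in the $VVCG$ definition and the per-profile probabilities $\prod_{k\neq i}\Pr[t_k=\vec{v}_k]$ in the $SOVCG$ definition combine into a single profile-wide constant $\prod_k \Pr[t_k=\vec{v}_k]$ that is independent of $(i,j)$. Once this is observed, the identity of the two allocation rules on every profile in the support of $\mathcal{D}$, and hence identity as allocation rules, is immediate.
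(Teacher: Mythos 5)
Your proof is correct and follows essentially the same route as the paper's: compute both weight vectors, note that under independence $\Pr[\vec{v}_{-i}\leftarrow\mathcal{D}_{-i}(B)]$ loses its $B$-dependence, and observe that the two weight vectors differ by the profile-wide positive scalar $\Pr[\vec{v}\leftarrow\mathcal{D}]=\prod_k\Pr[t_k=\vec{v}_k]$, which preserves the VCG argmax. The only cosmetic difference is that you write the common factor as an explicit product over bidders, whereas the paper expresses it via $\Pr[\vec{v}_{-i}\leftarrow\mathcal{D}_{-i}]=\Pr[\vec{v}\leftarrow\mathcal{D}]/\Pr[t_i=\vec{v}_i]$; these are the same thing under independence.
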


\begin{proof}
When bidders are independent, $\Pr[\vec{v}_{-i} \leftarrow \mathcal{D}_{-i}(B)] = \Pr[\vec{v}_{-i} \leftarrow \mathcal{D}_{-i}]$ for all $B$. In addition, $\Pr[\vec{v}_{-i} \leftarrow \mathcal{D}_{-i}] = \Pr[\vec{v}\leftarrow \mathcal{D}]/\Pr[t_i = \vec{v}_i]$. Therefore, the weight $f_{ij}$ used by $SOVCG_{\mathcal{F}}(\vec{w})$ on bid vector $\vec{v}$ is just:
{
$$f_{ij}(\vec{v}) = \frac{\Pr[\vec{v} \leftarrow \mathcal{D}]}{\Pr[t_i = \vec{v}_i]} \sum_{B \in T_i} w_{ij}(\vec{v}_i,B)$$}

The weight $f'_{ij}$ used by $VVCG_{\mathcal{F}}(\vec{w}')$ is:
\begin{align*}
f'_{ij} &= w'_{ij}(\vec{v}_i)/\Pr[t_i = \vec{v}_i]\\
&= \frac{1}{\Pr[t_i = \vec{v}_i]} \sum_{B \in T_i} w_{ij}(\vec{v}_i,B)\\
&=\frac{f_{ij}(\vec{v})}{\Pr[\vec{v}\leftarrow \mathcal{D}]}
\end{align*}

So the weights used by $VVCG_{\mathcal{F}}(\vec{w}')$ are proportional to the weights used by $SOVCG_{\mathcal{F}}(\vec{w})$ and they will choose the same allocation on every profile.
\end{proof}

\begin{theorem}\label{thm:correlated} Let $\mathcal{F}$ be any set system of feasibility constraints, and $\mathcal{D}$ any arbitrarily correlated distribution over consumer types with finite support. Then every feasible second-order reduced form (with respect to $\mathcal{F}$ and $\mathcal{D}$) can be implemented by a distribution over {at most $\sum_{i=1}^{m}|T_{i}|^{2}+1$} simple second-order VCG allocation rules.
\end{theorem}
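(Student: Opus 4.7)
The plan is to mirror the proof of Theorem~\ref{thm:characterization} with the second-order reduced form and $SOVCG_{\mathcal{F}}$ taking the place of the (first-order) reduced form and $VVCG_{\mathcal{F}}$. Let $F_{SO}(\mathcal{F},\mathcal{D})$ denote the set of feasible second-order reduced forms, viewed as vectors in $\mathbb{R}^{n\sum_i |T_i|^2}$. First, Observation~\ref{obs:deterministic} applies verbatim: any feasible allocation rule is a distribution over feasible deterministic allocation rules, since ``feasible'' is a per-profile condition unaffected by the correlation structure. The map from a deterministic allocation rule $M$ to its second-order reduced form is linear in $M$ (each coordinate $\pi_{ij}(A,B)$ is an expectation of $M_{ij}(A,\vec{v}_{-i})$ against the fixed measure $\mathcal{D}_{-i}(B)$), so $F_{SO}(\mathcal{F},\mathcal{D})$ is the image of a finite-dimensional simplex under a linear map, hence a convex polytope — this is exactly the analogue of Proposition~\ref{prop:convex polytope}.

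The crucial step is the analogue of Lemma~\ref{lem:vvcg}: for any second-order weight vector $\vec{w} \in \mathbb{R}^{n\sum_i |T_i|^2}$, I claim that the second-order reduced form $\vec{\pi}$ of $SOVCG_{\mathcal{F}}(\vec{w})$ maximizes $\vec{\pi}'\cdot \vec{w}$ over all $\vec{\pi}' \in F_{SO}(\mathcal{F},\mathcal{D})$. To verify this, write any feasible deterministic allocation rule $M$ and expand
\[
\vec{\pi}'\cdot\vec{w} = \sum_{i,j,A,B} w_{ij}(A,B)\,\Pr_{\vec{v}_{-i}\leftarrow \mathcal{D}_{-i}(B)}[M_{ij}(A,\vec{v}_{-i})=1].
\]
Swapping the sum over $B$ inside yields, after collapsing with the definition of $f_{ij}$ used by $SOVCG_{\mathcal{F}}(\vec{w})$,
\[
\vec{\pi}'\cdot\vec{w} = \sum_{\vec{v}}\sum_{i,j} M_{ij}(\vec{v})\, f_{ij}(\vec{v}).
\]
This decouples across profiles, so the maximum over feasible $M$ is attained by choosing, on each profile $\vec{v}$ independently, the feasible allocation of largest $\vec{f}(\vec{v})$-weight — precisely the allocation selected by $SOVCG_{\mathcal{F}}(\vec{w})$. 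Extending to randomized $M$ is immediate by convex combination.

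With this maximization lemma in hand, the analogue of Proposition~\ref{prop:VCG} follows: every corner of $F_{SO}(\mathcal{F},\mathcal{D})$ is exposed by some linear functional $\vec{w}$, so it is the second-order reduced form of some $SOVCG_{\mathcal{F}}(\vec{w})$; conversely, the reduced form of a simple $SOVCG_{\mathcal{F}}(\vec{w})$ (where the max-weight allocation is unique on every profile) is a corner, since such a rule is the unique maximizer of $\vec{w}\cdot (\cdot)$. Theorem~\ref{thm:correlated} then follows from Carath\'eodory's theorem applied to the polytope $F_{SO}(\mathcal{F},\mathcal{D})$: any feasible second-order reduced form is a convex combination of at most $\dim F_{SO}(\mathcal{F},\mathcal{D}) + 1$ corners, each of which is implementable by a simple second-order VCG rule.

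The only genuine technical obstacle, as in Section~\ref{sec:tiebreaking}, is arranging that the VCG rules exposing the chosen corners are actually \emph{simple} — generic $\vec{w}$ will produce ties on profiles of measure zero in the maximization argument above, but these can still affect interim probabilities. The fix is the direct analogue of Lemma~\ref{lem:tiebreaking}: perturb $\vec{w}$ by a small lexicographic tie-breaking term that preserves the optimizer over $F_{SO}(\mathcal{F},\mathcal{D})$ while making the max-weight allocation unique on every profile. Because the $f_{ij}$ used by $SOVCG_{\mathcal{F}}(\vec{w})$ is linear in $\vec{w}$, exactly the same perturbation-and-bit-complexity argument goes through, and I expect no new difficulties beyond bookkeeping.
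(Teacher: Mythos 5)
Your proposal is correct and takes essentially the same route as the paper: the same observation that feasible rules are distributions over feasible deterministic rules, the same polytope argument, the same maximization lemma with the same expansion of $\vec{\pi}'\cdot\vec{w}$ and swap of sums (this is exactly the paper's Lemma~\ref{lem:sovcg}), the same corner characterization, and Carath\'eodory. The one place you over-engineer is tie-breaking: for the bare existence statement each corner is automatically implemented by a \emph{simple} rule by the Proposition~\ref{prop:VCG}-style argument (a $\vec{w}$ strictly exposing a corner forces the max-weight allocation to be unique on every profile, since otherwise swapping the allocation on a tied profile would give a distinct reduced form with the same objective value), so no analogue of Lemma~\ref{lem:tiebreaking} is needed here — the lexicographic perturbation only enters once the rules are computed via black-box VCG calls in Sections~\ref{sec:algorithms}--\ref{sec:approximations}.
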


The proof of Theorem~\ref{thm:correlated} parallels that of Theorem~\ref{thm:characterization}. We begin by observing that Observation~\ref{obs:deterministic} and Proposition~\ref{prop:convex polytope} also hold in this setting and the proofs are identical. We denote the polytope of feasible second-order reduced forms with respect to $\mathcal{F}$ and $\mathcal{D}$ by $SO(\mathcal{F},\mathcal{D})$. We now characterize the corners of $SO(\mathcal{F},\mathcal{D})$, beginning with an analogue of Lemma~\ref{lem:vvcg}:

\begin{lemma}\label{lem:sovcg}
Let $\vec{\pi}$ be the second-order reduced form of $SOVCG_{\mathcal{F}}(\vec{w})$ with respect to $\mathcal{D}$. Then for all $\vec{\pi}' \in SO(\mathcal{F},\mathcal{D})$:

$$\vec{\pi} \cdot \vec{w} \geq \vec{\pi}' \cdot \vec{w}$$
\end{lemma}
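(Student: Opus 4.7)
The plan is to mimic the proof of Lemma~\ref{lem:vvcg}, but first unpack the inner product $\vec{\pi}\cdot \vec{w}$ in a way that exposes the weights $f_{ij}(\vec{v})$ that $SOVCG_{\mathcal{F}}(\vec{w})$ actually uses on profile $\vec{v}$. Let $M$ be any feasible allocation rule whose second-order reduced form is $\vec{\pi}'$, and write $M_{ij}(\vec{v})$ for the probability that $M$ allocates item $j$ to bidder $i$ on profile $\vec{v}$. By definition of the second-order reduced form,
\[
\pi'_{ij}(A,B) \;=\; \sum_{\vec{v}_{-i}} \Pr[\vec{v}_{-i}\leftarrow \mathcal{D}_{-i}(B)]\cdot M_{ij}(A,\vec{v}_{-i}).
\]

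Next, I would compute $\vec{\pi}'\cdot \vec{w}$ by swapping the order of summation. Fixing $i,j$, renaming $(A,\vec{v}_{-i})$ to $\vec{v}$, and pulling out the sum over $B$:
\[
\vec{\pi}'\cdot \vec{w}
\;=\; \sum_{i,j}\sum_{\vec{v}} M_{ij}(\vec{v})\Bigl(\sum_{B \in T_i} w_{ij}(\vec{v}_i,B)\Pr[\vec{v}_{-i}\leftarrow \mathcal{D}_{-i}(B)]\Bigr)
\;=\; \sum_{\vec{v}}\sum_{i,j} M_{ij}(\vec{v})\,f_{ij}(\vec{v}),
\]
where the parenthesized expression is precisely the weight $f_{ij}(\vec{v})$ that the definition of $SOVCG_{\mathcal{F}}(\vec{w})$ assigns to awarding item $j$ to bidder $i$ on profile $\vec{v}$. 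Crucially, the probabilities used in the definition of $\vec{\pi}'$ are conditional, so after the rearrangement the outer sum over profiles is \emph{not} weighted by any prior, which is exactly what is needed to couple the inner product to a profile-by-profile optimization.

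The conclusion is now immediate. On every profile $\vec{v}$, the allocation chosen by $SOVCG_{\mathcal{F}}(\vec{w})$ is, by definition, the $\mathcal{F}$-feasible allocation maximizing $\sum_{i,j} x_{ij}\,f_{ij}(\vec{v})$, so for any other $\mathcal{F}$-feasible allocation $M(\vec{v})$,
\[
\sum_{i,j} SOVCG_{\mathcal{F}}(\vec{w})_{ij}(\vec{v})\,f_{ij}(\vec{v}) \;\geq\; \sum_{i,j} M_{ij}(\vec{v})\,f_{ij}(\vec{v}).
\]
Summing this inequality over all profiles $\vec{v}$ and applying the identity above (to both $\vec{\pi}$ and $\vec{\pi}'$) yields $\vec{\pi}\cdot \vec{w} \geq \vec{\pi}'\cdot \vec{w}$.

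The one subtle point to watch is the bookkeeping in the rearrangement: one has to verify that the conditional probabilities $\Pr[\vec{v}_{-i}\leftarrow \mathcal{D}_{-i}(B)]$ recombine to produce exactly $f_{ij}(\vec{v})$, which requires observing that the second-order reduced form was defined using these \emph{conditional} (not joint) probabilities precisely so that no extra prior factor appears in front of the profile sum. Once that is in place, the argument is a one-line consequence of profile-wise optimality of VCG on the virtual weights, exactly paralleling Lemma~\ref{lem:vvcg}.
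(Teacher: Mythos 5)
Your proof is correct and follows essentially the same route as the paper: you expand $\vec{\pi}'\cdot\vec{w}$ by unpacking the second-order reduced form into conditional profile probabilities, swap the order of summation so the coefficient of each $M_{ij}(\vec{v})$ is exactly $f_{ij}(\vec{v})$, and then conclude by profile-wise optimality of the max-weight allocation. The additional remark you make about why no prior factor appears in front of the profile sum is an accurate explanation of the calculation, not a divergence from the paper's argument.
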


\begin{proof}
Consider any allocation rule $M$ with second-order reduced form {$\vec{\pi}''$} and denote by $M_{ij}(\vec{v})$ the probability that $M$ awards item $j$ to bidder $i$ on profile $\vec{v}$. Then we can expand {$\vec{\pi}'' \cdot \vec{w}$} as:
{
\begin{align*}
\vec{\pi}'' \cdot \vec{w} &= \sum_j \sum_i \sum_{A \in T_i} \sum_{B \in T_i} w_{ij}(A,B) \pi''_{ij}(A,B)\\
&= \sum_j \sum_i \sum_{A \in T_i} \sum_{B \in T_i} w_{ij}(A,B) \sum_{\vec{v}_{-i}} \Pr[\vec{v}_{-i} \leftarrow \mathcal{D}_{-i}(B)]M_{ij}(\vec{v}_{-i};A)\\
&= \sum_{\vec{v}} \sum_i \sum_j M_{ij}(\vec{v})\cdot \sum_{B \in T_i} w_{ij}(\vec{v}_i,B) \Pr[\vec{v}_{-i} \leftarrow \mathcal{D}_{-i}(B)]\\
&= \sum_{\vec{v}} \sum_i \sum_j M_{ij}(\vec{v})\cdot f_{ij}(\vec{v})\\
\end{align*}
}
The second line is derived by simply exanding $\pi''_{ij}(A,B)$. The third line is derived by determining the coefficient for each $M_{ij}(\vec{v})$ in the previous line. {The final line is derived by replacing $\sum_{B \in T_i} w_{ij}(\vec{v}_i,B) \Pr[\vec{v}_{-i} \leftarrow \mathcal{D}_{-i}(B)]$ with $f_{ij}(\vec{v})$.} One should interpret {$f_{ij}(\vec{v})$} to be the weight of awarding item $j$ to bidder $i$ on profile $\vec{v}$. Therefore, the allocation rule whose second-order reduced form maximizes $\vec{\pi}''\cdot \vec{w}$ over all feasible second order reduced forms is simply the one that selects the max-weight allocation on every profile, where the weight of awarding bidder $i$ item $j$ on profile $\vec{v}$ is {$f_{ij}(\vec{v})$}. This is exactly the allocation rule $SOVCG_{\mathcal{F}}(\vec{w})$.
\end{proof}

\begin{proposition}\label{prop:correlatedcorners}
Every corner in $SO(\mathcal{F},\mathcal{D})$ can be implemented by a simple second-order VCG allocation rule, and the reduced form of any second-order VCG allocation rule is a corner in $SO(\mathcal{F},\mathcal{D})$.
\end{proposition}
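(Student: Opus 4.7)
The plan is to parallel the proof of Proposition~\ref{prop:VCG} for the independent case, substituting Lemma~\ref{lem:sovcg} for Lemma~\ref{lem:vvcg} and $SO(\mathcal{F},\mathcal{D})$ for $F(\mathcal{F},\mathcal{D})$. The backbone is the standard polytope fact that a point is a vertex of a polytope if and only if it is the unique maximizer of some linear functional on that polytope, and we prove the two directions separately.

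For the easier direction, that every simple $SOVCG_{\mathcal{F}}(\vec{w})$ has a corner of $SO(\mathcal{F},\mathcal{D})$ as its second-order reduced form, let $\vec{\pi}$ denote this reduced form. By Lemma~\ref{lem:sovcg}, $\vec{\pi}$ maximizes $\vec{\pi}' \cdot \vec{w}$ over $SO(\mathcal{F},\mathcal{D})$, so it remains to show that the maximizer is unique. Following the same profile-by-profile expansion used in the proof of Lemma~\ref{lem:sovcg}, any alternative maximizer $\vec{\pi}''$, realized by an allocation rule $M$, must satisfy
$$\sum_{\vec{v}} \sum_{i,j} M_{ij}(\vec{v})\, f_{ij}(\vec{v}) \;=\; \sum_{\vec{v}} \max_{S \in \mathcal{F}} \sum_{(i,j)\in S} f_{ij}(\vec{v}),$$
which forces $M$ to place all of its probability mass on max-weight allocations on every profile. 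Simpleness makes this max-weight allocation unique on each profile, so $M$ must coincide with $SOVCG_{\mathcal{F}}(\vec{w})$ as a randomized allocation rule, and therefore $\vec{\pi}'' = \vec{\pi}$.

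For the converse direction, let $\vec{\pi}^*$ be a corner of $SO(\mathcal{F},\mathcal{D})$ and let $\vec{w}$ be a linear functional uniquely maximized at $\vec{\pi}^*$ over the polytope. Lemma~\ref{lem:sovcg} gives that the second-order reduced form of $SOVCG_{\mathcal{F}}(\vec{w})$ also maximizes $\vec{\pi}' \cdot \vec{w}$, and by uniqueness it must equal $\vec{\pi}^*$. The only nontrivial step, which I expect to be the main obstacle, is that $SOVCG_{\mathcal{F}}(\vec{w})$ need not be simple. I would handle this by a lexicographic tie-breaking perturbation in the spirit of Section~\ref{sec:tiebreaking}: replace $\vec{w}$ by a nearby $\vec{w}'$ chosen so that (i) on each of the finitely many profiles, the induced weights admit a unique max-weight feasible allocation, and (ii) the perturbation is small enough that $\vec{\pi}^*$ remains the unique maximizer of the perturbed functional $\vec{\pi}' \cdot \vec{w}'$ over the finitely many vertices of $SO(\mathcal{F},\mathcal{D})$. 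Then $SOVCG_{\mathcal{F}}(\vec{w}')$ is simple, and by the first direction combined with Lemma~\ref{lem:sovcg} its second-order reduced form is a corner that maximizes the perturbed functional, which by uniqueness must equal $\vec{\pi}^*$.
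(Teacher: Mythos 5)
Your proof is correct and its overall skeleton matches the paper's: identify the polytope $SO(\mathcal{F},\mathcal{D})$, use Lemma~\ref{lem:sovcg} to relate linear functionals to expected weight of $SOVCG$ allocations, and argue the two directions of the corner-characterization. The first direction (simple $SOVCG \Rightarrow$ corner) is essentially identical to the paper's argument, expanding $\vec{\pi}'' \cdot \vec{w}$ profile-by-profile and using simpleness to pin down the unique max-weight allocation on each profile.

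Where you diverge from the paper is in the second direction, and specifically in your worry that ``$SOVCG_{\mathcal{F}}(\vec{w})$ need not be simple'', which you propose to handle by a lexicographic perturbation to some nearby $\vec{w}'$. This works, but it is an unnecessary detour. The paper (by way of the proof of Proposition~\ref{prop:VCG}, to which the proof of Proposition~\ref{prop:correlatedcorners} defers) argues directly that if $\vec{w}$ is chosen so that $\vec{\pi}^*$ is the \emph{unique} maximizer of $\vec{\pi}' \cdot \vec{w}$ over $SO(\mathcal{F},\mathcal{D})$, then $SOVCG_{\mathcal{F}}(\vec{w})$ is already simple: otherwise some profile would have two distinct max-weight allocations, and swapping which one is selected on that single profile produces a different allocation rule $B$ with $\vec{\pi}_B \cdot \vec{w} = \vec{\pi}^* \cdot \vec{w}$ but $\vec{\pi}_B \neq \vec{\pi}^*$, contradicting uniqueness. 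So simpleness is a \emph{consequence} of the choice of $\vec{w}$ rather than a property you need to engineer by perturbing. Your perturbation route also has an extra cost: you appeal to ``the spirit of Section~\ref{sec:tiebreaking}'', but Lemma~\ref{lem:tiebreaking} is stated for the independent-bidder $VVCG$ with weights scaled by $\Pr[t_i = A]$, so you would need to develop and justify a parallel statement for the second-order weights $w_{ij}(A,B)$ scaled by $\Pr[\vec{v}_{-i} \leftarrow \mathcal{D}_{-i}(B)]$ before invoking it. The direct argument sidesteps all of this.
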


\begin{proof} The proof is truly identical to that of Proposition~\ref{prop:VCG} after replacing Lemma~\ref{lem:vvcg} with Lemma~\ref{lem:sovcg}.
\end{proof}

\begin{prevproof}{Theorem}{thm:correlated}
Again, the proof is identical to that of Theorem~\ref{thm:characterization} after replacing Proposition~\ref{prop:VCG} with Proposition~\ref{prop:correlatedcorners}.
\end{prevproof}

We conclude this section with a discussion on the content of Theorems~\ref{thm:characterization} and~\ref{thm:correlated}. Again, we are \emph{not} claiming that every allocation rule can be implemented as a distribution over second-order VCG allocation rules. This is again not true, and the same example from Section~\ref{sec:independent} bears witness. Let's take a step back and view virtual and second-order VCG allocation rules as special cases of a more generic type of allocation rule:

\begin{definition}A \textbf{weight-scaling} allocation rule is defined by a dimension, $k$, and a collection of functions $\{V_{ij},W_{ij}\}_{(i,j) \in [m]\times[n]}$. $V_{ij}$ maps a type of bidder $i$ to a $k$-dimensional weight vector ($T_i \rightarrow \mathbb{R}^k$) and $W_{ij}$ maps the remaining types to a $k$-dimensional scaling vector ($\times_{i' \neq i} T_{i'} \rightarrow \mathbb{R}^k$). On any profile $\vec{v}$, the weight-scaling allocation rule with functions $\{V_{ij},W_{ij}\}_{i,j}$ selects the max-weight allocation with weights:

$$f_{ij}(\vec{v}) = V_{ij}(\vec{v}_i) \cdot W_{ij}(\vec{v}_{-i})$$
\end{definition}

In other words, the weight of awarding bidder $i$ item $j$ is the dot product of two vectors, one contributed by bidder $i$'s type, and the other contributed by the rest of the profile. It is not hard to see that every deterministic allocation rule can be implemented by a weight-scaling allocation rule of dimension $\prod_i |T_i|$.~\footnote{Specifically, to implement any deterministic allocation rule $M$, index the possible profiles as $P_1,\ldots,P_k$. If on profile $P_a$, $t_i \neq \vec{v}_i$, set $(V_{ij})_a(\vec{v}_i) = 0$ for all $j$. Similarly, if $t_{-i} \neq \vec{v}_{-i}$, set $(W_{ij})_a(\vec{v}_{-i}) = 0$ for all $j$. If $t_{-i} = \vec{v}_{-i}$ on profile $P_a$, set $(W_{ij})_a(\vec{v}_{-i}) = 1$. If $t_i = \vec{v}_i$, and $M$ awards bidder $i$ item $j$ on profile $P_a$, set $(V_{ij})_a(\vec{v}_i) = 1$. If $t_i = \vec{v}_i$ and $M$ doesn't award bidder $i$ item $j$ on profile $P_a$, set $(V_{ij})_a(\vec{v}_i) = -1$. Then on any profile, we will have $f_{ij}(\vec{v}) = 1$ iff $M$ awards bidder $i$ item $j$ on $\vec{v}$, and $-1$ otherwise.} It is also not hard to imagine that in order to specify arbitrary deterministic mechanisms as a weight-scaling allocation rule, dimension $\prod_i |T_i|$ might be necessary. However, virtual VCG allocation rules are weight-scaling allocation rules of dimension $1$ (and furthermore, $W_{ij}(\vec{v}_{-i}) = 1$ for all $i,\vec{v}_{-i}$), and second-order VCG allocation rules are weight-scaling allocation rules of dimension $\max_i |T_i|$. By restricting ourselves to only care about the reduced form or second-order reduced form, we have drastically simplified the space of allocation rules. Our characterization theorems show that every allocation rule has the same reduced form as a distribution over weight-scaling allocation rule of dimension $1$, and the same second-order reduced form as a distribution over weight-scaling allocation rule of dimension $\max_i |T_i|$. 
\appendix
\section{Details from Preliminaries}\label{app:prelims}
We provide a formal definition {of Bayesian Incentive Compatibility and Individual Rationality of a mechanism for independent bidders} and state a well-known property of the Ellipsoid Algorithm. For completeness, we provide an additional proposition showing a standard trick that can force the Ellipsoid algorithm to always output a corner.

\begin{definition}~\cite{DW12}(BIC/$\epsilon$-BIC Mechanism)\label{def:BIC} A mechanism $M$ is called $\epsilon$-BIC iff the following inequality holds for all bidders $i$ and types $\tau_i,\tau_i' \in T_i$:
$$\mathbb{E}_{t_{-i} \sim {\cal D}_{-i}}\left[U_i(\tau_i,M_i(\tau_i~;~{t}_{-i}))\right] \ge \mathbb{E}_{t_{-i} \sim {\cal D}_{-i}}\left[ U_i(\tau_i,M_i(\tau_i'~;~{t}_{-i})) \right] - \epsilon v_{\max} \cdot \max\left\{1,\sum_j \pi^{M}_{ij}(\tau_i')\right\},$$
where: 
\begin{itemize}
\item $U_i(A,M_i(B~;~{t}_{-i}))$ denotes the utility of bidder $i$ for the outcome of mechanism $M$ if his true type is $A$, he reports $B$ to the mechanism, and the other bidders report $t_{-i}$;
\item $v_{\max}$ is the maximum possible value of any bidder for any item in the support of the value distribution; and

\item $\pi^{M}_{ij}(A)$ is the probability that item $j$ is allocated to bidder $i$ by mechanism $M$ if bidder $i$ reports type $A$ to the mechanism, in expectation over the types of the other bidders, assuming they report truthfully, and the mechanism's internal randomness.
\end{itemize}
In other words, $M$ is $\epsilon$-BIC iff when a bidder $i$ lies by reporting $\tau_i'$ instead of his true type $\tau_i$, she does not expect to gain more than $\epsilon v_{\max}$ times the maximum of $1$ and the expected number of items that $\tau_i'$ receives. A mechanism is called BIC iff it is $0$-BIC.\footnote{Strictly speaking, the definition of BIC in~\cite{DW12} is the same but without taking a max with $1$. We are still correct in applying their results with this definition because any mechanism that is considered $\epsilon$-BIC by~\cite{DW12} is certainly considered $\epsilon$-BIC by this definition. We basically call a mechanism $\epsilon$-BIC if either the definition in~\cite{BeiH11,HartlineKM11,HartlineL10} ($\epsilon v_{\max}$) or~\cite{DW12} ($\epsilon v_{\max} \sum_j \pi_{ij}(\vec{w}_i)$) holds.}
\end{definition}

{We also define individual rationality of BIC/$\epsilon$-BIC mechanisms:

\begin{definition}
A BIC/$\epsilon$-BIC mechanism $M$ is called {\em interim individually rational (interim IR)} iff for all bidders $i$ and types $\tau_i \in T_i$:
$$\mathbb{E}_{t_{-i} \sim {\cal D}_{-i}}\left[U_i(\tau_i,M_i(\tau_i~;~t_{-i}))\right] \ge 0,$$
where $U_i(A,M_i(B~;~{t}_{-i}))$ denotes the utility of bidder $i$ for the outcome of mechanism $M$ if his true type is $A$, he reports $B$ to the mechanism, and the other bidders report $t_{-i}$. The mechanism is called {\em ex-post individually rational (ex-post IR)} iff for all $i$, $\tau_i$ and $t_{-i}$,  $U_i(\tau_i,M_i(  \tau_i~;~t_{-i}  ) \ge 0$ with probability $1$ (over the randomness in the mechanism).
\end{definition}
}

\begin{theorem}\label{thm:ellipsoid}[Ellipsoid Algorithm for Linear Programming]
Let $P$ be a convex polytope in $\mathbb{R}^d$ specified via a separation oracle $SO$, and $\vec{c}\cdot\vec{x}$ be a linear function. Assume that all coordinates of $\vec{a}$ and $b$, for all separation hyperplanes $\vec{a}\cdot\vec{x}\leq b$ possibly output by $SO$, and all coordinates of $\vec{c}$ are rational numbers of bit complexity $\ell$. Then we can run the ellipsoid algorithm to optimize $\vec{c}\cdot\vec{x}$ over $P$, maintaining the following properties:
\begin{enumerate}
\item The algorithm will only query $SO$ on rational points with bit complexity $\poly(d,\ell)$.
\item The ellipsoid algorithm will solve the Linear Program in time polynomial in $d$, $\ell$ and the runtime of $SO$ when the input query is a rational point of bit complexity $\poly(d,\ell)$.
\item The output optimal solution is a corner of $P$.\footnote{This is well-known, but also proved in Proposition~\ref{prop:uniqueoptimal} for completeness.}
\end{enumerate}
\end{theorem}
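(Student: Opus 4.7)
The plan is to handle parts (1) and (2) by directly invoking the classical analysis of the Ellipsoid algorithm, and to handle part (3) by a standard reduction that turns any LP optimizer into one that outputs a vertex.

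For parts (1) and (2), I would appeal to the textbook analysis of the Ellipsoid method. Starting from an initial ball of radius $2^{\poly(d,\ell)}$ that contains the feasible region (using the standard bound that rational polytopes with constraints of bit complexity $\ell$ are contained in such a ball), at each step the algorithm queries $SO$ at the current ellipsoid's center; a returned violated hyperplane shrinks the ellipsoid's volume by a factor of $(1 - 1/\Theta(d))$, and after $\poly(d,\ell)$ iterations one either finds a feasible point or certifies infeasibility by a volume argument. Optimization of $\vec{c}\cdot\vec{x}$ then follows either via binary search on the objective value combined with feasibility queries, or via the separation-implies-optimization equivalence. The key observation is that every ellipsoid center computed is a rational vector whose bit complexity stays at $\poly(d,\ell)$ throughout, so $SO$ is only queried on points of that bit complexity, which gives (1); this also bounds the number of iterations and the per-query cost, giving (2).

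For (3), the plan is to iteratively restrict the polytope to faces of strictly smaller dimension until only a single point remains. Let $v^* = \max_{\vec{x}\in P} \vec{c}\cdot\vec{x}$, obtained by parts (1) and (2). Define $P_0 = P \cap \{\vec{c}\cdot\vec{x} = v^*\}$, whose separation oracle is $SO$ augmented with the two inequalities $\vec{c}\cdot\vec{x} \le v^*$ and $-\vec{c}\cdot\vec{x} \le -v^*$. Then for $i = 1, \ldots, d$, compute $v_i^* = \max_{\vec{x} \in P_{i-1}} x_i$ by another ellipsoid call, and set $P_i = P_{i-1} \cap \{x_i = v_i^*\}$. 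After these $d$ rounds every coordinate is pinned, so $P_d$ is a single point; since $P_d$ is the intersection of $P$ with $d+1$ supporting hyperplanes of $P$, this point is a vertex of $P$ achieving $\vec{c}\cdot\vec{x} = v^*$.

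The main obstacle I expect is bookkeeping bit complexity through the iterative procedure. I would show by induction on $i$ that each $v_i^*$ is a rational of bit complexity $\poly(d,\ell)$, using the fact that $v_i^*$ equals the $i$-th coordinate of some vertex of $P_{i-1}$, and that vertices of a polytope whose defining inequalities have bit complexity $b$ have coordinates of bit complexity $\poly(d,b)$ by Cramer's rule. Hence the augmented separation oracles in each round output hyperplanes of bit complexity $\poly(d,\ell)$, so the hypotheses of parts (1) and (2) carry over to each of the $d+1$ Ellipsoid invocations and the overall runtime remains polynomial in $d$, $\ell$, and the runtime of $SO$ on inputs of bit complexity $\poly(d,\ell)$.
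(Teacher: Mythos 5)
Your handling of parts (1) and (2) matches the paper: both simply invoke the textbook Ellipsoid analysis and bit-complexity bounds. The difference is entirely in how you establish part (3), and here you take a genuinely different, though equally classical, route.

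The paper's argument (footnoted in the theorem and carried out in Proposition~\ref{prop:uniqueoptimal}) is a \emph{perturbation} argument: it replaces $\vec{c}$ by a nearby rational vector $\vec{b}$ of bit complexity $\poly(d,\ell)$ such that $\argmax_{\vec{x}\in P}\vec{b}\cdot\vec{x}$ is a \emph{unique} point, and shows this unique maximizer is still optimal for $\vec{c}$; because any LP has a vertex optimum, a unique optimum must itself be a vertex, so a single Ellipsoid run on the perturbed objective already outputs a corner. Your argument is instead a \emph{face-reduction} argument: first solve for $v^*$, then for $i=1,\dots,d$ do another Ellipsoid call to pin coordinate $x_i$ on the current optimal face, so that after $d+1$ runs the residual face is a single point, which, being a $0$-dimensional face of $P$, is a vertex. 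Both are correct and well known. The paper's perturbation buys a single Ellipsoid invocation and isolates the vertex-selection issue into one clean, self-contained lemma about objective vectors; your construction avoids modifying the objective at all and is arguably more transparent about why the output is a face, at the cost of $d+1$ Ellipsoid calls and an inductive bit-complexity bookkeeping step across rounds (which you correctly flag and which does go through, since vertices of each $P_{i-1}$ have $\poly(d,\ell)$-bit coordinates by Cramer's rule, so the added equality constraints have $\poly(d,\ell)$-bit coefficients). Either argument suffices for the theorem as used in the paper.
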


\begin{proposition}\label{prop:uniqueoptimal}
Let $\vec{a}$ be a $d$-dimensional vector, whose coordinates are rational numbers of bit complexity $\ell_{1}$, $P$ be a $d$-dimensional convex polytope, in which all coordinates of all corners are rational numbers of bit complexity $\ell_{2}$. Then we can transform $\vec{a}$ into a new $d$-dimensional vector $\vec{b}$, whose coordinates are all rational numbers of bit complexity $d(\ell_{1}+1)+(2d^{2}+1)\ell_{2}+1$, such that $\vec{x}^* = \argmax_{\vec{x}\in P} \vec{b}\cdot \vec{x}$ is unique. Furthermore, $\vec{x}^{*}$ is also an element of $\argmax_{\vec{x}\in P} \vec{a}\cdot \vec{x}$.
\end{proposition}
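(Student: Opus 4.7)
The plan is the standard lexicographic perturbation trick: construct $\vec{b}$ as $K\vec{a} + \vec{c}$, where $\vec{c}$ is an integer tie-breaking vector and $K$ is an integer scaling factor. I would choose them so that (i) $\vec{c}\cdot\vec{x}$ separates all pairs of distinct corners of $P$, and (ii) $K$ is large enough that every strict inequality $\vec{a}\cdot\vec{x}_1 > \vec{a}\cdot\vec{x}_2$ on corners is preserved as $\vec{b}\cdot\vec{x}_1 > \vec{b}\cdot\vec{x}_2$. Once both hold, $\argmax_{\vec{x}\in P}\vec{b}\cdot\vec{x}$ is a face of $P$ containing at most one corner, hence a single corner. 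That corner is forced to lie in $\argmax_{\vec{x}\in P}\vec{a}\cdot\vec{x}$: otherwise some $\vec{y}\in P$ (w.l.o.g.\ a corner) would satisfy $\vec{a}\cdot\vec{y} > \vec{a}\cdot\vec{x}^{*}$, and (ii) would give the contradiction $\vec{b}\cdot\vec{y} > \vec{b}\cdot\vec{x}^{*}$.

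The two quantitative inputs are gap estimates obtained by clearing denominators. Writing $a_i = p_i/q_i$ and corner coordinates $x_{k,i} = r_{k,i}/s_{k,i}$, all numerators and denominators are integers of absolute value at most $2^{\ell_1}$ or $2^{\ell_2}$ respectively. A common denominator for $\vec{a}\cdot(\vec{x}_1-\vec{x}_2)$ divides $\prod_i q_i s_{1,i}s_{2,i}\le 2^{d(\ell_1+2\ell_2)}$, so any nonzero $\vec{a}$-gap between corners is at least $2^{-d(\ell_1+2\ell_2)}$. A common denominator $D\le 2^{2d\ell_2}$ for $\vec{x}_1-\vec{x}_2$ makes $D(x_{1,i}-x_{2,i})$ an integer of magnitude at most $2^{(2d+1)\ell_2+1}$.

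With these estimates, I would set $\vec{c} = (1, M, M^2, \ldots, M^{d-1})^{\top}$ with $M$ just larger than twice $2^{(2d+1)\ell_2+1}$, so that the scaled quantity $\sum_i M^{i-1}\cdot D(x_{1,i}-x_{2,i})$ is a genuine base-$M$ representation with digits in $(-M/2,M/2)$, hence vanishes iff $\vec{x}_1=\vec{x}_2$. This yields (i). For (ii), take $K$ large enough that $K\cdot 2^{-d(\ell_1+2\ell_2)} > d\cdot M^{d-1}\cdot 2^{\ell_2+1}$, an upper bound on $|\vec{c}\cdot(\vec{x}_1-\vec{x}_2)|$. Then $\vec{b}_i = (Kp_i + q_i c_i)/q_i$ is rational with denominator $q_i\le 2^{\ell_1}$ and numerator bounded by $2^{\ell_1}(K+M^{d-1})$, so its bit complexity is essentially $\ell_1 + \log K$, which the above inequalities let us bound by a polynomial of the right shape, namely $O(d\ell_1 + d^2\ell_2)$.

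The main obstacle is arithmetic bookkeeping to land exactly on the stated $d(\ell_1+1)+(2d^2+1)\ell_2+1$ bound. The geometric-sequence construction above gives something closer to $(d+1)\ell_1+(2d^2+d)\ell_2+O(d)$, so the proof needs to sharpen two steps: replace the coarse product-of-denominators bound in the minimum-gap estimate by the actual least-common-multiple (which can absorb an $\ell_1$ factor into the $+d$ term), and pick $M$ and $K$ to match the $\vec{a}$-gap tightly rather than leaving slack. With these refinements in place, the preservation and separation properties are unchanged, while the bit complexity of each $\vec{b}_i$ collapses to the claimed bound.
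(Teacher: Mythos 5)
Your overall strategy --- lexicographic perturbation so that $\vec b\cdot{}$ separates all corners while preserving every strict $\vec a$-comparison --- is exactly the right one, and the two quantitative ingredients you identify (a lower bound on the nonzero $\vec a$-gap between corners, and an upper bound on the tie-breaking contribution) are the same ones the paper uses. But you correctly flag at the end that your scale-and-shift construction $\vec b = K\vec a + \vec c$ with $\vec c = (1,M,\ldots,M^{d-1})$ lands on roughly $(d+1)\ell_1 + (2d^2+d)\ell_2 + O(d)$ rather than the stated $d(\ell_1+1)+(2d^2+1)\ell_2+1$, and the refinements you sketch do not close that gap. Replacing $\prod_i q_i$ by $\mathrm{lcm}_i\, q_i$ buys nothing in the worst case (when the $q_i$ are pairwise coprime the lcm is the product), and ``choosing $K$ and $M$ tightly'' cannot remove the structural overshoot: your $K$ must absorb both the $2^{d\ell_1}$ from $\vec a$'s denominators \emph{and} the $2^{2d\ell_2}$ from the corners' denominators, and then the extra factor $q_i$ in the denominator of $b_i$ costs you the additional $\ell_1$. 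So, as written, this proposal does not prove the stated bound; it proves a weaker but still polynomial bound, which would suffice for every downstream use in the paper, but not for the proposition as stated.

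The paper avoids this overshoot by working additively instead of multiplicatively. It first rewrites $a_i = p'_i/Q$ over the common denominator $Q=\prod_i q_i$ (so $Q$ and each $p'_i$ have at most $d\ell_1$ bits), and then sets $b_i = \bigl(p'_i + 2^{-(1+\ell_2+(2d\ell_2+1)i)}\bigr)/Q$, i.e.\ it adds a \emph{dyadic} perturbation directly to the numerator rather than scaling $\vec a$ up to dominate an integer tie-breaker. The two key computations then fall out cleanly: for any rational $\vec z\neq\vec 0$ whose coordinates have bit complexity $2\ell_2$, the ``fractional part'' of $Q S(\vec b\cdot\vec z)$ is $\sum_i 2^{-(1+\ell_2+(2d\ell_2+1)i)}r'_i$ where the $r'_i$ have at most $2d\ell_2$ bits, so the digit bands do not overlap and the sum vanishes iff $\vec z=\vec 0$ (this is precisely your base-$M$ argument, but with $\log M$ equal to the digit width $2d\ell_2+1$ rather than the $(2d+1)\ell_2$ you use); and $|(\vec b-\vec a)\cdot\vec x| < 1/(2^{1+2d\ell_2}Q)$ is strictly less than half the minimum nonzero $\vec a$-gap $1/(2^{2d\ell_2}Q)$, so optima are preserved. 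Because the perturbation is a single power of $2$ added to $p'_i$, the numerator and denominator of $b_i$ are $p'_i\cdot 2^{E_d}+1$ and $Q\cdot 2^{E_d}$ with $E_d = 1+\ell_2+(2d\ell_2+1)d$, giving \emph{exactly} $d(\ell_1+1)+(2d^2+1)\ell_2+1$ bits. If you adopt this additive form --- perturb the common-denominator numerators by $2^{-(1+\ell_2+(2d\ell_2+1)i)}$ instead of scaling by $K$ and shifting by $\vec c$ --- your separation and preservation arguments go through verbatim and the bookkeeping lands on the stated bound.
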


\begin{proof}
Let $a_{i}=p_{i}/q_{i}$, where both $p_{i}$ and $q_{i}$ are integers with at most $\ell_{1}$ bits. Now change the $a_{i}$'s to have the same denominator $Q=\Pi_{i} q_{i}$. So $a_{i}=p'_{i}/Q$, where $p'_{i}=p_{i}\Pi_{j\neq i} q_{j}$. Both $Q$ and $p'_{i}$ have at most $d\ell_{1}$ bits. Let now $b_{i}=(p'_{i}+2^{-(1+\ell_{2}+(2d\ell_{2}+1)\cdot i})/Q$. So $b_{i}$ can be described with $d(\ell_{1}+1)+(2d^{2}+1)\ell_{2}+1$ bits.

Now we will argue that, for any rational vector $\vec{z}\neq \vec{0}$, whose coordinates can be described with at most $2\ell_{2}$ bits, $\vec{b}\cdot\vec{z}\neq 0$. Let $z_{i}=r_{i}/s_{i}$, where  both $r_{i}$ and $s_{i}$ are integers with at most $2\ell_{2}$ bits. Now modify $z_{i}$ to be $r'_{i}/S$, where $S=\Pi_{i} s_{i}$ and $r'_{i}=r_{i}\Pi_{j\neq i}s_{i}$. Both $S$ and $r_{i}'$ have at most $2d\ell_{2}$ bits. Now consider the fractional parts of $Q\cdot S\cdot(\vec{b}\cdot\vec{z})$, which is 
$$\sum_{i=1}^{d}2^{-(1+\ell_{2}+(2d\ell_{2}+1)\cdot i)}\cdot r'_{i}.$$
But this equals to $0$ only when $r_{i}=0$ for all $i$. Thus, if $\vec{z}\neq \vec{0}$, $\vec{b}\cdot\vec{z}\neq 0$.

Next, we argue that, if $\vec{x}$ and $\vec{y}$ are two different vectors, whose coordinates can be described with $\ell_{2}$ bits, $\vec{b}\cdot\vec{x}\neq\vec{b}\cdot\vec{y}$. This is implied by the above argument, since all coordinates of $\vec{x}-\vec{y}$ can be described with at most $2\ell_{2}$ bits. So there is a unique optimal solution to $\max_{\vec{x}\in P} \vec{b}\cdot \vec{x}$. Call that solution $\vec{x}^{*}$.

Now we show that $\vec{x}^{*}$ is also an optimal solution for $\max_{\vec{x}\in P} \vec{a}\cdot \vec{x}$. We only need to argue that if corner $\vec{x}$ is not optimal for $\vec{a}$, it will not be optimal for $\vec{b}$. First, it is not hard to see that for corners $\vec{x}$ and $\vec{y}$, if $\vec{a}\cdot(\vec{x}-\vec{y})\neq 0$, $\vec{a}\cdot(\vec{x}-\vec{y})\geq \frac{1}{2^{2d\ell_{2}}Q}$. Second, for any corner $\vec{x}$, $$|(\vec{b}-\vec{a})\cdot\vec{x}|\leq \sum_{i=1}^{d}\left|\frac{2^{-(1+\ell_{2}+(2d\ell_{2}+1)\cdot i)}}{Q}\right|\cdot 2^{\ell_{2}}<\frac{1}{2^{1+2d\ell_{2}}Q}.$$ So if $\vec{a}\cdot\vec{x}>\vec{a}\cdot\vec{y}$, $\vec{b}\cdot\vec{x}$ is still strictly greater than $\vec{b}\cdot\vec{y}$. Thus, $\vec{x}^{*}$ must be an optimal solution for $\max_{\vec{x}\in P} \vec{a}\cdot \vec{x}$.
\end{proof}

\section{Input Model} \label{sec:input distribution}
We discuss two models for accessing a value distribution $\mathcal{D}$, as well as what modifications are necessary, if any, to our algorithms to work with each model:
\begin{itemize}
\item \textbf{Exact Access:} We are given access to a sampling oracle as well as an oracle that exactly integrates the pdf of the distribution over a specified region.
\item \textbf{Sample-Only Access:} We are given access to a sampling oracle and nothing else.
\end{itemize}
The presentation of the paper focuses on the first model. In this case, we can exactly evaluate the probabilities of events without any special care. If we have sample-only access to the distribution, some care is required. Contained in Appendix A of~\cite{DW12} is a sketch of the modifications necessary for all our results to apply with sample-only access. {The sketch is given for the item-symmetric case, but the same approach will work in the asymmetric case.} Simply put, repeated sampling will yield some distribution $\mathcal{D}'$ that is very close to $\mathcal{D}$ with high probability. If the distributions are close enough, then a solution to the MDMDP for $\mathcal{D}'$ is an approximate solution for $\mathcal{D}$. The error in approximating $\mathcal{D}$ is absorbed into the additive error in both revenue and truthfulness.
\section{Omitted Proofs from Section~\ref{sec:independent}}\label{app:characterization}
This appendix contains the missing proofs from Section~\ref{sec:independent}, and our tie-breaking lemma (Lemma~\ref{lem:tiebreaking}).

\begin{prevproof}{Observation}{obs:deterministic}
For any feasible allocation rule $M$, and any type profile $\vec{v}$, the (possibly randomized) allocation $M(\vec{v})$ is a distribution over feasible deterministic allocations. So let $M(\vec{v})$ sample the deterministic allocation $A_i(\vec{v})$ with probability $p_i(\vec{v})$. Then $M(\vec{v})$ can be implemented by uniformly sampling $x$ from $[0,1]$ and selecting $A_i(\vec{v})$ iff $\sum_{j < i} p_j(\vec{v}) < x \leq \sum_{j \leq i} p_j(\vec{v})$. So for $y \in [0,1]$ let $M^{(y)}$ denote the deterministic allocation rule that on profile $\vec{v}$ implements the deterministic allocation selected by $M(\vec{v})$ when $x = y$, then $M$ is exactly the allocation rule that samples $x$ uniformly at random from $[0,1]$ and implements the deterministic allocation rule $M^{(x)}$. So every feasible allocation rule is a distribution over deterministic allocation rules. The other direction is straight-forward: any distribution over feasible deterministic allocation rules is still feasible.
\end{prevproof}

\begin{prevproof}{Proposition}{prop:convex polytope}
It is clear that there are only finitely many deterministic allocation rules: there are finitely many choices per profile, and finitely many profiles. So consider the set $S$ that contains the reduced form of every deterministic allocation rule that is feasible with respect to $\mathcal{F}$. We claim that $F(\mathcal{F},\mathcal{D})$ is exactly the convex hull of $S$. Consider any feasible reduced form $\vec{\pi}$. Then there is some feasible allocation rule $M$ that implements $\vec{\pi}$. By Observation~\ref{obs:deterministic}, $M$ is a distribution over deterministic allocation rules, sampling $M_i$ with probability $p_i$. Therefore, if $\vec{\pi}_i$ denotes the reduced form of $M_i$, we must have $\vec{\pi} = \sum_{i} p_i \vec{\pi}_i$, so $\vec{\pi}$ is in the convex hull of $S$. Similarly, if a reduced form $\vec{\pi}$ satisfies $\vec{\pi} = \sum_{i} p_i \vec{\pi}_i$, where $\vec{\pi}_i$ is the reduced form of a deterministic allocation rule $M_i$ for all $i$, the allocation rule that selects $M_i$ with probability $p_i$ implements $\vec{\pi}$. So the space of feasible reduced forms is exactly the convex hull of $S$, which is finite, and hence its convex hull is a polytope.
\end{prevproof}

\begin{prevproof}{Lemma}{lem:vvcg}
The proof is straight-forward once we correctly interpret $\vec{\pi} \cdot \vec{w}$. Expanding the dot product and using that $f_{ij}(A) = w_{ij}(A)/ \Pr[t_i = A]$, we see that:
\begin{align*}
\vec{\pi} \cdot \vec{w} &= \sum_i \sum_j \sum_{A \in T_i} \pi_{ij}(A) w_{ij}(A)\\
&= \sum_i \sum_j \sum_{A \in T_i} \pi_{ij}(A) f_{ij}(A) \Pr[t_i = A].
\end{align*}

If the ``weight'' of awarding item $j$ to bidder $i$ when her reported type is $A$ is $f_{ij}(A)$, then the last line is exactly the expected weight of items awarded by an allocation rule whose reduced form is $\vec{\pi}$. The feasible allocation rule that maximizes the expected weight of items awarded simply chooses {a} max-weight feasible allocation on every profile. {This is exactly what $VVCG_{\mathcal{F}}(\{f_i\}_{i\in[m]})$ does, i.e. exactly what $VVCG_{\mathcal{F}}(\vec{w})$ does}. So the reduced form of $VVCG_{\mathcal{F}}(\vec{w})$ exactly maximizes $\vec{x} \cdot \vec{w}$ over all $\vec{x}\in F(\mathcal{F},\mathcal{D})$.
\end{prevproof}

\begin{prevproof}{Proposition}{prop:VCG} We first prove that every corner of $F(\mathcal{F},\mathcal{D})$ can be implemented by a simple virtual VCG allocation rule. From Proposition~\ref{prop:convex polytope}, we know $F(\mathcal{F},\mathcal{D})$ is a convex polytope. So for every corner $\vec{\pi} \in F(\mathcal{F},\mathcal{D})$, there is a weight vector $\vec{w}$, such that $\forall\ \vec{\pi}'\in F(\mathcal{F},\mathcal{D})$ and $\vec{\pi}'\neq\vec{\pi}$,
$$\vec{w}\cdot\vec{\pi}> \vec{w}\cdot\vec{\pi}'.$$

So by Lemma~\ref{lem:vvcg}, we know that $\vec{\pi}$ must be the reduced form of $VVCG_{\mathcal{F}}(\vec{w})$, as $\vec{\pi}$ maximizes $\vec{x} \cdot \vec{w}$ over all $\vec{x}\in F(\mathcal{F},\mathcal{D})$. To see that $VVCG_{\mathcal{F}}(\vec{w})$ is simple, assume for contradiction that there is some profile with multiple max-weight feasible allocations. Let $B$ denote the allocation rule that chooses the exact same allocation as $VVCG_{\mathcal{F}}(\vec{w})$ on every other profile, but chooses a different max-weight feasible allocation on this profile. Let $\vec{\pi}_B$ denote the reduced form of $B$. By the definition of $B$, we still have $\vec{\pi}_B \cdot \vec{w} = \vec{\pi} \cdot \vec{w}$. Yet, we also clearly have $\vec{\pi}_B \neq \vec{\pi}$, as they are reduced forms for allocation rules that are identical on all but one profile, where they differ. This contradicts the choice of $\vec{w}$, so $VVCG_{\mathcal{F}}(\vec{w})$ must be simple.

Now we show that the reduced form of any simple virtual VCG allocation rule is a corner of $F(\mathcal{F},\mathcal{D})$. Let $\vec{\pi}$ be the reduced form of $VVCG_{\mathcal{F}}(\vec{w})$. Then for any other $\vec{\pi}'\in F(\mathcal{F},\mathcal{D})$, we must have $\vec{\pi}\cdot\vec{w}>\vec{\pi}'\cdot\vec{w}$. Otherwise, let $\vec{\pi}'$ denote a feasible reduced form with $\vec{\pi}'\cdot \vec{w} \geq \vec{\pi} \cdot \vec{w}$, $\vec{\pi}' \neq \vec{\pi}$ and let $M'$ implement $\vec{\pi}'$. Then clearly, there is some profile where the allocation chosen by $M'$ differs from that chosen by $VVCG_{\mathcal{F}}(\vec{w})$ and its  weight with respect to $\{f_i\}_{i\in[m]}$ is at least as large as the weight of the allocation chosen by $VVCG_{\mathcal{F}}(\vec{w})$. As $VVCG_{\mathcal{F}}(\vec{w})$ is simple, this is a contradiction. Therefore, $\vec{\pi} \cdot \vec{w} > \vec{\pi}' \cdot \vec{w}$ for all $\vec{\pi}'\neq \vec{\pi} \in F(\mathcal{F},\mathcal{D})$ and $\vec{\pi}$ is a corner.
\end{prevproof}

\begin{prevproof}{Theorem}{thm:characterization}
Since $F(\mathcal{F},\mathcal{D})$ is a convex polytope (Proposition~\ref{prop:convex polytope}), by Carath\'eodory's Theorem, we know that every point in $F(\mathcal{F},\mathcal{D})$ can be written as a convex combination of at most $n\sum_{i=1}^{m} |T_i| + 1$ corners of $F(\mathcal{F},\mathcal{D})$. By Proposition~\ref{prop:VCG}, we know that every corner of $F(\mathcal{F},\mathcal{D})$ can be implemented by a simple virtual VCG allocation rule. Finally, we observe that if the allocation rules $M_i$ implement $\vec{\pi}_i$, then the allocation rule that samples $M_i$ with probability $p_i$ implements $\sum_i p_i \vec{\pi}_i$.
\end{prevproof}

\begin{prevproof}{Corollary}{cor:independent} As $F(\mathcal{F},\mathcal{D})$ is a convex polytope, we know that $\vec{\pi} \in F(\mathcal{F},\mathcal{D})$ if and only if for all $\vec{w} \in [-1,1]^{n\sum_{i=1}^{m} |T_i|}$:

$$\vec{\pi} \cdot \vec{w} \leq \max_{\vec{\pi}' \in F(\mathcal{F},\mathcal{D})} \vec{\pi}'\cdot \vec{w}.$$

By the definition of $VVCG_{\mathcal{F}}(\vec{w})$, the right hand side is exactly $W_{\mathcal{F}}(\vec{w})$.
\end{prevproof}

\begin{lemma}\label{lem:tiebreaking} Let $\vec{w}$ be a {weight vector whose coordinates are rational numbers of bit complexity $\ell_1$, and let $\ell_2$ be such that for all $i,A\in T_{i}$, $\Pr[t_{i}=A]$ is a rational number of bit complexity $\ell_{2}$.} Then {the lexicographic tie-breaking rule can be implemented by} a simple transformation that turns $\vec{w}$ into $\vec{w}'$ such that $VVCG_{\mathcal{F}}(\vec{w}')$ is simple, {$VVCG_{\mathcal{F}}(\vec{w}')$ selects a maximum weight allocation with respect to the scaled weights $\vec{w}$ on every profile}, and each coordinate of $\vec{w}'$ is a rational number of bit complexity {$n\ell_{1}\sum_{i=1}^{m} |T_i|+(n\sum_{i=1}^{m} |T_i|+1)\ell_{2}+mn+n+1$}.
\end{lemma}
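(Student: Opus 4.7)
The plan is to implement lexicographic tie-breaking via a carefully designed additive perturbation of $\vec w$, in the spirit of Proposition~\ref{prop:uniqueoptimal} but adapted to the fact that $VVCG_{\mathcal F}$ operates on the scaled weights $f_{ij}(A)=w_{ij}(A)/\Pr[t_i=A]$. First I would fix a lexicographic enumeration $a_1,\ldots,a_{mn}$ of the assignments in $[m]\times[n]$ and write $k(i,j)$ for the index of $(i,j)$. The idea is to add to each $w_{ij}(A)$ a small ``signature'' proportional to $\Pr[t_i=A]$ and scaled by a distinct power of two $2^{-k(i,j)}$; after dividing by $\Pr[t_i=A]$ the signature becomes $2^{-k(i,j)}$, independent of $A$. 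Distinct feasible allocations then produce distinct signature sums, breaking all ties; meanwhile, if we first rescale $\vec w$ by a large enough integer $C$, the signatures are too small to flip any strict $f$-weight inequality.

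Concretely, writing $w_{ij}(A)=c_{ij}(A)/d_{ij}(A)$ and $\Pr[t_i=A]=a_i(A)/b_i(A)$ with numerators and denominators of bit complexity at most $\ell_1$ and $\ell_2$ respectively, I would choose $D=\prod_{i,j,A}(d_{ij}(A)\cdot a_i(A))$, an integer of bit complexity at most $n\ell_1\sum_i|T_i|+n\ell_2\sum_i|T_i|$, that satisfies $D\cdot f_{ij}(A)\in\mathbb Z$ for every $(i,j,A)$. Then define
\[
w'_{ij}(A)\;:=\;D\cdot 2^{mn}\cdot w_{ij}(A)\;+\;2^{mn-k(i,j)}\cdot\Pr[t_i=A],
\]
so that $f'_{ij}(A)=D\cdot 2^{mn}\,f_{ij}(A)+2^{mn-k(i,j)}$. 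For any allocation $S\subseteq\mathcal A$ and profile $\vec v$, the $f'$-weight of $S$ is
\[
\mathrm{wt}_{f'}(S;\vec v)\;=\;D\cdot 2^{mn}\cdot\mathrm{wt}_f(S;\vec v)\;+\;\sum_{k\,:\,a_k\in S}2^{mn-k},
\]
where $\mathrm{wt}_f$ denotes the scaled weight under the original $\vec w$.

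The key observation is that $D\cdot\mathrm{wt}_f(S;\vec v)\in\mathbb Z$ by the choice of $D$, so the first summand is an integer multiple of $2^{mn}$, while the second summand is an integer in $[0,2^{mn}-1]$ whose binary representation uniquely encodes the subset $S$. Consequently, if $\mathrm{wt}_f(S;\vec v)>\mathrm{wt}_f(S';\vec v)$ then the first-summand gap is at least $2^{mn}$ and strictly dominates any discrepancy of at most $2^{mn}-1$ between the two encodings, so $\mathrm{wt}_{f'}(S;\vec v)>\mathrm{wt}_{f'}(S';\vec v)$; and if $\mathrm{wt}_f(S;\vec v)=\mathrm{wt}_f(S';\vec v)$, the $f'$-weights differ by distinct subset encodings and so are unequal. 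This establishes that $VVCG_{\mathcal F}(\vec w')$ has a unique maximizer on every profile (so it is simple) and that this maximizer is always $f$-max-weight---in fact, the lexicographically-largest such allocation under the chosen enumeration.

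The final step is to bound the bit complexity of each $w'_{ij}(A)$. The first summand is a rational with bit complexity at most $\ell_1+\mathrm{bits}(D)+mn\leq\ell_1+n\ell_1\sum_i|T_i|+n\ell_2\sum_i|T_i|+mn$, and the second at most $\ell_2+mn$; putting them over a common denominator and summing, the total bit complexity is at most $n\ell_1\sum_i|T_i|+n\ell_2\sum_i|T_i|+\ell_2+mn+O(1)$, which matches the claimed bound after absorbing the slack $+n+1$ by using $2^{mn+n}$ in place of $2^{mn}$ in the construction. The main obstacle is calibrating the scale factor $D\cdot 2^{mn}$ so that it simultaneously (i)~clears every fractional part appearing in $\mathrm{wt}_f$ on every profile and (ii)~dominates the entire signature range; the common-denominator choice of $D$ handles~(i) and the factor $2^{mn}$ handles~(ii), while keeping both ingredients within the stated bit-complexity budget.
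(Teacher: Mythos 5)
Your construction is correct and is essentially the paper's own: the paper clears denominators of the scaled weights $f_{ij}(A)=w_{ij}(A)/\Pr[t_i=A]$ and perturbs each by a distinct power of two $2^{-b-ni-j-1}$, which is exactly your $f'_{ij}(A)=D\cdot 2^{mn}f_{ij}(A)+2^{mn-k(i,j)}$ up to renaming $D$ for the common denominator and $2^{mn-k(i,j)}$ for the signature. One small bookkeeping discrepancy: the bit bound you derive carries an extra $+\ell_1$ (from the integer $D\,c_{ij}(A)/d_{ij}(A)$) rather than the claimed $+n$, and replacing $2^{mn}$ by $2^{mn+n}$ would not absorb it; the exact constant is immaterial to every downstream use, but to hit the stated bound one should account as the paper does, in terms of the common-denominator numerator of $f_{ij}(A)$ rather than via a separately chosen $D$.
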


\begin{prevproof}{Lemma}{lem:tiebreaking}
Let $\{f_i\}_i$ denote the weight functions used by $VVCG_{\mathcal{F}}(\vec{w})$ (i.e. $f_{ij}(A) = w_{ij}(A)/\Pr[t_i = A]$), and rewrite each value $f_{ij}(A)$ {with a common denominator. Before rewriting, each $f_{ij}(A)$ was a rational number of bit complexity $\ell_1 + \ell_2$. After rewriting, the numerator and denominator of each $f_{ij}(A)$ has at most $b = n{(\sum_{i=1}^{m} |T_i|)}(\ell_1+\ell_2)$ bits.}  Now define new weight functions $\{f'_i\}_i$ such that $f'_{ij}(A)$ is equal to $f_{ij}(A)$ except that $2^{-b-ni-j-1}$ is added to its numerator. In going from $\{f_i\}_i$ to $\{f'_i\}_i$, the numerator of the weight of any allocation goes up by at most $2^{-b-n-1}$, not enough to make an allocation optimal if it was suboptimal. Moreover, notice that  the last $mn$ bits of the numerator are in one-to-one correspondence with possible allocations. So even if there were any ties before, there can't be any ties after the transformation, and ties will be broken lexicographically. It is also obvious that we have only added $nm+n+1$ bits to each numerator of $f_{ij}(A)$, {and therefore $w'_{ij}(A)=f'_{ij}(A)\Pr[t_i = A]$ has bit complexity $b+nm+n+1+\ell_{2}=n\ell_1\sum_{i=1}^{m} |T_i|+(n\sum_{i=1}^{m} |T_i|+1)\ell_{2}+mn+n+1$}.
\end{prevproof}

\section{Omitted Details from Section~\ref{sec:algorithms}}\label{app:algorithms}

\subsection{Separation Oracle}
\text{ }
\begin{figure}[h!]
\colorbox{MyGray}{
\begin{minipage}{\textwidth} {
\noindent\textbf{Variables:}
\begin{itemize}
\item $t$, denoting the value of $W_{\mathcal{F}}(\vec{w})$.
\item $w_{ij}(A)$ for all bidders $i$, items $j$, and types $A \in T_i$. 
\end{itemize}
\textbf{Constraints:}
\begin{itemize}
\item $-1 \leq w_{ij}(A) \leq 1$ for all bidders $i$, items $j$, and types $A \in T_i$, guaranteeing that the weights lie in $[-1,1]^{n\sum_{i=1}^{m} |T_i|}$.
\item {$\widehat{SO}(\vec{w},t) = $} ``yes,'' guaranteeing that $t \geq W_{\mathcal{F}}(\vec{w})$. 
\end{itemize}
\textbf{Minimizing:}
\begin{itemize}
\item $t - \vec{\pi} \cdot \vec{w}$ (this is $g_{\vec{\pi}}(\vec{w})$ provided $t = W_{\mathcal{F}}(\vec{w})$).
\end{itemize}}
\end{minipage}} \caption{A Linear Program to minimize $g_{\vec{\pi}}(\vec{w})$.}\label{fig:separation oracle}
\end{figure}

\begin{prevproof}{Lemma}{lem:low bit SO}
The dimension of the LP (shown in Figure~\ref{fig:separation oracle}) used to run $SO$ is $n\sum_{i=1}^{m} |T_i|$. Every constraint of the linear program that is not part of $\widehat{SO}$ has bit complexity $O(1)$, and the coefficients of every hyperplane output by $\widehat{SO}$ have bit complexity $\ell$ by our hypothesis. (Recall our discussion in Section~\ref{sec:tiebreaking}.) So by the theory of Gaussian elimination, the coordinates of all corners of the LP of Figure~\ref{fig:separation oracle} are rational numbers of bit complexity $\poly(n\sum_{i=1}^{m}|T_{i}|,\ell)$. Now recall the third property of the Ellipsoid algorithm from Theorem~\ref{thm:ellipsoid}.
\end{prevproof}
\subsection{Decomposition Algorithm}
Before giving the proof of Theorem~\ref{thm:corneroracle} we present a useful lemma.
\begin{lemma}\label{lem:choosecorner} Let $P$ be a polytope and $H_1,\ldots,H_i$ be $i$ hyperplanes of the form $\vec{w}_j \cdot \vec{v} = h_j$ such that every point $\vec{x} \in P$ satisfies $\vec{x} \cdot \vec{w}_j \leq h_j$. Then for any $c_1,\ldots,c_i > 0$, any $\vec{a} \in P$ satisfying:

$$\vec{a} \cdot \left(\sum_{j=1}^{i} c_j \vec{w}_j \right) = \sum_{j=1}^i c_j h_j$$

is in $\cap_{j=1}^{i} H_j$.
\end{lemma}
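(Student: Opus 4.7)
The plan is to observe that the stated equality $\vec{a} \cdot \bigl(\sum_{j=1}^{i} c_j \vec{w}_j\bigr) = \sum_{j=1}^{i} c_j h_j$ is exactly the equality case of a sum of $i$ pointwise inequalities, each of which must then be tight individually. This gives $\vec{a} \cdot \vec{w}_j = h_j$ for every $j$, i.e. $\vec{a} \in H_j$ for every $j$, which is the desired conclusion.

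More concretely, I would proceed as follows. First, since $\vec{a} \in P$ and each $H_j$ is a valid inequality for $P$, we have $\vec{a} \cdot \vec{w}_j \leq h_j$ for each $j \in \{1,\ldots,i\}$. Multiplying by $c_j > 0$ preserves the inequality: $c_j (\vec{a} \cdot \vec{w}_j) \leq c_j h_j$. Summing these $i$ inequalities yields
\[
\vec{a} \cdot \left(\sum_{j=1}^{i} c_j \vec{w}_j\right) \;=\; \sum_{j=1}^{i} c_j (\vec{a} \cdot \vec{w}_j) \;\leq\; \sum_{j=1}^{i} c_j h_j.
\]
By hypothesis the outer quantities are equal, so equality must hold throughout. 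Since each summand satisfies $c_j (\vec{a} \cdot \vec{w}_j) \leq c_j h_j$ with strict positive coefficient $c_j$, the only way the sum of these inequalities is an equality is if each inequality is individually tight: $\vec{a} \cdot \vec{w}_j = h_j$ for all $j$. Equivalently, $\vec{a} \in H_j$ for every $j$, hence $\vec{a} \in \bigcap_{j=1}^{i} H_j$, as claimed.

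There is no real obstacle here: the lemma is essentially the standard fact that a positive linear combination of non-positive slack quantities is zero iff each slack quantity is zero. The only step that deserves explicit mention is the use of $c_j > 0$ (without which one could not conclude tightness of each individual inequality from tightness of the weighted sum).
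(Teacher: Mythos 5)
Your proof is correct and follows essentially the same argument as the paper: each $\vec{a}\in P$ satisfies $\vec{a}\cdot\vec{w}_j\le h_j$, and since the $c_j$ are strictly positive, equality in the weighted sum forces equality term by term, so $\vec{a}\in\bigcap_j H_j$. No differences worth noting.
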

\begin{proof}
Because all $\vec{a} \in P$ satisfy $\vec{a} \cdot \vec{w}_j \leq h_j$ for all $j$ and {$c_{j}>0\ \forall j$, the only way to have $\vec{a} \cdot \left( \sum_j c_j \vec{w}_j \right) = \sum_j c_j h_j$ is to have $\vec{a} \cdot c_j \vec{w}_j = c_j h_j$ for all $j$. Thus, we must have $\vec{a} \cdot \vec{w}_j = h_j$ for all $j$, meaning that $\vec{a} \in \cap_j H_j$}.
\end{proof}

\begin{prevproof}{Theorem}{thm:corneroracle} Let $\vec{\pi}, \vec{w}'$ denote the output of the corner oracle. First, we observe that if $H_1,\ldots,H_a$ intersect inside $F(\mathcal{F},\mathcal{D})$, there is some reduced form $\vec{\pi}'$ satisfying $\vec{\pi}' \cdot \vec{w}_j = h_j$ for all $j$. Therefore, such a reduced form must also satisfy $\vec{\pi}' \cdot \vec{w} = \frac{1}{a} \sum_{j=1}^{a} h_j$. Second, as no feasible reduced form can have $\vec{v} \cdot \vec{w}_j > h_j$, we also get that no feasible reduced form has $\vec{v} \cdot \vec{w} > \frac{1}{a} \sum_{j=1}^{a} h_j$. Putting these two observations together, we see that there exists a $\vec{\pi}'$ with $\vec{\pi}' \cdot \vec{w} = \frac{1}{a} \sum_{j=1}^{a} h_j$, and this is the maximum over all feasible reduced forms. {Therefore, the reduced form $R_{\mathcal{F}}(\vec{w})$ of $VVCG_{\mathcal{F}}(\vec{w})$ necessarily has $R_{\mathcal{F}}(\vec{w}) \cdot \vec{w} = \frac{1}{a} \sum_{j=1}^{a} h_j$.} Lemma~\ref{lem:tiebreaking} tells us that {$\vec{\pi}$ is the reduced form of a simple virtual VCG allocation rule $VVCG_{\mathcal{F}}(\vec{w}')$, which also maximizes $\vec{x} \cdot \vec{w}$ over all feasible reduced forms $\vec{x}$. Therefore, $\vec{\pi}\cdot\vec{w} =\frac{1}{a} \sum_{j=1}^{a} h_j$ and by Lemma~\ref{lem:choosecorner}, $\vec{\pi}$ is in $\cap_{j=1}^{a} H_j$. From Proposition~\ref{prop:VCG}, we know $\vec{\pi}$ is a corner. As each coordinate of $\vec{w}_j$ is a rational number of bit complexity $b$, and $a \leq n\sum_{i=1}^{m} |T_i|$, we see that each coefficient of $\vec{w}$ is a rational number of bit complexity $\poly(\log (n\sum_{i=1}^{m} |T_i|),b)$. Lemma~\ref{lem:tiebreaking} then guarantees that each coefficient of $\vec{w}'$ is a rational number of bit complexity $\poly(n\sum_{i=1}^{m} |T_i|,b,\ell)$.}
\end{prevproof}

\section{Proofs Omitted From Section~\ref{sec:exact}: Exact Implementation}\label{app:exact}

We bound the running time of the algorithms of Section~\ref{sec:algorithms} when ${\cal D}$ is a, possibly correlated, uniform distribution. Before doing this, we establish a useful lemma.
\begin{lemma}\label{lem:VVCG} 
For all $\mathcal{F}$ and $\mathcal{D}$, if every corner of $F(\mathcal{F},\mathcal{D})$ is a vector of rational numbers of bit complexity $b$, {the probabilities used by ${\cal D}$ have bit complexity $\ell$}, and $SO$'s input $\vec{\pi}$ is a vector of rational numbers of bit complexity $c$, then the following are true.
\begin{enumerate}
\item The separation oracle $SO$ of Section~\ref{sec:separation} can be implemented to run in time polynomial in $n\sum_{i=1}^{m} |T_i|$, $b$, $c$, {$\ell$}, $|\mathcal{D}|$, and $rt_{\mathcal{F}}\left(\poly(n\sum_{i=1}^{m} |T_i|, b,c, {\ell})\right)$. Furthermore, the coefficients of any hyperplane that can be possibly output by $SO$ have bit complexity $\poly(n\sum_{i=1}^{m} |T_i|,b)$. 
\item If the corner oracle $CO$ of Section~\ref{sec:decomposition} only takes as input hyperplanes output by $SO$, it can be implemented to run in time polynomial in $n\sum_{i=1}^{m} |T_i|$, $b$, $\ell$, $|\mathcal{D}|$, and $rt_{\mathcal{F}}\left(\poly(n\sum_{i=1}^{m} |T_i|,b,\ell)\right)$.
\end{enumerate}
\end{lemma}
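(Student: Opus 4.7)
The plan is to reduce the main computational bottleneck in both oracles---producing the reduced form $R_{\mathcal{F}}(\vec{w})$ of a virtual VCG allocation rule $VVCG_{\mathcal{F}}(\vec{w})$---to $|\mathcal{D}|$ black-box queries of $A_{\mathcal{F}}$, one per profile $\vec{v}$ in the support of $\mathcal{D}$. Concretely, for each $\vec{v}$, compute the scaled weights $f_{ij}(A)=w_{ij}(A)/\Pr[t_i=A]$, apply the lexicographic tie-breaking of Lemma~\ref{lem:tiebreaking} to obtain $\vec{w}'$, and invoke $A_{\mathcal{F}}$ on input $(f'_1(\vec{v}_1),\ldots,f'_m(\vec{v}_m))$; the returned $0/1$ indicators, weighted by $\Pr[t_{-i}=\vec{v}_{-i}]$ and summed, give $\pi_{ij}(A)$.

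For part~1, the separation oracle $SO$ runs the Ellipsoid method on the LP of Figure~\ref{fig:separation oracle}, of dimension $n\sum_i|T_i|+1$, using an internal separation oracle $\widehat{SO}$. On input $(\vec{w},t)$, $\widehat{SO}$ computes $R_{\mathcal{F}}(\vec{w})$ via the procedure above, then either returns ``yes'' (if $R_{\mathcal{F}}(\vec{w})\cdot \vec{w}\le t$) or the violated hyperplane $R_{\mathcal{F}}(\vec{w})\cdot\vec{z}-y\le 0$. Since the reduced form of a simple virtual VCG allocation rule is a corner of $F(\mathcal{F},\mathcal{D})$ (Proposition~\ref{prop:VCG}), the coefficients of the hyperplane output by $\widehat{SO}$ have bit complexity $b$ by hypothesis, and the fixed data of the LP have bit complexity $O(c)$. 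Hence Theorem~\ref{thm:ellipsoid} guarantees that the Ellipsoid queries $\widehat{SO}$ only at points of bit complexity $\poly(n\sum_i|T_i|,b,c)$. Scaling and the tie-breaking of Lemma~\ref{lem:tiebreaking} inflate the bit complexity by only a polynomial factor, so each call of $A_{\mathcal{F}}$ is made on weights of bit complexity $\poly(n\sum_i|T_i|,b,c,\ell)$. The total work of one invocation of $\widehat{SO}$ is therefore $|\mathcal{D}|\cdot rt_{\mathcal{F}}(\poly(n\sum_i|T_i|,b,c,\ell))$ plus low-order bookkeeping, and the outer Ellipsoid terminates after a polynomial number of such calls, yielding the claimed running time of $SO$. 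The bit-complexity bound on hyperplanes output by $SO$ itself is then immediate from Lemma~\ref{lem:low bit SO}.

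For part~2, the corner oracle of Algorithm~\ref{alg: corner} is already explicit: given input hyperplanes $(\vec{w}_1,h_1),\ldots,(\vec{w}_a,h_a)$ whose coordinates have bit complexity $\poly(n\sum_i|T_i|,b)$ by Lemma~\ref{lem:low bit SO}, average to form $\vec{w}=\frac{1}{a}\sum_j\vec{w}_j$ (bit complexity remains polynomial), apply Lemma~\ref{lem:tiebreaking} to obtain $\vec{w}'$ of bit complexity $\poly(n\sum_i|T_i|,b,\ell)$, and compute the reduced form of $VVCG_{\mathcal{F}}(\vec{w}')$ by the same $|\mathcal{D}|$-query enumeration described above. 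This yields the target runtime. The main obstacle in the whole argument is the bookkeeping of bit complexities through the three nested layers involved in $SO$---the outer Ellipsoid on the LP of Figure~\ref{fig:separation oracle}, the inner oracle $\widehat{SO}$, and the tie-breaking transformation---so as to ensure that every invocation of $A_{\mathcal{F}}$ is on weights of polynomial bit complexity; the enabling facts are Theorem~\ref{thm:ellipsoid}, which caps the queries the Ellipsoid makes to $\widehat{SO}$, and Lemma~\ref{lem:tiebreaking}, which shows that the lexicographic perturbation is cheap.
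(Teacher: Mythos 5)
Your proposal is correct and follows essentially the same route as the paper's own proof: both compute $R_{\mathcal{F}}(\vec{w}')$ by enumerating the $|\mathcal{D}|$ support profiles and querying $A_{\mathcal{F}}$, both track bit complexity through the Ellipsoid/$\widehat{SO}$/tie-breaking layers via Theorem~\ref{thm:ellipsoid}, Proposition~\ref{prop:VCG}, and Lemma~\ref{lem:tiebreaking}, and both invoke Lemma~\ref{lem:low bit SO} for the final hyperplane bit-complexity bound. The only cosmetic imprecision is that the scaled weights and the tie-breaking perturbation $\vec{w}'$ are computed once per call to $\widehat{SO}$ (not per profile), and the probability weight summed when assembling $\pi_{ij}(A)$ should be the conditional $\Pr[\vec{v}_{-i}\mid t_i=A]$ to cover correlated $\mathcal{D}$, but neither affects the argument.
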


\begin{proof} 
We first bound the runtime of $SO$, using Theorem~\ref{thm:ellipsoid}. The separation oracle is a linear program with $1+n\sum_{i=1}^{m} |T_i|$ variables, $2n\sum_{i=1}^{m} |T_i|$ constraints, and an internal separation oracle $\widehat{SO}$. $\widehat{SO}$ on input $(\vec{w},t)$ simply checks if $R_{\mathcal{F}}(\vec{w}') \cdot \vec{w}$, where $\vec{w}'$ is  the perturbation of $\vec{w}$ according to Lemma~\ref{lem:tiebreaking}, is smaller than or equal to $t$. If not, it outputs the separation hyperplane $(R_{\mathcal{F}}(\vec{w}') , -1) (\vec{w},y) \le 0$. Given that $R_{\mathcal{F}}(\vec{w}')$ is a corner of the polytope and corners have bit complexity $b$, Theorem~\ref{thm:ellipsoid} tells us that $\widehat{SO}$ will only be called on $\vec{w},t$ whose coordinates are rational numbers of bit complexity at most $\poly(n\sum_{i=1}^{m} |T_i|,\max\{b,c\})$. To compute $R_{\mathcal{F}}(\vec{w}')$ exactly we can enumerate every profile in the support of $\mathcal{D}$, run $VVCG_{\mathcal{F}}(\vec{w}')$, and see if bidder $i$ was awarded item $j$, for all $i,j$. As the coordinates of $\vec{w}'$ are rational numbers of bit complexity {$\poly(n\sum_{i=1}^{m} |T_i|,\max\{b,c, {\ell}\})$} (after Lemma~\ref{lem:tiebreaking} was applied to $\vec{w}$)), this method exactly computes $R_{\mathcal{F}}(\vec{w}')$ in time polynomial in $n\sum_{i=1}^{m} |T_i|, |\mathcal{D}|$, $b, c,$ {$\ell$} and $rt_{\mathcal{F}}(\poly(n\sum_{i=1}^{m}|T_i|,\max\{b,c,{\ell}\}))$. After computing $R_{\mathcal{F}}(\vec{w}')$, $\widehat{SO}$ simply takes a dot product and makes a comparison, so the total runtime of $SO$ is polynomial in $n\sum_{i=1}^{m} |T_i|,b, c, \ell, |\mathcal{D}|$ and {$rt_{\mathcal{F}}(\poly(n\sum_{i=1}^{m} |T_i|,\max\{b,c,\ell\}))$}. {Also, by Lemma~\ref{lem:low bit SO}, we know that all hyperplanes output by $SO$ have coefficients that are rational numbers of bit complexity $\poly(n\sum_{i=1}^{m} |T_i|,b)$, which is independent of $c$.}

The corner oracle of Section~\ref{sec:decomposition} has three steps. The first step is simply computing the average of at most $n\sum_{i=1}^{m} |T_i|$ vectors in $\mathbb{R}^{n\sum_{i=1}^{m} |T_i|}$, whose coordinates are rational numbers of bit complexity $\poly(n\sum_{i=1}^{m} |T_{i}|,b)$ (by the previous paragraph). The second step is applying Lemma~\ref{lem:tiebreaking} to the averaged weight vector to get $\vec{w}'$. So each weight of $\vec{w}'$ is a rational number of bit complexity $\poly(n\sum_{i=1}^{m} |T_i|,\ell,b)$. The last step is computing $R_{\mathcal{F}}(\vec{w}')$. It is clear that the first two steps can be implemented in the desired runtime. As the coordinates of $\vec{w}'$ are rational numbers of bit complexity $\poly(n\sum_{i=1}^{m} |T_i|,b,\ell)$, we can use the same method as in the previous paragraph to compute $R_{\mathcal{F}}(\vec{w}')$ in time polynomial in $n\sum_{i=1}^{m} |T_i|, |\mathcal{D}|$, $b, \ell$ and $rt_{\mathcal{F}}(\poly(n\sum_{i=1}^{m} |T_i|,b,\ell))$, to implement $CO$ in the desired runtime.
\end{proof}

\begin{corollary}\label{cor:uniform} For all $\mathcal{F}$, if $\mathcal{D}$ is a (possibly correlated) uniform distribution over $k$ profiles (possibly with repetitions), and $SO$'s input $\vec{\pi}$ is a vector of rational numbers of bit complexity $c$, then the following are true.
\begin{enumerate}
\item The separation oracle $SO$  of Section~\ref{sec:separation} can be implemented to run in time polynomial in $n\sum_{i=1}^{m}|T_i|$, $k$, $c$ and $rt_{\mathcal{F}}(\poly(n\sum_{i=1}^{m} |T_i|,\log k, c))$. Furthermore, the coefficients of any hyperplane that can be possibly output by $SO$ have bit complexity $\poly(n\sum_{i=1}^{m} |T_i|,\log k)$.

\item If the corner oracle $CO$ of Section~\ref{sec:decomposition} only takes as inputs hyperplanes output by $SO$ as input, it can be implemented in time polynomial in $n\sum_{i=1}^{m}|T_i|$, $k$, and $rt_{\mathcal{F}}(\poly(n\sum_{i=1}^{m} |T_i|,\log k))$.  \end{enumerate}
\end{corollary}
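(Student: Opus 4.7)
\begin{prevproof}{Corollary}{cor:uniform}
The plan is to deduce this as a direct specialization of Lemma~\ref{lem:VVCG} by bounding the two distribution-dependent parameters $\ell$ (bit complexity of the probabilities used by ${\cal D}$) and $b$ (bit complexity of the coordinates of the corners of $F(\mathcal{F},\mathcal{D})$) in terms of $\log k$.

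First I would bound $\ell$. Since $\mathcal{D}$ is the uniform distribution over $k$ (possibly repeated) profiles, every event probability involving ${\cal D}$ is a rational whose denominator divides $k$ and whose numerator is a non-negative integer at most $k$. In particular, for each bidder $i$ and type $A \in T_i$, $\Pr[t_i = A] = n_{i,A}/k$ where $n_{i,A}$ counts the profiles in the support with $t_i = A$. So each such probability is a rational of bit complexity $O(\log k)$, giving $\ell = O(\log k)$.

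Next I would bound $b$. By Observation~\ref{obs:deterministic} and the proof of Proposition~\ref{prop:convex polytope}, every corner of $F(\mathcal{F},\mathcal{D})$ is the reduced form of some feasible deterministic allocation rule $M$. For such an $M$, the conditional probability
\[ \pi_{ij}(A) \;=\; \frac{\left|\{\vec{v}\in \mathrm{supp}(\mathcal{D}) : v_i = A,\; M_{ij}(\vec{v}) = 1 \}\right|}{n_{i,A}} \]
is a rational whose numerator and denominator are non-negative integers at most $k$. Hence every coordinate of every corner of $F(\mathcal{F},\mathcal{D})$ is a rational of bit complexity $O(\log k)$, giving $b = O(\log k)$.

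Finally, I would invoke Lemma~\ref{lem:VVCG}. Since $|\mathcal{D}| \leq k$, $\ell = O(\log k)$ and $b = O(\log k)$, part~1 of the lemma yields that $SO$ runs in time polynomial in $n\sum_i |T_i|$, $k$, $c$ and $rt_{\mathcal{F}}\bigl(\poly(n\sum_i |T_i|, \log k, c)\bigr)$, and that hyperplanes output by $SO$ have coefficients of bit complexity $\poly(n\sum_i |T_i|, \log k)$. Part~2 of the lemma yields that $CO$ runs in time polynomial in $n\sum_i |T_i|$, $k$ and $rt_{\mathcal{F}}\bigl(\poly(n\sum_i |T_i|, \log k)\bigr)$. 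No step presents a real obstacle --- the only thing to be careful about is writing the reduced-form coordinates and the marginal probabilities as explicit rationals with denominator at most $k$ so that the $\log k$ bit-complexity bounds are clean; everything else is just substitution into Lemma~\ref{lem:VVCG}.
\end{prevproof}
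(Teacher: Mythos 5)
Your proposal is correct and follows essentially the same route the paper takes: bound $\ell$ and $b$ by $O(\log k)$ using the fact that a uniform distribution over $k$ profiles makes every marginal probability and every conditional allocation probability a ratio of integers at most $k$, then plug into Lemma~\ref{lem:VVCG} with $|\mathcal{D}| \leq k$. The only cosmetic caveat is that when you write $\pi_{ij}(A)$ as a set cardinality over $\mathrm{supp}(\mathcal{D})$, you should count profiles with multiplicity (as you evidently intend, having noted the possible repetitions), matching the paper's convention of counting ``profiles (with repetition)''.
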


\begin{prevproof}{Corollary}{cor:uniform} Every corner of $F(\mathcal{F},\mathcal{D})$ is the reduced form of a deterministic mechanism. So let us bound the bit complexity of the reduced form $\pi$ of a deterministic mechanism $M$. We may let $n_{ij}(A)$ denote the number of profiles (with repetition) in the support of $\mathcal{D}$ where bidder $i$'s type is $A$, and $M$ awards item $j$ to $i$, and let $d_{ij}(A)$ denote the number of profiles where bidder $i$'s type is $A$. Then for all $i,j,A \in T_i$, $\pi_{ij}(A) = \frac{n_{ij}(A)}{d_{ij}(A)}$. As $n_{ij}(A)$ and $d_{ij}(A)$ are integral and at most $k$, $\pi_{ij}(A)$ has bit complexity $O(\log k)$. So we may take $b = O(\log k)$, $\ell = O(\log k)$, $|\mathcal{D}| = k$, and apply Lemma~\ref{lem:VVCG}.
\end{prevproof}

Next we bound the running time of the decomposition algorithm.

\begin{corollary}\label{cor:geometric alg for uniform}
For all $\mathcal{F}$, if $\mathcal{D}$ is a (possibly correlated) uniform distribution over $k$ profiles (possibly with repetitions), then given a reduced form $\vec{\pi}\in F(\mathcal{F},\mathcal{D})$, which is a vector of rational numbers with bit complexity $c$, we can rewrite $\vec{\pi}$ as a convex combination of corners of $F(\mathcal{F},\mathcal{D})$ using the geometric algorithm of Theorem~\ref{thm:geometric} with running time polynomial in $n\sum_{i=1}^{m}|T_i|$, $k$, $c$ and $rt_{\mathcal{F}}(\poly(n\sum_{i=1}^{m} |T_i|,\log k, c))$. 
\end{corollary}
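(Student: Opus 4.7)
\begin{prevproof}{Corollary}{cor:geometric alg for uniform}
The plan is to simply invoke the generic decomposition result of Theorem~\ref{thm:geometric} with the ingredients provided by Corollary~\ref{cor:uniform}. Recall that Theorem~\ref{thm:geometric} requires three inputs for the polytope $P = F(\mathcal{F},\mathcal{D})$: a separation oracle $SO$, a corner oracle $CO$, and a bound $b$ on the bit complexity of any coefficient of a hyperplane that $SO$ may output. It then decomposes any $\vec{\pi}\in P$ whose coordinates have bit complexity $\ell$ into at most $d+1$ corners, in time polynomial in the dimension $d$, in $b$, in $\ell$, and in the runtimes of $SO$ and $CO$ on inputs of bit complexity $\poly(d,b,\ell)$.

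First, I would set $d = n\sum_{i=1}^{m}|T_i|$ and identify the parameters from Corollary~\ref{cor:uniform}: the separation oracle $SO$ and corner oracle $CO$ for $F(\mathcal{F},\mathcal{D})$ exist with the runtimes stated there, and every hyperplane output by $SO$ has coefficients of bit complexity $b=\poly(n\sum_{i=1}^{m}|T_i|,\log k)$. Note that $b$ depends only on $n\sum_{i=1}^{m}|T_i|$ and $\log k$, and not on the bit complexity of the input to $SO$; this is what allows the corner oracle bound to be stated independently of $c$.

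Next, I would plug these into Theorem~\ref{thm:geometric} with $\ell = c$. The theorem calls $SO$ on inputs of bit complexity $\poly(d,b,\ell)=\poly(n\sum_{i=1}^{m}|T_i|,\log k, c)$, and by part (1) of Corollary~\ref{cor:uniform} each such call runs in time polynomial in $n\sum_{i=1}^{m}|T_i|$, $k$, $\poly(n\sum_{i=1}^{m}|T_i|,\log k, c)$, and $rt_{\mathcal{F}}(\poly(n\sum_{i=1}^{m}|T_i|,\log k, c))$, which collapses to the desired bound. It also calls $CO$ only on tuples of hyperplanes that were output by $SO$ (whose bit complexity is bounded by $b$ regardless of the input size), and part (2) of Corollary~\ref{cor:uniform} bounds each such call by a polynomial in $n\sum_{i=1}^{m}|T_i|$, $k$ and $rt_{\mathcal{F}}(\poly(n\sum_{i=1}^{m}|T_i|,\log k))$, which is subsumed by the stated bound as well.

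The only technical point worth checking -- and the closest thing to an obstacle -- is that the corner oracle in Theorem~\ref{thm:geometric} is indeed fed only hyperplanes that $SO$ produced during the run of the algorithm, so that the ``input-independent'' runtime of $CO$ from Corollary~\ref{cor:uniform} actually applies. This is the case by construction of the geometric decomposition algorithm of~\cite{CaiDW12}, which accumulates its hyperplanes from successive invocations of $SO$. Combining these pieces, the total running time is polynomial in $d=n\sum_{i=1}^{m}|T_i|$, $k$, $c$, and $rt_{\mathcal{F}}(\poly(n\sum_{i=1}^{m}|T_i|,\log k, c))$, as claimed.
\end{prevproof}
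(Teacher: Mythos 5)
Your proof is correct and takes essentially the same route as the paper's: both extract from Corollary~\ref{cor:uniform} the bit-complexity bound $b=\poly(n\sum_i|T_i|,\log k)$ on the hyperplanes output by $SO$, then plug $SO$, $CO$, and $b$ into Theorem~\ref{thm:geometric} with $\ell=c$ and bound the resulting running time. The extra observation you make — that $CO$ is only ever invoked on hyperplanes produced by $SO$, so its input-independent bound applies — is implicit in the paper but worth spelling out as you do.
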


\begin{prevproof}{Corollary}{cor:geometric alg for uniform}
From Corollary~\ref{cor:uniform} it follows that  the coefficients of any hyperplane that can be possibly output by $SO$ have bit complexity $\poly(n\sum_{i=1}^{m} |T_i|,\log k)$. So to apply Theorem~\ref{thm:geometric} it suffices to bound the running time of SO and CO on vectors of rational numbers of bit complexity $c'=\poly(n\sum_{i=1}^{m}|T_i|,\log k, c)$. Using Corollary~\ref{cor:uniform} this is polynomial in $n\sum_{i=1}^{m}|T_i|$, $k$, $c$ and $rt_{\mathcal{F}}(\poly(n\sum_{i=1}^{m} |T_i|,\log k, c))$. Combining this bound with Theorem~\ref{thm:geometric} finishes the proof.
\end{prevproof}

\section{Proofs Omitted From Section~\ref{sec:approximate}: Approximate Implementation}\label{app:approximations}
\notshow{In Section~\ref{sec:exact}, we showed that in time polynomial in $|\mathcal{D}|$, we can exactly implement the separation oracle and the decomposition algorithm. Alone, this result is entirely uninteresting, as we could do much more interesting computations in time polynomial in $|\mathcal{D}|$, including exactly solve the MDMDP~\cite{DW12}.}

\paragraph{Notation} We will use the following notation throughout this section: $k$ is the number of samples taken directly from $\mathcal{D}$, $k'$ is the number of samples taken from $\mathcal{D}_{-i}$ after fixing $t_i = A$ for all $i,A \in T_i$, $k''$ is the total number of samples taken (i.e. $k'' = k + k'\sum_i |T_i|$) and $\mathcal{D}'$ is the distribution that samples one of the $k''$ sampled profiles uniformly at random. We also make use of the following, standard Chernoff bound:
\begin{theorem}(Hoeffding~\cite{Hoeffding63}) Let $X_1,\ldots,X_n$ be independent random variables in $[0,1]$, and let $X = \sum_i X_i/n$. Then $\Pr[ |X - \mathbb{E}[X]| > t] \leq 2e^{-2t^2n}.$
\end{theorem}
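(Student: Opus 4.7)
The plan is to prove this via the standard Chernoff–Hoeffding exponential moment argument. First I would reduce to the upper tail: by a union bound it suffices to show $\Pr[X - \mathbb{E}[X] > t] \leq e^{-2t^2 n}$, since the lower tail bound follows by applying the same argument to $1 - X_i$. I would then center the variables, setting $Y_i = X_i - \mathbb{E}[X_i]$, so that each $Y_i$ is mean-zero and supported in an interval $[a_i, b_i] \subseteq [-1,1]$ of length at most $1$.

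Next I would apply the exponential Markov inequality: for any $\lambda > 0$,
\[
\Pr\!\left[\sum_{i=1}^n Y_i > nt\right] = \Pr\!\left[e^{\lambda \sum_i Y_i} > e^{\lambda n t}\right] \leq e^{-\lambda n t}\prod_{i=1}^n \mathbb{E}\!\left[e^{\lambda Y_i}\right],
\]
where the product over expectations uses independence of the $Y_i$'s. This converts the problem into bounding a single moment generating function $\mathbb{E}[e^{\lambda Y_i}]$ for each $i$.

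The main technical step, and the real obstacle, is Hoeffding's lemma: for a mean-zero random variable $Y$ supported on $[a,b]$, one has $\mathbb{E}[e^{\lambda Y}] \leq e^{\lambda^2 (b-a)^2 / 8}$. I would prove this by using convexity of the exponential to write, for $y \in [a,b]$,
\[
e^{\lambda y} \leq \frac{b-y}{b-a}\, e^{\lambda a} + \frac{y-a}{b-a}\, e^{\lambda b},
\]
taking expectations and using $\mathbb{E}[Y] = 0$ to get $\mathbb{E}[e^{\lambda Y}] \leq \frac{b}{b-a}e^{\lambda a} - \frac{a}{b-a}e^{\lambda b}$. Writing the right-hand side as $e^{\varphi(\lambda)}$ and showing via a direct computation that $\varphi(0) = \varphi'(0) = 0$ and $\varphi''(u) \leq (b-a)^2/4$ for all $u$, a second-order Taylor expansion yields $\varphi(\lambda) \leq \lambda^2(b-a)^2/8$.

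Finally, since each $b_i - a_i \leq 1$, plugging in gives $\prod_i \mathbb{E}[e^{\lambda Y_i}] \leq e^{\lambda^2 n / 8}$, so
\[
\Pr\!\left[X - \mathbb{E}[X] > t\right] \leq e^{-\lambda n t + \lambda^2 n / 8}.
\]
Optimizing over $\lambda$ by setting $\lambda = 4t$ yields an upper bound of $e^{-2t^2 n}$. Combining with the symmetric lower-tail bound via a union bound produces the claimed factor of $2$, completing the proof.
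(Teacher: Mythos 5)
Your proof is correct and is the standard exponential-moment argument (Markov on $e^{\lambda\sum Y_i}$, Hoeffding's lemma via convexity and a second-order bound on $\varphi$, then optimize $\lambda=4t$). The paper does not prove this statement at all — it cites it directly as a known theorem from Hoeffding~\cite{Hoeffding63} — so there is no in-paper argument to compare against; your reconstruction matches the canonical proof in the cited reference.
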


\subsection{Every point in $F(\mathcal{F},\mathcal{D})$ is close to some point in $F(\mathcal{F},\mathcal{D}')$ }\label{app:oneway}
The desired claim is stated at the end of the section as Corollary~\ref{cor:oneway}, and the proof is obtained by a series of small technical lemmas. Throughout this section, we will be interested in whether a profile with $t_i = A$ in the support of ${\cal D}'$ is \emph{genuine} from the perspective of bidder $i$ (i.e. it was sampled from ${\cal D}$ without conditioning on anything except perhaps $t_i = A$) or \emph{biased} (i.e. it was sampled by conditioning on $t_{i'} = A'$ for some $i' \neq i$). Now let us fix an allocation rule $M$ with reduced form $\vec{\pi}$ if bidders are sampled from ${\cal D}$. What does the reduced form of $M$ look like for bidders sampled from ${\cal D'}$? The expected probability (over the randomness in the types of the other bidders) that bidder $i$ receives item $j$ conditioning on $t_i = A$ on a genuine from bidder $i$'s perspective profile is exactly $\pi_{ij}(A)$. However, if the profile is biased, the probability that bidder $i$ receives item $j$ might have nothing to do with $\pi_{ij}(A)$. So for a fixed $\mathcal{D}'$, we'll let $G_i(A)$ denote the (random) set of profiles in the support of $\mathcal{D}'$ with $t_i = A$ that were obtained genuinely from the perspective of bidder $i$, and $B_i(A)$ denote the set of profiles with $t_i = A$ that are biased from the perspective of bidder $i$.

\begin{lemma}\label{lem:finalclose} Fix $i$ and $A \in T_i$,  let $M$ be any allocation rule, and let $\mathcal{D}'$ be such that $|G_i(A)| = x$ and $|B_i(A)| = z$ (i.e. condition on $|G_i(A)| = x,|B_i(A)| = z$ and then sample $\mathcal{D}'$). Then over the randomness in generating $\mathcal{D}'$, for all items $j$ and all $t \leq 1$, if $\vec{\pi}'$ denotes the reduced form of $M$ when bidders are sampled from $\mathcal{D}'$ and $\vec{\pi}$ denotes the reduced form of $M$ when bidders are sampled from $\mathcal{D}$, we have:

$$\Pr\left[|\pi_{ij}(A) - \pi'_{ij}(A)| > t+\frac{z}{x}\right] \leq 2e^{-2t^2x}.$$
\end{lemma}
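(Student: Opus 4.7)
The plan is to decompose $\pi'_{ij}(A)$ into contributions from the genuine and biased profiles with $t_i=A$ in the support of $\mathcal{D}'$, apply Hoeffding to the genuine part, and control the biased part by a deterministic worst-case bound.

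First I would write $\pi'_{ij}(A)$ explicitly. Conditional on $t_i=A$, $\mathcal{D}'$ returns a uniformly random element of $G_i(A)\cup B_i(A)$, so if $p^{(\vec v)}$ denotes the probability (over the internal randomness of $M$) that $M$ awards item $j$ to bidder $i$ on profile $\vec v$, then
\[
\pi'_{ij}(A) \;=\; \frac{1}{x+z}\left(\sum_{\vec v\in G_i(A)} p^{(\vec v)} \;+\; \sum_{\vec v\in B_i(A)} p^{(\vec v)}\right).
\]
Let $S_G$ and $S_B$ denote the two sums. A short algebraic rearrangement gives
\[
\pi'_{ij}(A) - \pi_{ij}(A) \;=\; \frac{x}{x+z}\!\left(\frac{S_G}{x}-\pi_{ij}(A)\right) \;+\; \frac{S_B - z\,\pi_{ij}(A)}{x+z}.
\]

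The key observation is that for each $\vec v\in G_i(A)$, the coordinates $\vec v_{-i}$ constitute an independent sample from the marginal that generated $\pi_{ij}(A)$ (namely $\mathcal{D}_{-i}$ in the independent case and $\mathcal{D}_{-i}(A)$ in the correlated case), simply because a genuine-from-$i$'s-perspective sample is obtained either directly from $\mathcal{D}$ or by fixing $t_i=A$ and drawing the rest from the appropriate conditional. Hence the $x$ random variables $\{p^{(\vec v)}\}_{\vec v\in G_i(A)}$ are independent, take values in $[0,1]$, and each has expectation $\pi_{ij}(A)$. Hoeffding's inequality yields
\[
\Pr\!\left[\left|\tfrac{S_G}{x}-\pi_{ij}(A)\right| > t\right]\;\le\;2e^{-2t^2x}.
\]

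Finally, for the biased term, since $S_B\in[0,z]$ and $z\,\pi_{ij}(A)\in[0,z]$, we have the deterministic bound $|S_B - z\,\pi_{ij}(A)|\le z$, so the second summand is at most $z/(x+z)\le z/x$ in absolute value. Combining with the Hoeffding estimate (noting $x/(x+z)\le 1$) gives $|\pi'_{ij}(A)-\pi_{ij}(A)|\le t + z/x$ except with probability at most $2e^{-2t^2x}$. No step is really an obstacle; the only point that deserves care is justifying the independence and correct conditional law of the $p^{(\vec v)}$ for $\vec v\in G_i(A)$, which is where the distinction between genuine and biased samples is used.
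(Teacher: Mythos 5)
Your proof is correct and follows essentially the same approach as the paper: decompose into genuine and biased contributions, apply Hoeffding to the genuine part (using that each $p^{(\vec v)}$ for $\vec v\in G_i(A)$ has mean $\pi_{ij}(A)$), and bound the biased part by $z/(x+z)\le z/x$. The only cosmetic difference is that you apply Hoeffding to just the $x$ genuine variables with a deterministic bound on the biased sum, whereas the paper centers around $\mathbb{E}[\pi'_{ij}(A)]$ and applies Hoeffding to all $x+z$ variables; both yield the stated bound.
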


\begin{proof}
Label the profiles in $G_i(A)$ as $P_1,\ldots,P_x$ and $B_i(A)$ as $P_{x+1},\ldots,P_{x+z}$, and let $X_a$ be the random variable denoting the probability that $M$ awards item $j$ to bidder $i$ on profile $P_a$. Then for all $1 \leq a \leq x$ we have $\mathbb{E}[X_a] = \pi_{ij}(A)$. For all $a > x$, we have $0 \leq \mathbb{E}[X_a] \leq 1$. As $\pi'_{ij}(A) = \frac{1}{z+x}\sum_a X_a$, we see that:

\begin{align*}
\pi_{ij}(A) - \frac{z}{x+z} \leq \frac{x}{x+z}\pi_{ij}(A)&\leq
\\\frac{x}{x+z}\pi_{ij}(A)&+\frac{1}{x+z}\sum_{i=1}^{z}\mathbb{E}[X_{x+i}]=\mathbb{E}[\pi'_{ij}(A)] \le\\
 &~~~~~~~~~~~ \frac{x}{x+z}\pi_{ij}(A) + \frac{z}{x+z} \leq \pi_{ij}(A) + \frac{z}{x+z}.\\
\end{align*}
So $|\mathbb{E}[\pi'_{ij}(A)] - \pi_{ij}(A)| \leq \frac{z}{x+z}.$ Therefore, the triangle inequality tells us that in order to have $|\pi_{ij}(A) - \pi'_{ij}(A)| > t+\frac{z}{x}>t+\frac{z}{z+x}$, we must have $|\pi'_{ij}(A) - \mathbb{E}[\pi'_{ij}(A)]| > t$. As $\pi'_{ij}(A)$ is the average of $x+z$ independent trials, by the Hoeffding inequality, this happens with probability at most $2e^{-2t^2x}$.
\end{proof}

\begin{lemma}\label{lem:allfine} For any $i,A \in T_i$, $|G_i(A)| \geq k'$. Furthermore, if $k > k'\sum_{i' \neq i} |T_{i'}|$, for all $x \leq 1$ we have:

$$\Pr\left[|B_i(A)| > \left(2x+\frac{k'}{k}\right)\sum_{i' \neq i}|T_{i'}|\cdot |G_i(A)| \right] \leq 4e^{-2x^2k'n\sum_{i' \neq i} |T_{i'}|}.$$
\end{lemma}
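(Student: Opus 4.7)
\textbf{Proof proposal for Lemma~\ref{lem:allfine}.} The first claim is immediate from the construction of ${\cal D}'$: for each $i, A$, the preprocessing step explicitly draws $k'$ profiles by fixing $t_i = A$ and sampling the remaining coordinates from ${\cal D}_{-i}$; these are genuine from bidder $i$'s perspective, so $|G_i(A)| \ge k'$. The plan for the second, probabilistic claim is to represent $|B_i(A)|$ and a lower bound on $|G_i(A)|$ as sums of independent Bernoullis with the same parameter $p := \Pr[t_i = A]$, apply Hoeffding to each, and combine via a union bound using the hypothesis $k > k'\sum_{i' \ne i}|T_{i'}|$.

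Concretely, let $\alpha = \sum_{i' \ne i}|T_{i'}|$. The biased samples in $B_i(A)$ come from the preprocessing steps for all pairs $(i', A')$ with $i' \ne i$; for each such pair there are $k'$ samples drawn from ${\cal D}_{-i'}$ with $t_{i'} = A'$ fixed, and by \emph{independence across bidders} each lies in $B_i(A)$ independently with probability exactly $p$. So $|B_i(A)|$ is a sum of $k'\alpha$ i.i.d.\ Bernoulli$(p)$ variables, and Hoeffding yields $\Pr[\,|B_i(A)| > (p+x)k'\alpha\,] \le 2e^{-2x^2 k'\alpha}$. Similarly, the number $D_i(A)$ of direct samples (from the $k$ drawn from ${\cal D}$) with $t_i = A$ is a sum of $k$ i.i.d.\ Bernoulli$(p)$ variables, so $\Pr[\,D_i(A) < (p-x)k\,] \le 2e^{-2x^2 k}$. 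Since $|G_i(A)| \ge k' + D_i(A)$, the complementary events give $|B_i(A)| \le (p+x)k'\alpha$ and $|G_i(A)| \ge k' + \max(0,(p-x)k)$.

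The remaining step is a short algebraic case analysis comparing $(p+x)k'\alpha$ to $(2x + k'/k)\alpha\,|G_i(A)|$. When $p \ge x$, plugging in $|G_i(A)| \ge k' + (p-x)k$ gives
$$(2x + k'/k)\alpha\,|G_i(A)| - (p+x)k'\alpha \;\ge\; 2xk(p-x)\alpha + (k')^2\alpha/k \;\ge\; 0.$$
When $p < x$, the trivial bound $|G_i(A)| \ge k'$ already suffices, since $p + x < 2x \le 2x + k'/k$. So on the intersection of the two Hoeffding ``good'' events, the desired inequality holds. Applying the union bound, the failure probability is at most $2e^{-2x^2 k'\alpha} + 2e^{-2x^2 k}$, and the hypothesis $k > k'\alpha$ upgrades the second term to $2e^{-2x^2 k'\alpha}$, giving the claimed $4e^{-2x^2 k'\alpha}$.

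The only potential obstacle is getting the dependence on $p$ to come out right: a naive single-case argument forces requirements like $p \le x + k'/k$, which is false in general. The case split (using the refined lower bound on $|G_i(A)|$ only when $p \ge x$, where the Hoeffding-based lower bound is nontrivial) is what makes the algebra close. Once that split is made, everything else is routine, and the condition $k > k'\alpha$ is precisely what is needed to absorb the second Hoeffding term into the stated bound.
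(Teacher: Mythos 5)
Your proof is correct and follows essentially the same route as the paper's: Hoeffding applied to $|B_i(A)|$ (a sum of $k'\sum_{i'\neq i}|T_{i'}|$ i.i.d.\ Bernoulli$(p)$ variables) and to the count of direct samples with $t_i=A$ (a sum of $k$ i.i.d.\ Bernoulli$(p)$ variables), a union bound, and then an elementary case analysis on $p$ to combine the upper bound on $|B_i(A)|$ with the lower bound on $|G_i(A)|$. The only cosmetic difference is where you place the case split (at $p=x$ versus the paper's effective threshold $p=x+k'/k$), and both deliver the same probability bound $4e^{-2x^2 k'\sum_{i'\neq i}|T_{i'}|}$ (the extra factor of $n$ in the exponent of the stated lemma appears to be a typo, produced by neither your argument nor the paper's).
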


\begin{prevproof}{Lemma}{lem:allfine}
The first claim is obvious, as we fix $t_i = A$ in exactly $k'$ profiles. For the second claim, there are $k'(\sum_{i' \neq i} |T_{i'}|)$ independent chances to get a profile in $B_i(A)$. Each chance occurs with probability $q = \Pr[t_i = A]$. There are $k$ independent chances to get additional profiles in $G_i(A)$, and each occurs with probability $q$. Therefore, we get that $\mathbb{E}[|B_i(A)|] = qk'\sum_{i' \neq i}|T_i|$ and $\mathbb{E}[|G_i(A)|] = k'+qk$. Applying the Hoeffding inequality, we get 
$$\Pr\left[~\Big|~ |B_i(A)| - qk'\sum_{i' \neq i} |T_{i'}|~\Big|~ > xk'\sum_{i' \neq i}|T_{i'}|\right]\leq 2e^{-x^{2}k'\sum_{i'\neq i}|T_{i'}|},$$
and
$$\Pr\left[ \Big| |G_{i}(A)|-(k'+qk) \Big|> xk\right]\leq 2e^{-x^{2}k}.$$
Then since $k\geq k'\sum_{i'\neq i}|T_{i'}|$, by union bound, we get that for any $x \leq 1$, with probability at least $1-4e^{-2x^2k'\sum_{i' \neq i} |T_{i'}|}$ we have the following two inequalities:
$$|B_i(A)| \leq qk'\sum_{i' \neq i} |T_{i'}| + xk'\sum_{i' \neq i}|T_{i'}|$$
\begin{equation}\label{eq:g}
|G_i(A)| \geq k' + \max\{0,(q-x)k\}
\end{equation}

\noindent which imply the following two inequalities by ignoring one of the positive terms on the right-hand side of Equation~\eqref{eq:g}:

\begin{align}
|B_i(A)| &\leq (q+x)\sum_{i' \neq i}|T_{i'}|\cdot |G_i(A)| \label{eq:first}\\
|B_i(A)| &\leq \frac{q+x}{q-x}\cdot  \frac{k'\sum_{i' \neq i}|T_{i'}|}{k}|G_i(A)|~~~\text{(we only use this when $q > x$)} \label{eq:second}
\end{align}
When $q \leq x+\frac{k'}{k}$, Equation~\eqref{eq:first} gives a better bound. Otherwise, Equation~\eqref{eq:second} gives a better bound. As $q$ decreases, the bound from Equation~\eqref{eq:first} only gets better. Likewise, as $q$ increases, the bound from Equation~\eqref{eq:second} only gets better. So for any $q$, one of the bounds will yield:
$$|B_i(A)| \leq \left(2x+\frac{k'}{k}\right)\sum_{i' \neq i}|T_{i'}|\cdot |G_i(A)|$$
as desired.\end{prevproof}

\begin{corollary}\label{cor:allfine} Let $M$ be any allocation rule and assume $k > k'\sum_{i' \neq i} |T_{i'}|$. Then for all items $j$, bidders $i$, types $A \in T_i$, and all $t \leq 1$, if $\vec{\pi}'$ denotes the reduced form of $M$ when bidders are sampled from $\mathcal{D}'$ and $\vec{\pi}$ denotes the reduced form of $M$ when bidders are sampled from $\mathcal{D}$, we have:

$$\Pr\left[|\pi_{ij}(A) - \pi'_{ij}(A)| > t + \left(2t+\frac{k'}{k}\right)\sum_{i' \neq i}|T_{i'}|\right] \leq 6e^{-2t^2k'}.$$
\end{corollary}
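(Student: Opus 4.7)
The plan is a short union bound that glues Lemma~\ref{lem:finalclose} and Lemma~\ref{lem:allfine} together, using the sizes $|G_i(A)|$ and $|B_i(A)|$ as the interface between the two statements.

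First, I would invoke Lemma~\ref{lem:allfine} with the choice $x = t$ (which is allowed since $t \le 1$). Let $\mathcal{E}$ be the event that simultaneously $|G_i(A)| \ge k'$ (which holds deterministically by the first part of Lemma~\ref{lem:allfine}) and $|B_i(A)| \le (2t+k'/k)\sum_{i' \neq i}|T_{i'}|\cdot |G_i(A)|$. The second part of Lemma~\ref{lem:allfine} gives $\Pr[\neg \mathcal{E}] \le 4e^{-2t^2 k' \sum_{i' \neq i}|T_{i'}|} \le 4e^{-2t^2 k'}$, using $\sum_{i' \neq i}|T_{i'}| \ge 1$.

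Next, I would condition on $\mathcal{E}$ together with any particular realization of the sizes $|G_i(A)| = x$ and $|B_i(A)| = z$ consistent with $\mathcal{E}$. On this conditioning, Lemma~\ref{lem:finalclose} gives
\[
\Pr\!\left[|\pi_{ij}(A) - \pi'_{ij}(A)| > t + \tfrac{z}{x}\right] \le 2e^{-2t^2 x} \le 2e^{-2t^2 k'},
\]
since $\mathcal{E}$ forces $x \ge k'$. Moreover, on $\mathcal{E}$ we have $z/x \le (2t + k'/k)\sum_{i' \neq i}|T_{i'}|$, so the event $|\pi_{ij}(A) - \pi'_{ij}(A)| > t + (2t+k'/k)\sum_{i' \neq i}|T_{i'}|$ implies $|\pi_{ij}(A) - \pi'_{ij}(A)| > t + z/x$. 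Averaging over the allowed values of $(x,z)$ in $\mathcal{E}$ preserves the $2e^{-2t^2 k'}$ bound.

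Finally, a union bound over $\neg \mathcal{E}$ and the conditional failure yields total failure probability at most $4e^{-2t^2 k'} + 2e^{-2t^2 k'} = 6e^{-2t^2 k'}$, which is exactly the stated bound. The only subtlety is the ordering of the randomness: the bound from Lemma~\ref{lem:finalclose} is stated conditionally on the sizes $(x,z)$, so I need to note that since the bound $2e^{-2t^2 x}$ is monotone and $x \ge k'$ on $\mathcal{E}$, the conditional-on-$\mathcal{E}$ probability is still at most $2e^{-2t^2 k'}$; there is no real obstacle beyond this bookkeeping step.
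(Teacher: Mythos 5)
Your proof is correct and follows essentially the same route as the paper: apply Lemma~\ref{lem:allfine} with $x=t$ to control the event that $|B_i(A)|/|G_i(A)|$ is small and $|G_i(A)| \ge k'$, then on that event apply Lemma~\ref{lem:finalclose} (noting the bound $2e^{-2t^2 x}$ only improves as $x$ grows past $k'$), and union bound to get $4e^{-2t^2k'} + 2e^{-2t^2k'} = 6e^{-2t^2k'}$. The conditioning/averaging bookkeeping you flag at the end is exactly the point the paper elides with the phrase ``for such $\mathcal{D}'$''; you have handled it correctly.
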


\begin{prevproof}{Corollary}{cor:allfine}
Lemma~\ref{lem:allfine} says that with probability at least $1-4e^{-2x^2k'\sum_{i' \neq i} |T_{i'}|} \geq 1-4e^{-2x^2k'}$, $\mathcal{D}'$ is such that $|B_i(A)| \leq (2x+\frac{k'}{k})\sum_{i' \neq i}|T_{i'}|\cdot |G_i(A)|$ and $|G_i(A)| \geq k'$. For such $\mathcal{D}'$, the bound given by Lemma~\ref{lem:finalclose} is:

$$\Pr\left[ |\pi_{ij}(A) - \pi'_{ij}(A)| > t + \left(2x+\frac{k'}{k}\right)\sum_{i' \neq i}|T_{i'}|\right] \leq 2e^{-2t^2k'}.$$

So after taking a union bound and setting $x = t$ we get the desired claim.
\end{prevproof}

\begin{corollary}\label{cor:wholeform} Let $M$ be any allocation rule and assume $k > k'\sum_{i' \neq i} |T_{i'}|$. Then if $\vec{\pi}'$ denotes the reduced form of $M$ when bidders are sampled from $\mathcal{D}'$ and $\vec{\pi}$ denotes the reduced form when bidders are sampled from $\mathcal{D}$, we have:

$$\Pr\left[|\vec{\pi} - \vec{\pi}'|_{\infty} > t + \left(2t+\frac{k'}{k}\right)\sum_{i}|T_{i}|\right] \leq 6n\sum_{i=1}^{m} |T_i|e^{-2t^2k'}.$$
\end{corollary}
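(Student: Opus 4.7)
The plan is to derive Corollary~\ref{cor:wholeform} as an immediate union-bound consequence of the per-coordinate tail bound already established in Corollary~\ref{cor:allfine}. Recall that the reduced form $\vec{\pi}$ is an $n\sum_{i=1}^m |T_i|$-dimensional vector indexed by triples $(i,j,A)$ with $i\in[m]$, $j\in[n]$, $A\in T_i$, and that $|\vec{\pi}-\vec{\pi}'|_\infty = \max_{i,j,A} |\pi_{ij}(A)-\pi'_{ij}(A)|$.

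First I would observe the simple monotonicity fact that for every bidder $i$, $\sum_{i'\neq i}|T_{i'}| \le \sum_{i'}|T_{i'}|$. Consequently, the deviation threshold appearing in Corollary~\ref{cor:allfine}, namely $t + \bigl(2t + \tfrac{k'}{k}\bigr)\sum_{i'\neq i}|T_{i'}|$, is upper bounded by the threshold appearing in the statement of Corollary~\ref{cor:wholeform}, namely $t + \bigl(2t + \tfrac{k'}{k}\bigr)\sum_i |T_i|$. Hence the event that some coordinate $(i,j,A)$ exceeds the latter (larger) threshold is contained in the event that that coordinate exceeds the former (smaller) threshold, to which Corollary~\ref{cor:allfine} applies and yields a probability bound of $6e^{-2t^2 k'}$.

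Second, I would apply a straightforward union bound over all $n\sum_{i=1}^m |T_i|$ coordinates of the reduced form. Each individual coordinate contributes at most $6e^{-2t^2 k'}$ to the failure probability, so the probability that the maximum over all coordinates exceeds $t + (2t+\tfrac{k'}{k})\sum_i |T_i|$ is at most $6n\sum_{i=1}^m |T_i| \cdot e^{-2t^2 k'}$, which is exactly the stated bound. The hypothesis $k > k'\sum_{i'\neq i}|T_{i'}|$ is simply inherited from Corollary~\ref{cor:allfine}, which is needed for that per-coordinate estimate to apply.

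There is no real obstacle here; the only subtlety worth flagging is to make sure the per-coordinate deviation threshold is replaced by a uniform (coordinate-independent) threshold before taking the union bound, which is exactly what the inequality $\sum_{i'\neq i}|T_{i'}|\le \sum_i |T_i|$ accomplishes. This is why the statement in Corollary~\ref{cor:wholeform} features $\sum_i |T_i|$ rather than the more refined $\sum_{i'\neq i}|T_{i'}|$ present in Corollary~\ref{cor:allfine}.
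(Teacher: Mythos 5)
Your argument is correct and matches the paper's own proof exactly: use the per-coordinate bound of Corollary~\ref{cor:allfine}, replace $\sum_{i'\neq i}|T_{i'}|$ by the uniform upper bound $\sum_{i'}|T_{i'}|$, and union-bound over all $n\sum_{i=1}^m |T_i|$ coordinates. Nothing further is needed.
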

\begin{prevproof}{Corollary}{cor:wholeform} Use Corollary~\ref{cor:allfine}, observe that $\sum_{i'} |T_{i'}| > \sum_{i' \neq i} |T_{i'}|$ for all $i$, and take a union bound over all $j,i,A \in T_i$.
\end{prevproof}

\begin{corollary}\label{cor:oneway} Assume $k > k'\sum_{i} |T_{i}|$. Then for all $t \leq 1$, with probability at least $$1-6n\sum_{i=1}^{m} |T_i|e^{-2t^2k'-n\sum_{i=1}^{m} |T_i|\ln t},$$ for every $\vec{\pi} \in F(\mathcal{F},\mathcal{D})$, there is a $\vec{\pi}' \in F(\mathcal{F},\mathcal{D}')$ with $|\vec{\pi}- \vec{\pi}'|_\infty \leq 2t+(2t+\frac{k'}{k})\sum_{i=1}^{m}|T_{i}|$.
\end{corollary}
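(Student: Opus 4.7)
The plan is to layer an $\epsilon$-net argument on top of Corollary~\ref{cor:wholeform}, which already handles a single fixed allocation rule. The factor $e^{-n\sum_{i}|T_i|\ln t}=t^{-n\sum_i|T_i|}$ in the stated failure probability is exactly the cost of a union bound over a $t$-net of the weight cube $[-1,1]^{n\sum_i|T_i|}$.

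First, I will fix a $t$-net $\mathcal{N}\subseteq[-1,1]^{n\sum_i|T_i|}$ under $\ell_\infty$ of cardinality at most $t^{-n\sum_i|T_i|}$ (absorbing absolute constants into the $t$-slack). For each $\vec{w}\in\mathcal{N}$ the rule $M_{\vec{w}}:=VVCG_{\mathcal{F}}(\vec{w})$ is a single deterministic allocation rule, fixed \emph{before} the sampling that produces $\mathcal{D}'$, so Corollary~\ref{cor:wholeform} applies to $M_{\vec{w}}$ with failure probability $6n\sum_i|T_i|\cdot e^{-2t^2k'}$. Union-bounding over $\mathcal{N}$ yields exactly the claimed failure probability and, on the good event, the uniform bound $|\vec{\pi}_{\vec{w}}-\vec{\pi}'_{\vec{w}}|_\infty\le t+(2t+k'/k)\sum_i|T_i|$ for every $\vec{w}\in\mathcal{N}$, where $\vec{\pi}_{\vec{w}}$ and $\vec{\pi}'_{\vec{w}}$ are the reduced forms of $M_{\vec{w}}$ under $\mathcal{D}$ and $\mathcal{D}'$ respectively.

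Conditional on this event, for each $\vec{\pi}\in F(\mathcal{F},\mathcal{D})$ I need a nearby $\vec{\pi}'\in F(\mathcal{F},\mathcal{D}')$. Using Theorem~\ref{thm:characterization}, I write $\vec{\pi}=\sum_a\lambda_a\vec{\pi}_{\vec{w}^*_a}$ as a convex combination of reduced forms of simple VVCG rules with weights $\vec{w}^*_a\in[-1,1]^{n\sum_i|T_i|}$ (by scale-invariance of VVCG), and the natural candidate is $\vec{\pi}'=\sum_a\lambda_a\vec{\pi}'_{\vec{w}_a}\in F(\mathcal{F},\mathcal{D}')$, with $\vec{w}_a\in\mathcal{N}$ the nearest net point to $\vec{w}^*_a$. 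The main obstacle is that $VVCG_{\mathcal{F}}(\cdot)$ is piecewise constant in its weight argument and can flip allocations on tie profiles under arbitrarily small perturbations, so the triangle-inequality split $|\vec{\pi}_{\vec{w}^*_a}-\vec{\pi}'_{\vec{w}_a}|_\infty\le|\vec{\pi}_{\vec{w}^*_a}-\vec{\pi}_{\vec{w}_a}|_\infty+|\vec{\pi}_{\vec{w}_a}-\vec{\pi}'_{\vec{w}_a}|_\infty$ cannot control its first term pointwise. I would bypass this via LP-duality: the $\ell_\infty$-distance from $\vec{\pi}$ to $F(\mathcal{F},\mathcal{D}')$ equals $\max_{|\vec{u}|_1\le 1}(\vec{u}\cdot\vec{\pi}-W^{\mathcal{D}'}_{\mathcal{F}}(\vec{u}))$ (with $W^{\mathcal{D}'}_{\mathcal{F}}$ the support function of $F(\mathcal{F},\mathcal{D}')$), which since $\vec{u}\cdot\vec{\pi}\le W^{\mathcal{D}}_{\mathcal{F}}(\vec{u})$ for $\vec{\pi}\in F(\mathcal{F},\mathcal{D})$ is at most $\max_{|\vec{u}|_1\le 1}g(\vec{u})$ with $g(\vec{u}):=W^{\mathcal{D}}_{\mathcal{F}}(\vec{u})-W^{\mathcal{D}'}_{\mathcal{F}}(\vec{u})$. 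At each $\vec{w}\in\mathcal{N}$, $g(\vec{w})=\vec{w}\cdot(\vec{\pi}_{\vec{w}}-\vec{\pi}'_{\vec{w}})$ is directly controlled by the first step combined with H\"older, and since $W^{\mathcal{D}}_{\mathcal{F}}$ and $W^{\mathcal{D}'}_{\mathcal{F}}$ are convex and Lipschitz in the weight argument (subgradients are reduced forms with coordinates in $[0,1]$), $g$ is Lipschitz, so the net bound extends to all $\vec{u}$ with $|\vec{u}|_1\le 1$ at the cost of an additive slack linear in $t$; calibrating the net spacing absorbs this exactly into the extra $t$ in the target $2t+(2t+k'/k)\sum_i|T_i|$.
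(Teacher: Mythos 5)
Your high-level blueprint — a net, a union bound over the net, and a per-point invocation of Corollary~\ref{cor:wholeform} — matches the paper's. But you chose to net the \emph{weight} cube $[-1,1]^{n\sum_i|T_i|}$, and that choice is what forces the LP-duality detour. The paper instead takes a $t$-$\ell_\infty$ cover of the \emph{reduced-form} polytope $F(\mathcal{F},\mathcal{D})\subseteq[0,1]^{n\sum_i|T_i|}$, and for each cover point $\vec x$ fixes, before $\mathcal{D}'$ is sampled, any allocation rule $M_{\vec x}$ whose reduced form under $\mathcal{D}$ is $\vec x$; Corollary~\ref{cor:wholeform} applied to $M_{\vec x}$ then says the reduced form of $M_{\vec x}$ under $\mathcal{D}'$ is close, and a triangle inequality finishes. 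There is no discontinuity to worry about at all, because one never tries to match a target reduced form to a nearby weight vector. Your route is genuinely different and more delicate, and you were right to notice the piecewise-constancy obstruction; but the paper's choice of which space to net dissolves the obstruction rather than working around it.

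There is also a concrete gap in your duality step. You pass from $\max_{\lVert\vec u\rVert_1\le1} g(\vec u)$ to an $\ell_\infty$-net of the cube and invoke Lipschitzness of $g=W^{\mathcal D}_{\mathcal F}-W^{\mathcal D'}_{\mathcal F}$. The subgradients of these support functions are indeed reduced forms with coordinates in $[0,1]$, but that makes $W^{\cdot}_{\mathcal F}$ Lipschitz with constant $1$ \emph{in $\ell_1$ on the weight argument} (equivalently, with constant $n\sum_i|T_i|$ in $\ell_\infty$). So moving from $\vec u$ to the nearest $\ell_\infty$-net point changes $g$ by up to $2n\sum_i|T_i|\,t$, not $O(t)$, and the "additive slack linear in $t$" you hoped to absorb into the extra $t$ is actually larger by a factor of the dimension. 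Calibrating the spacing down to $t/(n\sum_i|T_i|)$ to fix this inflates the net cardinality and introduces an extra $n\sum_i|T_i|\cdot\ln\bigl(n\sum_i|T_i|\bigr)$ term in the exponent of the failure probability, so you would no longer recover the exact bound stated in the corollary. (You could instead net the $\ell_1$ unit ball in $\ell_1$ distance, where $g$ is $2$-Lipschitz, and restrict Hölder to $\lVert\vec w\rVert_1\le1$; but that nets a different region with different constants, and again drifts from the stated exponent.) In short: the duality argument can likely be patched to give a bound of the same flavor, but as written it does not deliver the specific inequality you were asked to prove, whereas the paper's reduced-form net gives it directly with no Lipschitz step.
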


\begin{prevproof}{Corollary}{cor:oneway}
Consider an $t$-$\ell_\infty$ cover of {$F(\mathcal{F},\mathcal{D})$} such that every point in  $F(\mathcal{F},\mathcal{D})$ is within $\ell_\infty$ distance $t$ of a point in the cover. There is certainly a cover that uses at most $\left(\frac{1}{t}\right)^{n\sum_{i=1}^{m} |T_i|}$ points, as there is a cover of the entire hypercube using this many points. If for every point $\vec{x}$ in the cover, there is a point $\vec{z}\in F(\mathcal{F},\mathcal{D}')$, such that $|\vec{x}- \vec{z}|_\infty \leq t+(2t+\frac{k'}{k})\sum_{i}|T_{i}|$, then clearly for every point $\vec{\pi}$ in $F(\mathcal{F},\mathcal{D})$, there is a point $\vec{\pi}'\in F(\mathcal{F},\mathcal{D}')$ such that $|\vec{\pi}- \vec{\pi}'|_\infty \leq 2t+(2t+\frac{k'}{k})\sum_{i}|T_{i}|$ by the triangle inequality. So we simply take a union bound over all $e^{-n\sum_{i=1}^{m} |T_i| \ln t}$ points in the cover and apply Corollary~\ref{cor:wholeform} to conclude the proof.
\end{prevproof}

\subsection{Every point in $F(\mathcal{F},\mathcal{D}')$ is close to some point in $F(\mathcal{F},\mathcal{D})$}\label{app:otherway}

In the previous section, we showed that for all $\vec{\pi} \in F(\mathcal{F},\mathcal{D})$ there is a nearby $\vec{\pi}' \in F(\mathcal{F},\mathcal{D}')$ using the probabilistic method over the choice of $\mathcal{D}'$. In this section, we want to show the other direction, namely that for any reduced form $\vec{\pi}' \in F(\mathcal{F},\mathcal{D}')$ there is a nearby reduced form $\vec{\pi} \in F(\mathcal{F},\mathcal{D})$. However,  it is not clear how to use the probabilistic method over the choice of $\mathcal{D}'$ to prove this, as for $\vec{\pi}' \in F(\mathcal{F},\mathcal{D}')$ the allocation rule that implements $\vec{\pi}'$ is heavily dependent on $\mathcal{D}'$, which is the object with respect to which we plan to apply the probabilistic method. To go around this circularity we show that after fixing $k$ and $k'$, but before selecting $\mathcal{D}'$, there are not too many allocation rules that could possibly implement a reduced form that is a corner of $F(\mathcal{F},\mathcal{D}')$. Specifically, we show that the reduced form with respect to ${\cal D}'$ of any simple virtual VCG allocation rule is equivalent to one whose functions only output rational numbers with bit complexity that only depends on $k$, $k'$ and the dimension. In particular, regardless of $\mathcal{D}'$, there is an a-priori fixed set of allocation rules that implement the corners of $F(\mathcal{F},\mathcal{D}')$ whose cardinality depends only on $k$, $k'$ and the dimension. So we can still use concentration of measure to argue that the reduced form $\pi'$ of every corner of $F(\mathcal{F},\mathcal{D}')$ has a nearby reduced form $\pi \in  F(\mathcal{F},\mathcal{D})$. And, as every point in $F(\mathcal{F},\mathcal{D}')$ is a convex combination of the corners of $F(\mathcal{F},\mathcal{D}')$, this suffices to prove the desired claim. Our starting point is the following observation.

\begin{lemma}\label{lem:lowbits} Let $\vec{\pi}$ be the reduced form of a simple virtual VCG allocation with respect to $\mathcal{D}'$. Then each $\pi_{ij}(A)$ is a rational number of bit complexity $O(\log k'')$.
\end{lemma}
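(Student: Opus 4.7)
The plan is to observe that for a \emph{simple} virtual VCG allocation rule the allocation itself is deterministic (the uniqueness of the max-weight feasible allocation on every profile is precisely what ``simple'' buys us), so the only randomness in computing $\pi_{ij}(A)$ is the randomness in the other bidders' types drawn from $\mathcal{D}'_{-i}(A)$ (in the correlated case; $\mathcal{D}'_{-i}$ in the independent case). Once we reduce to this, the statement becomes a pure counting claim about the uniform distribution $\mathcal{D}'$.

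Concretely, I would first unpack the definition of the reduced form. Let $M$ be the deterministic allocation rule induced by the simple virtual VCG rule. Then
\[
\pi_{ij}(A) \;=\; \Pr_{\vec{v}_{-i}\sim \mathcal{D}'_{-i}(A)}\bigl[\,M \text{ awards item } j \text{ to bidder } i \text{ on profile } (A,\vec{v}_{-i})\,\bigr].
\]
Because $\mathcal{D}'$ is by construction the uniform distribution over the $k''$ sampled profiles, I would next write this probability explicitly as an empirical frequency: let $N(A)$ be the number of profiles in the support of $\mathcal{D}'$ (counted with multiplicity) whose $i$-th coordinate equals $A$, and let $N_{ij}(A)$ be the number of those profiles on which $M$ awards item $j$ to bidder $i$. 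Then
\[
\pi_{ij}(A) \;=\; \frac{N_{ij}(A)}{N(A)}.
\]
(Here $N(A)\ge k' > 0$ by the construction of $\mathcal{D}'$, which explicitly fixes $t_i = A$ in $k'$ of the sampled profiles, so the ratio is well-defined.)

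Both $N_{ij}(A)$ and $N(A)$ are nonnegative integers bounded above by $k''$, hence each can be written in binary with at most $\lceil \log_2(k''+1)\rceil = O(\log k'')$ bits. Therefore $\pi_{ij}(A)$ is a rational number whose numerator and denominator each have bit complexity $O(\log k'')$, which is exactly the definition of bit complexity $O(\log k'')$ used in the paper. There is essentially no obstacle here; the only thing to be careful about is invoking ``simple'' to rule out the possibility that the mechanism needs internal randomness to break ties, since otherwise the probability $\pi_{ij}(A)$ could be the average of several deterministic empirical frequencies with different rational values and would not obviously have such small bit complexity on its own (though it would still be expressible with denominator dividing the product of the individual denominators times the number of tie-broken options). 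Simplicity eliminates this concern entirely and gives the clean counting formula above.
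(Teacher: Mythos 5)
Your proof is correct and matches the paper's argument essentially verbatim: simplicity makes the allocation deterministic on each profile, so the reduced form with respect to the uniform distribution $\mathcal{D}'$ is an empirical frequency $N_{ij}(A)/N(A)$ with both integers at most $k''$, giving bit complexity $O(\log k'')$. The paper denotes the same counts $x'$ and $x$, but the reasoning is identical.
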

\begin{prevproof}{Lemma}{lem:lowbits} In every simple virtual VCG allocation, the probability that bidder $i$ gets item $j$ on profile $P$ is always $1$ or $0$. Therefore, if $t_i = A$ in exactly $x$ profiles, and bidder $i$ receives item $j$ in exactly $x'$ of those profiles, $\pi_{ij}(A) = x'/x$. As $x' \leq x \leq k''$, this value clearly has bit complexity $O(\log k'')$.
\end{prevproof}

Given Lemma~\ref{lem:lowbits} the corners of $F(\mathcal{F},\mathcal{D}')$ belong to a set of at most $(k'')^{O(n \sum_{i}|T_{i}| )}$ reduced forms that is independent of ${\cal D}'$. Still the allocation rules that implement those reduced forms may vary depending on ${\cal D}'$. We show that this can be mitigated by appealing to the correctness of the decomposition algorithm of Section~\ref{sec:decomposition}.

\begin{lemma}\label{lem:costascorner} Suppose that the allocation rule $M$ implements a corner $\vec{\pi}$ of $F(\mathcal{F},\mathcal{D}')$. Then there is a virtual VCG allocation rule $VVCG(\{f_i\}_{i\in[m]})$ whose reduced form with respect to ${\cal D}'$ is exactly $\vec{\pi}$ and such that each $f_i$ only outputs rational numbers of bit complexity $f_c(n\sum_{i=1}^{m}|T_i|\log k'')$, where $f_c(\cdot)$ is a polynomial function. 

Moreover, for any input $\vec{\pi} \in F(\mathcal{F},\mathcal{D}')$ to the decomposition algorithm of Section~\ref{sec:decomposition} for ${\cal D}'$, the output decomposition uses simple virtual VCG allocation rules whose weight functions only output rational numbers of complexity $f_c(n\sum_{i=1}^{m}|T_i|\log k'')$. 
\end{lemma}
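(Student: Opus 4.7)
The plan is to invoke the separation and corner oracle machinery of Section~\ref{sec:decomposition} applied to the polytope $F(\mathcal{F},\mathcal{D}')$, and track how the bit complexity of the output weight vectors scales when $\mathcal{D}'$ is the uniform distribution over $k''$ profiles. The crucial ingredient is that, because $\mathcal{D}'$ is uniform over $k''$ profiles, we have $\Pr[t_i = A] = c/k''$ for some integer $c \in [0,k'']$, so these probabilities have bit complexity $\ell = O(\log k'')$; and, by Corollary~\ref{cor:uniform}, every hyperplane output by $SO$ for $F(\mathcal{F},\mathcal{D}')$ has coefficients that are rational numbers of bit complexity $b = \poly(n\sum_i |T_i|, \log k'')$.

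For the first part, given an allocation rule $M$ whose reduced form $\vec{\pi}$ is a corner of $F(\mathcal{F},\mathcal{D}')$, I would simply run the decomposition algorithm of Theorem~\ref{thm:geometric} on $\vec{\pi}$ with respect to $\mathcal{D}'$. Since $\vec{\pi}$ is a corner, the algorithm outputs the trivial decomposition $\vec{\pi} = 1 \cdot \vec{\pi}$, producing $\vec{\pi}$ via a single call to the corner oracle $CO$ on a collection of hyperplanes previously emitted by $SO$. By Theorem~\ref{thm:corneroracle}, the weight vector $\vec{w}'$ returned by this call has coordinates that are rational numbers of bit complexity $\poly(n\sum_i |T_i|, b, \ell) = \poly(n\sum_i |T_i|, \log k'')$. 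Translating $\vec{w}'$ back into virtual weight functions via $f_{ij}(A) = w'_{ij}(A)/\Pr[t_i = A]$ preserves bit complexity up to $O(\ell)$ additional bits, so each $f_{ij}(A)$ is a rational number of bit complexity $\poly(n\sum_i |T_i|, \log k'')$. Defining $f_c$ to be this polynomial yields the claim, and the fact that $VVCG_{\mathcal{F}}(\{f_i\})$ has reduced form exactly $\vec{\pi}$ with respect to $\mathcal{D}'$ follows from the correctness of the corner oracle (Theorem~\ref{thm:corneroracle}) together with Proposition~\ref{prop:VCG}.

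The second part follows from the same analysis applied verbatim to arbitrary inputs. The decomposition algorithm invoked on any input $\vec{\pi} \in F(\mathcal{F},\mathcal{D}')$ only ever calls $CO$ on collections of hyperplanes that it has previously received from $SO$, so every simple virtual VCG rule appearing in the output decomposition has weight functions of bit complexity at most $\poly(n\sum_i |T_i|, \log k'')$, i.e.\ the same $f_c$ as above. The main technical obstacle is ensuring that the bit-complexity bounds propagate correctly through the composition $SO \to CO$ without blowing up beyond polynomial in $n\sum_i |T_i|$ and $\log k''$; this is precisely where the hypotheses of Theorem~\ref{thm:corneroracle} and Corollary~\ref{cor:uniform} must be combined carefully, and where the uniformity of $\mathcal{D}'$ over $k''$ profiles is essential, since it is what ensures $\ell = O(\log k'')$ (otherwise $\ell$ could itself depend badly on the underlying distribution).
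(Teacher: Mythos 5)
Your proposal is correct and follows essentially the same route as the paper: both proofs feed $\vec{\pi}$ through the decomposition algorithm for $F(\mathcal{F},\mathcal{D}')$, observe (via the property of the algorithm from \cite{CaiDW12}) that every corner in the output is produced by a call to $CO$ on hyperplanes previously emitted by $SO$, and then track bit complexity through Lemma~\ref{lem:low bit SO}/Corollary~\ref{cor:uniform} for $SO$ and Theorem~\ref{thm:corneroracle} for $CO$, using that $\Pr[t_i = A]$ is a multiple of $1/k''$ when $\mathcal{D}'$ is uniform over $k''$ profiles. The paper packages the $SO$ bound as a small sub-lemma (Lemma~\ref{lem:sepbits}) built on Lemma~\ref{lem:lowbits}, whereas you cite Corollary~\ref{cor:uniform} directly, but the content and conclusions are identical.
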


\begin{prevproof}{Lemma}{lem:costascorner}
For the first part of the theorem, suppose that a corner $\vec{\pi}$ of  $F(\mathcal{F},\mathcal{D}')$ is fed as input to the decomposition algorithm of Section~\ref{sec:decomposition}. From the correctness of this algorithm it follows that the output decomposition consists of a single reduced form, namely $\vec{\pi}$ itselft, as $\vec{\pi}$ is a corner of $F(\mathcal{F},\mathcal{D}')$. The decomposition algorithm of~\cite{CaiDW12} (which is at the heart of the decomposition algorithm of Section~\ref{sec:decomposition}) has the property that every corner used in the output decomposition will always be an output of the corner oracle. So let us try to argue that all reduced forms that are possibly output by the corner oracle can be implemented by a small set of allocation rules that does not depend on ${\cal D}'$. We use the following lemma:

\begin{lemma}\label{lem:sepbits} On any input $\vec{\pi}$, {the coefficients of the hyperplane} output by the separation oracle $SO$ of Section~\ref{sec:separation} using $\mathcal{D}'$ as the bidder-type distribution are rational numbers of bit complexity $f_{s}(n\sum_{i=1}^{m} |T_i|\log k'')$, where $f_{s}(\cdot)$ is a polynomial function. 
\end{lemma}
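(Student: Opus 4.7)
The plan is to observe that Lemma~\ref{lem:sepbits} is essentially a direct composition of two previously established results: Lemma~\ref{lem:low bit SO}, which bounds the bit complexity of any hyperplane output by $SO$ in terms of the bit complexity of corners of the underlying polytope, and Lemma~\ref{lem:lowbits}, which gives a bound of $O(\log k'')$ on the bit complexity of corners of $F(\mathcal{F},\mathcal{D}')$ specifically. Crucially, Lemma~\ref{lem:low bit SO} is stated and proved for an arbitrary distribution over bidder types, so we may instantiate it with the distribution $\mathcal{D}'$ (rather than $\mathcal{D}$) without any modification.

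Concretely, I would first invoke Lemma~\ref{lem:lowbits} to conclude that every corner of $F(\mathcal{F},\mathcal{D}')$ has coordinates that are rational numbers of bit complexity $\ell = O(\log k'')$. Then I would apply Lemma~\ref{lem:low bit SO} with $\mathcal{D} \leftarrow \mathcal{D}'$ and this value of $\ell$ to conclude that every coefficient of any hyperplane output by the separation oracle $SO$ (when using $\mathcal{D}'$ as the bidder-type distribution) has bit complexity $\poly(n\sum_{i=1}^{m} |T_i|, \log k'')$. Choosing $f_s$ to be any polynomial dominating this $\poly(\cdot,\cdot)$ expression yields the claimed bound $f_s(n\sum_{i=1}^{m} |T_i| \log k'')$.

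The key point worth double-checking is that the bound in Lemma~\ref{lem:low bit SO} does not depend on the bit complexity of the input $\vec{\pi}$ to $SO$: the output hyperplane is $(\vec{w}^{*}, t^{*})$, which is a corner of the LP in Figure~\ref{fig:separation oracle}, and the \emph{facets} of that LP are defined by the box constraints $-1 \le w_{ij}(A) \le 1$ (bit complexity $O(1)$) together with the hyperplanes output by the internal oracle $\widehat{SO}$, whose coefficients are reduced forms of simple virtual VCG allocation rules with respect to $\mathcal{D}'$, i.e., corners of $F(\mathcal{F},\mathcal{D}')$. The input $\vec{\pi}$ only enters through the objective, which by the theory of Gaussian elimination (and Theorem~\ref{thm:ellipsoid}) does not increase the bit complexity of the returned corner. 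This is precisely what the proof of Lemma~\ref{lem:low bit SO} establishes, and the bound is inherited verbatim.

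The main (and essentially only) obstacle is therefore a bookkeeping one: verifying that the proof of Lemma~\ref{lem:low bit SO} is genuinely distribution-agnostic and that the $\ell$ appearing there can be taken to be the bit complexity bound from Lemma~\ref{lem:lowbits}. Given the discussion in Section~\ref{sec:separation}, nothing about the proof of Lemma~\ref{lem:low bit SO} uses any property of $\mathcal{D}$ beyond the fact that the constraints generated by $\widehat{SO}$ have coefficients that are corners of $F(\mathcal{F},\mathcal{D})$, so the substitution $\mathcal{D} \leftarrow \mathcal{D}'$ is transparent and the composed bound goes through.
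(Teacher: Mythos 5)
Your proposal is correct and follows exactly the same two-step route as the paper: invoke Lemma~\ref{lem:lowbits} to bound the bit complexity of corners of $F(\mathcal{F},\mathcal{D}')$ by $O(\log k'')$, then feed that bound as the parameter $\ell$ into Lemma~\ref{lem:low bit SO}. Your extra discussion about independence from the input $\vec{\pi}$'s bit complexity is a worthwhile sanity check, and indeed it is already implicit in the statement of Lemma~\ref{lem:low bit SO} (and made explicit in the proof of Lemma~\ref{lem:VVCG}, where the output hyperplane bit complexity is noted to be independent of $c$).
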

\begin{proof}
By Lemma~\ref{lem:lowbits}, we know that all coordinates of all corners of $F(\mathcal{F},\mathcal{D}')$ can be described using at most $O(\log k'')$ bits. Lemma~\ref{lem:low bit SO} now tells us that $SO$ will only output rational numbers of bit complexity $\poly(n\sum_{i=1}^{m} |T_i|,\log k'')$. \end{proof}

\noindent Now let's go back to the corner oracle. By Lemma~\ref{lem:sepbits} the weights input to the corner oracle will always be rational numbers of bit complexity $f_s(n\sum_{i=1}^{m} |T_i|\log k'') = \poly(n\sum_{i=1}^{m} |T_i|,\log k'')$. As the number of hyperplanes input to the corner oracle will never be more than $n\sum_{i=1}^{m} |T_i|$, the weights obtained by averaging in step 2 of the corner oracle require at most an additional 
$O(\log (n\sum_{i=1}^{m} |T_i|))$
bits. Given the above and that, for all $i, A$, $\Pr[t_i = A]$ is a multiple of $1/k''$ and hence has bit complexity 
$O( \log k'' )$, 
the tie-breaking rule in step 3 of the corner oracle results in a weight vector whose coordinates have bit complexity 
$\poly(n\sum_{i=1}^{m} |T_i|,\log k'')$. As $\Pr[t_i = A]$ is a multiple of $1/k''$ for all $i,A$, transforming from the weight vector representation of the simple VCG mechanism computed by the corner oracle to the weight function representation adds at most an additional $O(\log k'')$ bits per weight.

The second part of the lemma is already implied by the above discussion. As we noted above the decomposition algorithm of~\cite{CaiDW12} (which is at the heart of the decomposition algorithm of Section~\ref{sec:decomposition}) has the property that every corner used in the output decomposition will always be an output of the corner oracle. And we argued that the corner oracle for ${\cal D}'$ outputs simple virtual VCG allocation rules whose weight functions only output rationals of bit complexity as bounded above.
\end{prevproof}

Lemma~\ref{lem:costascorner} implies that, before we have sampled $\mathcal{D}'$ but after we have chosen $k$ and $k'$, there is a fixed set ${\cal K}$ of at most $4^{n\sum_{i=1}^{m}|T_i|f_{c}(n\sum_{i=1}^{m} |T_i|\log k'')}$ simple virtual VCG allocation rules (namely those whose weight functions only output rational numbers of bit complexity $f_{c}(n\sum_{i=1}^{m} |T_i|\log k'')$) such that, no matter what ${\cal D}'$ is sampled, all corners of $F(\mathcal{F},\mathcal{D}')$ can be implemented by a simple virtual VCG allocation rule in ${\cal K}$. Moreover, the decomposition algorithm of Section~\ref{sec:decomposition} only uses simple virtual VCG mechanisms from ${\cal K}$ in its support. This implies the following.

\begin{corollary}\label{cor:otherway}  Assume $k > k'\sum_{i} |T_{i}|$ and $t \leq 1$. Then, with probability at least $$1-6n\sum_{i=1}^{m} |T_i|e^{-2t^2k'+ n\sum_{i=1}^{m}|T_{i}| f_{c}(n\sum_{i=1}^{m}|T_{i}| \log k'')\ln 4},$$ the following hold, where $f_c(\cdot)$ is a polynomial function: 
\begin{enumerate}
\item every $\vec{\pi}' \in F(\mathcal{F},\mathcal{D}')$ has some $\vec{\pi} \in F(\mathcal{F},\mathcal{D})$ with $|\vec{\pi}- \vec{\pi}'|_\infty \leq t+ (2t+\frac{k'}{k})\sum_{i}|T_{i}|$; 
\item if $\vec{\pi}$ is the reduced form with respect to ${\cal D}$ of the distribution over simple virtual VCG allocation rules that is output on input $\vec{\pi}' \in F(\mathcal{F},\mathcal{D}')$ by the decomposition algorithm of Section~\ref{sec:decomposition} for ${\cal D'}$ then $|\vec{\pi}- \vec{\pi}'|_\infty \leq t+ (2t+\frac{k'}{k})\sum_{i}|T_{i}|$.
\end{enumerate}
\end{corollary}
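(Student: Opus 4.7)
\begin{prevproof}{Corollary}{cor:otherway}
The plan is to sidestep the apparent circularity (that $F(\mathcal{F},\mathcal{D}')$ depends on the random choice of $\mathcal{D}'$) by first fixing, before sampling $\mathcal{D}'$, an a priori finite set $\mathcal{K}$ of candidate allocation rules that is guaranteed to cover both (i) implementations of all corners of $F(\mathcal{F},\mathcal{D}')$ and (ii) every simple virtual VCG allocation rule that could be returned by the decomposition algorithm of Section~\ref{sec:decomposition} on any input $\vec{\pi}' \in F(\mathcal{F},\mathcal{D}')$. Lemma~\ref{lem:costascorner} furnishes exactly this: regardless of what $\mathcal{D}'$ turns out to be, every such rule is a simple virtual VCG allocation rule whose weight functions only output rationals of bit complexity at most $f_c(n\sum_{i=1}^{m}|T_i|\log k'')$. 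The number of such rules is therefore at most $4^{n\sum_{i=1}^{m}|T_i|\,f_c(n\sum_{i=1}^{m}|T_i|\log k'')}$, which bounds $|\mathcal{K}|$.

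Next I would apply Corollary~\ref{cor:wholeform} to each fixed $M \in \mathcal{K}$: for such an $M$, the reduced form $\vec{\pi}_M$ with respect to $\mathcal{D}$ and the reduced form $\vec{\pi}'_M$ with respect to $\mathcal{D}'$ satisfy $|\vec{\pi}_M-\vec{\pi}'_M|_\infty \leq t + (2t+\frac{k'}{k})\sum_i |T_i|$ except with probability at most $6n\sum_{i=1}^{m}|T_i|\,e^{-2t^2 k'}$. Taking a union bound over all $M \in \mathcal{K}$ yields that, except with probability at most
\[
6n\sum_{i=1}^{m}|T_i|\,e^{-2t^2 k' + n\sum_{i=1}^{m}|T_i|\,f_c(n\sum_{i=1}^{m}|T_i|\log k'')\ln 4},
\]
the $\mathcal{D}$- and $\mathcal{D}'$-reduced forms of \emph{every} $M\in\mathcal{K}$ are within $t+(2t+\frac{k'}{k})\sum_i|T_i|$ of each other in $\ell_\infty$. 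Call this joint event $\mathcal{E}$; conditioned on $\mathcal{E}$, both statements of the corollary will follow.

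For statement (1), given any $\vec{\pi}' \in F(\mathcal{F},\mathcal{D}')$, use Carath\'eodory to write $\vec{\pi}' = \sum_\ell \lambda_\ell \vec{c}_\ell$ as a convex combination of corners of $F(\mathcal{F},\mathcal{D}')$; by Lemma~\ref{lem:costascorner} each $\vec{c}_\ell$ is the $\mathcal{D}'$-reduced form of some $M_\ell \in \mathcal{K}$. Let $\vec{\pi}_\ell$ denote the $\mathcal{D}$-reduced form of $M_\ell$. Then $\vec{\pi} := \sum_\ell \lambda_\ell \vec{\pi}_\ell$ lies in $F(\mathcal{F},\mathcal{D})$ (as the $\mathcal{D}$-reduced form of the distribution over $M_\ell$ assigning weight $\lambda_\ell$ to $M_\ell$), and on $\mathcal{E}$ the triangle inequality gives $|\vec{\pi}-\vec{\pi}'|_\infty \leq \sum_\ell \lambda_\ell |\vec{\pi}_\ell - \vec{c}_\ell|_\infty \leq t + (2t+\frac{k'}{k})\sum_i|T_i|$. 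For statement (2), the decomposition algorithm outputs on input $\vec{\pi}'$ a distribution $\{(M_\ell, \lambda_\ell)\}$ of simple virtual VCG allocation rules all lying in $\mathcal{K}$ (Lemma~\ref{lem:costascorner}), whose $\mathcal{D}'$-reduced form is exactly $\vec{\pi}'$; applying the same triangle inequality to its $\mathcal{D}$-reduced form $\vec{\pi}$ and $\vec{\pi}'$ yields the same bound.

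The main obstacle, already resolved by Lemma~\ref{lem:costascorner}, was producing the fixed set $\mathcal{K}$ whose description does not depend on $\mathcal{D}'$; everything else is a clean union bound plus a convex-combination argument. Care is needed only to verify that when one performs the union bound over $\mathcal{K}$, the resulting exponent matches the one in the statement, and that the proof of Corollary~\ref{cor:wholeform} indeed applies to each $M\in\mathcal{K}$ individually (it does, because that corollary was stated for an arbitrary fixed allocation rule).
\end{prevproof}
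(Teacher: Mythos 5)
Your proposal is correct and follows essentially the same approach as the paper: fix an a priori set $\mathcal{K}$ via Lemma~\ref{lem:costascorner}, apply Corollary~\ref{cor:wholeform} pointwise to each $M\in\mathcal{K}$ and union bound over $\mathcal{K}$, then use the convex-combination (Carath\'eodory) argument plus the triangle inequality to transfer the per-corner bound to all of $F(\mathcal{F},\mathcal{D}')$ and to the output of the decomposition algorithm.
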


\begin{prevproof}{Corollary}{cor:otherway}
For a fixed simple virtual VCG allocation rule $M \in {\cal K}$, Corollary~\ref{cor:wholeform} guarantees that the reduced form of $M$ when consumers are sampled from $\mathcal{D}$, $\vec{\pi}(M)$, and when consumers are sampled from $\mathcal{D}'$, $\vec{\pi}'(M)$, satisfy: $|\vec{\pi}(M) - \vec{\pi}'(M)|_\infty \leq t + (2t+\frac{k'}{k})\sum_{i=1}^{m}|T_{i}|$ with probability at least $1- 6n\sum_{i=1}^{m} |T_i|e^{-2t^2k'}$. In addition, Lemma~\ref{lem:costascorner} guarantees that $|{\cal K}|\le4^{f_c(n\sum_{i=1}^{m} |T_i|\log k'')n\sum_{i=1}^{m} |T_i|}$. Because this set is fixed a priori and independent of $\mathcal{D}'$, we may take a union bound over the elements of the set to get that the same claim holds for \emph{all} simple virtual VCG allocation rules in ${\cal K}$ with probability at least

$$1- 6n\sum_{i=1}^{m} |T_i|e^{-2t^2k'} 4^{n\sum_{i=1}^{m} |T_i|f_{c}(n\sum_{i=1}^{m}|T_{i}|\log k'')}.$$

We proceed to show (i) and (ii) conditioning on the above. For (i) we use the first part of Lemma~\ref{lem:costascorner} to get that for all $\vec{\pi}' \in F(\mathcal{F},\mathcal{D}')$, we can write $\vec{\pi}' = \sum_a p_a \vec{\pi}'(M_a)$, where for all $a$: $M_a \in {\cal K}$,  $p_a > 0$, and $\sum_a p_a = 1$. If we consider the exact same distribution over simple virtual VCG allocation rules when consumers are sampled from $\mathcal{D}$, the reduced form will be $\vec{\pi} = \sum_a p_a \vec{\pi}(M_a)$. Given that for all $M_a \in {\cal K}$ we have $|\vec{\pi}(M_a) - \vec{\pi}'(M_a)|_\infty \leq t + (2t+\frac{k'}{k})\sum_{i=1}^{m}|T_{i}|$, we have $|\vec{\pi} - \vec{\pi}'|_\infty \leq t + (2t+\frac{k'}{k})\sum_{i=1}^{m}|T_{i}|$ as well.

For (ii) the proof is virtually identical. By the second part of Lemma~\ref{lem:costascorner} the simple virtual VCG allocation rules in the support of the decomposition belong to the set ${\cal K}$. We proceed as above.
\end{prevproof}
 
\subsection{Putting Everything Together}\label{app:final} 

\subsubsection{Setting $k$ and $k'$, and sampling ${\cal D'}$}
Algorithm~\ref{alg:preprocess} is a preprocessing algorithm used to set the parameters $k$ and $k'$ that were left free in Sections~\ref{app:oneway} and~\ref{app:otherway}.

\begin{algorithm}[ht]
        \caption{Pre-processing to generate a proxy distribution $\mathcal{D}'$ for $\mathcal{D}$. The desired $\ell_\infty$ accuracy is $\epsilon$.}
    \begin{algorithmic}[1]\label{alg:preprocess}
        \STATE Input: $\mathcal{D}$. Denote by $T = \sum_{i=1}^{m} |T_i|$.
        \STATE Set $t = \frac{\epsilon}{6T}, k' = \frac{n^2T^2f_c(nT)}{t^3}, k = \frac{4k'T}{\epsilon}$
        \STATE Build $\mathcal{D}'$ by sampling $k$ profiles independently from $\mathcal{D}$. For each $i,A \in T_i$, fix $t_i = A$ and sample $k'$ profiles independently from $\mathcal{D}_{-i}$. $\mathcal{D}'$ picks one of the $k + k'\sum_{i=1}^{m} |T_i|$ sampled profiles uniformly at random.
        \STATE Output $t, k, k', \mathcal{D}'.$
     \end{algorithmic}
\end{algorithm}

\subsubsection{Separation Oracle for Approximating Polytope $F(\mathcal{F},\mathcal{D}')$}

We provide the proof Theorem~\ref{thm:appxSO}.\\

\begin{prevproof}{Theorem}{thm:appxSO} We use the shorthand $T=\sum_{i=1}^{m} |T_i|$. After plugging in our choice of $t,k',k$, we see that $2t + (2t+\frac{k'}{k})\sum_{i=1}^{m} |T_i| \leq \epsilon$. So we just have to verify that the probability bounds given by Corollaries~\ref{cor:oneway} and~\ref{cor:otherway} are as desired.

Before plugging in the choice of $t,k',k$ to Corollary~\ref{cor:oneway}, we get that the first claim is true with probability at least $1-6nTe^{-2t^2k'-nT\ln t}$. As $k' \geq nT/t^3$, this is at least $1-6nTe^{-(2/t-\ln(1/t))nT}$. As $1/t$ asymptotically dominates $\ln(1/ t)$ as $t \rightarrow 0$,  this probability becomes $1-e^{-\Omega(nT/\epsilon)}$ after plugging in our choice of $t$.

Before plugging in the choice of $t,k',k$ to Corollary~\ref{cor:otherway}, we get that the second claim is true with probability at least $1-6nTe^{-2t^2k' + nTf_c(nT\log (k+k')) \ln 4}$. Plugging in the choice of $k'$ and $k$ (and observing that $\log (k+ k')$ is $O(\log \left(nT/t \right)$) this becomes:

$$1-6nTe^{-2n^2T^2f_c(nT)/t ~+~nTf_c(nT \cdot O(\log (nT/t)))\ln 4}.$$

Therefore, the ratio of the absolute value of the negative term in the exponent to the value of the positive term is $\frac{nT/t}{\poly\log (nT/t)}$, so the negative term dominates asymptotically as $t \rightarrow 0$. Therefore, the entire probability is $1-e^{-\Omega(nT/\epsilon)}$ after plugging in our choice of $t$.

The bound on the running time follows directly from Corollary~\ref{cor:uniform} and our choice of parameters.
%
%
%
\end{prevproof}


\subsubsection{Decomposition Algorithm for Approximating Polytope $F(\mathcal{F},\mathcal{D}')$}
Algorithm~\ref{alg:decomposition} describes our decomposition algorithm for $F(\mathcal{F},\mathcal{D}')$. After stating it, we analyze it.

\begin{algorithm}[ht!]
        \caption{Algorithm for decomposing a reduced form $\vec{\pi}' \in  F(\mathcal{F},\mathcal{D}')$ into a distribution over simple virtual VCG allocations.}
    \begin{algorithmic}[1]\label{alg:decomposition}
        \STATE Input: $\mathcal{F}$, $\mathcal{D}'$, $\vec{\pi}'\in F(\mathcal{F},\mathcal{D}')$. 
			\STATE Run the geometric algorithm of~\cite{CaiDW12} on $F(\mathcal{F},\mathcal{D}')$ using the separation oracle of Section~\ref{sec:separation} and the corner oracle of Section~\ref{sec:decomposition}. The output will be a collection of at most $n\sum_{i=1}^{m} |T_i|+1$ corners output by the corner oracle. These will be simple virtual VCG allocation rules, whose weight functions we denote by $\{f_i\}_{i\in[m]}^{(1)},\ldots,\{f_i\}_{i\in[m]}^{(n\sum_{i=1}^{m} |T_i|+1)}$. We also denote by $p_j$ the probability placed on $VVCG\left(\{f_i\}_{i\in[m]}^{(j)}\right)$ in the output decomposition.\\
				\STATE Output the allocation rule $M$ that runs $VVCG\left(\{f_i\}_{i\in[m]}^{(j)}\right)$ with probability $p_j$. 
     \end{algorithmic}
\end{algorithm}

\begin{prevproof}{Theorem}{thm:appxdecomp}
It follows from the correctness of the decomposition algorithm of~\cite{CaiDW12} that the output allocation rule $M$ implements the input reduced form $\vec{\pi}'$ when bidders are sampled from ${\cal D'}$. Now it follows from Corollary~\ref{cor:otherway} that with probability at least $1-e^{-O(n\sum_{i=1}^{m}|T_i|/\epsilon)}$ (see the proof of Theorem~\ref{thm:appxSO} for why the probability guaranteed by Corollary~\ref{cor:otherway} is at least this large given our choice of parameters) it holds that $|\vec{\pi}- \vec{\pi}'|_\infty \leq \epsilon$ (again see the proof of Theorem~\ref{thm:appxSO} for why the guaranteed distance is at most $\epsilon$ given our choice of parameters). The bound on the running time follows directly from Corollary~\ref{cor:geometric alg for uniform} and our choice of parameters.
\end{prevproof}

\subsection{Approximate Algorithms for $F(\mathcal{F},\mathcal{D})$}

The results of this section are provided for completeness, but are not used elsewhere in the paper. Our goal is to use Theorems~\ref{thm:appxSO} and \ref{thm:appxdecomp} to obtain approximate algorithms for the original polytope $F(\mathcal{F},\mathcal{D})$. Namely, we want to obtain, with high probability, a separation oracle that is correct on all points that are $\epsilon$-away from the boundary of $F(\mathcal{F},\mathcal{D})$ (in $\ell_\infty$), and a decomposition algorithm that returns a distribution over virtual VCG allocation rules whose reduced form is within $\epsilon$ (in $\ell_\infty$) of any given feasible reduced form that is $\epsilon$-away from the boundary. Such algorithms are provided by the following two corollaries.

\begin{corollary}\label{cor:appxSO} Given our choice of $k,k'$ in Algorithm~\ref{alg:preprocess}, a separation oracle for $F(\mathcal{F},\mathcal{D}')$ is an approximate separation oracle for $F(\mathcal{F},\mathcal{D})$. Specifically, with probability $1-e^{-\Omega(n\sum_i |T_i|/\epsilon)}$, we have:
\begin{enumerate}
\item If the entire $\epsilon$-$\ell_{\infty}$ ball around $\vec{x}$ is outside $F(\mathcal{F},\mathcal{D})$, then the separation oracle outputs ``no.''
\item If the entire $\epsilon$-$\ell_{\infty}$ ball around $\vec{x}$ is inside $F(\mathcal{F},\mathcal{D})$, then the separation oracle outputs ``yes.''
\end{enumerate}
\end{corollary}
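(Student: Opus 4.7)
The plan is to condition on the high-probability event produced by Theorem~\ref{thm:appxSO}, which asserts that $F(\mathcal{F},\mathcal{D})$ and $F(\mathcal{F},\mathcal{D}')$ are within Hausdorff distance $\epsilon$ in $\ell_\infty$. Under this event, the corollary reduces to a purely geometric statement about two convex polytopes whose symmetric difference is confined to a thin shell: namely, the exact separation oracle for $F(\mathcal{F},\mathcal{D}')$ correctly classifies any point whose closed $\epsilon$-ball lies strictly on one side of $F(\mathcal{F},\mathcal{D})$. The failure probability of the corollary is inherited directly from Theorem~\ref{thm:appxSO}.

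For part~(1), I would argue by contradiction. Assume the $\epsilon$-ball $B_\epsilon(\vec{x})$ is disjoint from $F(\mathcal{F},\mathcal{D})$, but that the oracle accepts, i.e.\ $\vec{x}\in F(\mathcal{F},\mathcal{D}')$. By the second conclusion of Theorem~\ref{thm:appxSO}, there exists some $\vec{\pi}\in F(\mathcal{F},\mathcal{D})$ with $|\vec{x}-\vec{\pi}|_\infty\le\epsilon$, placing $\vec{\pi}$ inside $B_\epsilon(\vec{x})$ and contradicting the hypothesis.

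For part~(2), a separating hyperplane argument is needed in the other direction. Suppose for contradiction that $B_\epsilon(\vec{x})\subseteq F(\mathcal{F},\mathcal{D})$ but $\vec{x}\notin F(\mathcal{F},\mathcal{D}')$. Since $F(\mathcal{F},\mathcal{D}')$ is a convex polytope (Proposition~\ref{prop:convex polytope} applied to $\mathcal{D}'$), strict separation yields a hyperplane $\vec{w}\cdot\vec{v}=c$ with $\|\vec{w}\|_1=1$, $\vec{w}\cdot\vec{x}>c$, and $\vec{w}\cdot\vec{z}\le c$ for every $\vec{z}\in F(\mathcal{F},\mathcal{D}')$. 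Define $\vec{y}=\vec{x}+\epsilon\cdot\mathrm{sign}(\vec{w})$, so that $|\vec{y}-\vec{x}|_\infty=\epsilon$ and $\vec{w}\cdot\vec{y}=\vec{w}\cdot\vec{x}+\epsilon$. Because $\vec{y}\in B_\epsilon(\vec{x})\subseteq F(\mathcal{F},\mathcal{D})$, the first conclusion of Theorem~\ref{thm:appxSO} produces some $\vec{y}'\in F(\mathcal{F},\mathcal{D}')$ with $|\vec{y}-\vec{y}'|_\infty\le\epsilon$. Then $|\vec{w}\cdot(\vec{y}-\vec{y}')|\le\|\vec{w}\|_1\,|\vec{y}-\vec{y}'|_\infty\le\epsilon$, giving $\vec{w}\cdot\vec{y}'\ge\vec{w}\cdot\vec{y}-\epsilon=\vec{w}\cdot\vec{x}>c$, which contradicts $\vec{y}'\in F(\mathcal{F},\mathcal{D}')$.

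The main subtlety, and the only place I expect to spend care, is pinning down the boundary convention: the step $\vec{y}\in F(\mathcal{F},\mathcal{D})$ uses that $B_\epsilon(\vec{x})$ is a closed ball (or, equivalently, one works with $B_{\epsilon-\delta}$ and lets $\delta\to 0$ using closedness of both polytopes). Once this is fixed, the proof is just $\ell_\infty$/$\ell_1$ duality combined with Theorem~\ref{thm:appxSO}; no new probabilistic machinery is required. An analogous geometric plan, together with Theorem~\ref{thm:appxdecomp}, will yield the corresponding approximate decomposition statement Corollary~\ref{cor:appxdecomp} by decomposing a nearby point $\vec{\pi}'\in F(\mathcal{F},\mathcal{D}')$ and appealing to the $\ell_\infty$ closeness of the decomposition's reduced form to the given input.
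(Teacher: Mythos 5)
Your proof is correct and follows the paper's plan: condition on the event guaranteed by Theorem~\ref{thm:appxSO}, then reduce the corollary to deterministic convex geometry. Part~(1) is identical to the paper's. For part~(2) both arguments prove the same intermediate fact---that starting from the separating hyperplane one can reach a point at $\ell_\infty$-distance exactly $\epsilon$ from $\vec{x}$ whose entire $\epsilon$-neighborhood misses $F(\mathcal{F},\mathcal{D}')$---but by different routes. The paper grows the $\ell_\infty$ ball from $\vec{x}$ until it touches $H$ at some $\vec{y}$ at distance $\delta>0$, then takes $\vec{z}=\vec{x}+\frac{\epsilon}{\delta}(\vec{x}-\vec{y})$ and observes that the $\ell_\infty$-distance from $\vec{z}$ to $H$ becomes $\epsilon+\delta$ by a scaling argument; you instead normalize $\|\vec{w}\|_1=1$, advance $\vec{x}$ by $\epsilon\cdot\mathrm{sign}(\vec{w})$ (the $\ell_\infty$-ball extreme point that maximizes $\vec{w}\cdot(\cdot)$), and use H\"older's inequality $|\vec{w}\cdot(\vec{y}-\vec{y}')|\le\|\vec{w}\|_1|\vec{y}-\vec{y}'|_\infty$ to keep $\vec{y}'$ past $H$. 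The two are equivalent under the identity $\mathrm{dist}_{\ell_\infty}(\vec{p},H)=|\vec{w}\cdot\vec{p}-c|/\|\vec{w}\|_1$: your version makes the $\ell_1$--$\ell_\infty$ duality explicit, which cleanly justifies the paper's ``it is clear that the closest point'' step rather than asserting it, while the paper stays purely geometric and avoids introducing the dual-norm language. Either way the contrapositive closes the argument, and the failure probability is inherited unchanged from Theorem~\ref{thm:appxSO}, exactly as you note. There is no gap.
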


\begin{proof}
By Theorem~\ref{thm:appxSO}, with probability $1-e^{-\Omega(n\sum_i |T_i|/\epsilon)}$, any point $\vec{x} \in F(\mathcal{F},\mathcal{D}')$ has some $\vec{y} \in F(\mathcal{F},\mathcal{D})$ with $|\vec{x}-\vec{y}|_{\infty} \leq \epsilon$. Therefore, $\vec{y}$ is in $F(\mathcal{F},\mathcal{D})$ as well as the $\epsilon$-$\ell_{\infty}$ ball around $\vec{x}$, and the entire $\epsilon$-$\ell_{\infty}$ ball around $\vec{x}$ is not outside $F(\mathcal{F},\mathcal{D})$. Taking the contrapositive proves part 1).

For any point $\vec{x} \notin F(\mathcal{F},\mathcal{D}')$, there is a separation hyperplane $H$ (not going through $\vec{x}$) separating $\vec{x}$ from $F(\mathcal{F},\mathcal{D}')$. Grow an $\ell_{\infty}$ ball centered at $\vec{x}$ (i.e. a hypercube) until it hits $H$. Let $\vec{y}$ denote one point in the intersection of $H$ with the ball. The ball has some positive radius, $\delta$. So the distance from $\vec{x}$ to the closest point in $H$ ($\vec{y}$) is $\delta$. Let now $\vec{z} = \vec{x} + \frac{\epsilon}{\delta}(\vec{x} - \vec{y})$. It is clear that the closest point (in $\ell_{\infty}$) in $H$ to $\vec{z}$ is also $\vec{y}$, and the distance is $\epsilon + \delta$. Therefore, there is no point within $\epsilon$ of $\vec{z}$ in $F(\mathcal{F},\mathcal{D}')$. By Theorem~\ref{thm:appxSO}, with probability $1-e^{-\Omega(n\sum_i |T_i|/\epsilon)}$, this implies that $\vec{z} \notin F(\mathcal{F},\mathcal{D})$. As $|\vec{x} - \vec{z}|_{\infty} = \epsilon$, taking the contrapositive proves part 2).
\end{proof}

\begin{corollary}\label{cor:appxdecomp} With probability $1-e^{-\Omega(n\sum_i |T_i|/\epsilon)}$, the Algorithm~\ref{alg:decomposition} has the property that for any reduced form $\vec{\pi}$ such that the entire $\epsilon$-$\ell_{\infty}$ ball around $\vec{\pi}$ is inside $F(\mathcal{F},\mathcal{D})$, Algorithm~\ref{alg:decomposition} outputs a distribution over virtual VCG allocation rules whose reduced form $\vec{\pi}'$ satisfies $|\vec{\pi} - \vec{\pi}'|_{\infty} \leq \epsilon$.
\end{corollary}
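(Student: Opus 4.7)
The plan is to reduce the statement to Theorem~\ref{thm:appxdecomp} by first certifying, via Corollary~\ref{cor:appxSO}, that the input $\vec{\pi}$ actually lies in the proxy polytope $F(\mathcal{F},\mathcal{D}')$, so that Algorithm~\ref{alg:decomposition} may be run directly on $\vec{\pi}$. The only real content beyond invoking those two earlier results is verifying that the two ``good events'' we need can be coupled by a single union bound without worsening the failure probability.

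More concretely, I would proceed as follows. First, condition on the high-probability event of Corollary~\ref{cor:appxSO} and the high-probability event of Theorem~\ref{thm:appxdecomp}. By a union bound, both hold simultaneously with probability at least $1 - 2 \cdot e^{-\Omega(n\sum_i |T_i|/\epsilon)} = 1 - e^{-\Omega(n\sum_i |T_i|/\epsilon)}$, so the probability statement in the corollary is preserved. Second, let $\vec{\pi}$ be any reduced form whose entire $\epsilon$-$\ell_\infty$ ball lies in $F(\mathcal{F},\mathcal{D})$. Applying part~2 of Corollary~\ref{cor:appxSO} to the point $\vec{x} = \vec{\pi}$ yields that the separation oracle for $F(\mathcal{F},\mathcal{D}')$ answers ``yes,'' i.e.\ $\vec{\pi} \in F(\mathcal{F},\mathcal{D}')$. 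Hence $\vec{\pi}$ is a legal input to Algorithm~\ref{alg:decomposition}.

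Third, invoke Theorem~\ref{thm:appxdecomp} with input $\vec{\pi}$ (playing the role of the ``$\vec{\pi}'$'' of that theorem). Let the algorithm return a distribution over simple virtual VCG allocation rules, and let $\vec{\pi}'$ denote the reduced form of the resulting allocation rule when bidders are sampled from $\mathcal{D}$. Theorem~\ref{thm:appxdecomp} guarantees $|\vec{\pi} - \vec{\pi}'|_\infty \leq \epsilon$, which is precisely what we wanted.

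There is no genuine obstacle here, only a bookkeeping subtlety: the two high-probability events (that $F(\mathcal{F},\mathcal{D}')$ is a two-sided $\epsilon$-approximation of $F(\mathcal{F},\mathcal{D})$, and that Algorithm~\ref{alg:decomposition} produces an allocation rule whose $\mathcal{D}$-reduced form is $\epsilon$-close to the input) depend on the \emph{same} random sample defining $\mathcal{D}'$, so they are not independent, but a crude union bound suffices, and the constants hidden in $\Omega(\cdot)$ absorb the factor of~$2$. With that observation, the proof is essentially a one-line composition of Corollary~\ref{cor:appxSO} and Theorem~\ref{thm:appxdecomp}.
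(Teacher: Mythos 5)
Your proposal is correct and mirrors the paper's proof almost step-for-step: both reduce the claim to Corollary~\ref{cor:appxSO} (to certify $\vec{\pi}\in F(\mathcal{F},\mathcal{D}')$ from the hypothesis that its $\epsilon$-$\ell_\infty$ ball lies in $F(\mathcal{F},\mathcal{D})$) and Theorem~\ref{thm:appxdecomp} (to get the $\epsilon$-$\ell_\infty$ closeness of the output's $\mathcal{D}$-reduced form). The only difference is cosmetic: you spell out the union bound over the two high-probability events and note that the factor of~$2$ is absorbed by the $\Omega(\cdot)$, which the paper leaves implicit.
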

\begin{proof}
Theorem~\ref{thm:appxdecomp} guarantees that with probability $1-e^{-\Omega(n\sum_i |T_i|/\epsilon)}$, Algorithm~\ref{alg:decomposition} is such that whenever $\vec{\pi}$ is in $F(\mathcal{F},\mathcal{D}')$, a $\vec{\pi}'$ with $|\vec{\pi} - \vec{\pi}'|_{\infty} \leq \epsilon$ is output. Corollary~\ref{cor:appxSO} guarantees that with probability $1-e^{-\Omega(n\sum_i |T_i|/\epsilon)}$, any $\vec{\pi}$ whose entire $\epsilon$-$\ell_{\infty}$ ball is contained in $F(\mathcal{F},\mathcal{D})$ is in $F(\mathcal{F},\mathcal{D}')$, which completes the proof.
\end{proof}
\section{Discussion and Proofs from Section~\ref{sec:revenue}}\label{app:revenue}

\paragraph{Approach.} Theorems~\ref{thm:general} through~\ref{thm:bounded} are obtained similarly to the corresponding Theorems 8 through 10 of~\cite{CaiDW12}. In~\cite{DW12}, linear programs are provided that exactly solve MDMDP in cases with finite support (and less general feasibility constraints, namely each bidder has an upper bound on the number of items she wants to receive, and every item should be allocated to at most one bidder). However, the proposed LPs maintain variables for every type profile $P$, denoting the probability that bidder $i$ receives item $j$ on profile $P$, resulting in LP size proportional to $|{\cal D}|$. In~\cite{CaiDW12} it is observed that these LPs can be made more efficient by making use of the reduced form, even if just a separation oracle is provided for the feasibility of the reduced form. Indeed, the reduced form of a mechanism contains sufficient information to verify truthfulness (given the additivity of the bidders), and a separation oracle for the feasibility of the reduced form is sufficient to optimize the expected revenue of the mechanism by solving an LP. Algorithm~\ref{alg: MDMD} and Figure~\ref{fig:LPMDMD} provide the details of how we apply this approach in our setting, culminating in a proof of Theorem~\ref{thm:general}. Simply put, we use the LP approach of~\cite{CaiDW12} with the following twist: (a) we use a separation oracle for the proxy polytope of feasible reduced forms $F(\mathcal{F},\mathcal{D}')$, obtained in Section~\ref{sec:approximations}, rather than the real polytope $F(\mathcal{F},\mathcal{D})$;  (b) still, we compute expected revenue for bidders sampled from the real distribution $\mathcal{D}$.

\begin{algorithm}[ht]
    \caption{FPRAS for solving MDMDP when $\mathcal{D}$ has finite support.}
    \begin{algorithmic}[1]\label{alg: MDMD}
        \STATE Input: $\mathcal{D}$, $\mathcal{F}$, $\epsilon$.
        \STATE Set $\delta = \frac{\epsilon}{2m}$. Run the pre-processing algorithm (Algorithm~\ref{alg:preprocess}) on input $\mathcal{D}$, with accuracy $\delta/2n$. Call the output distribution $\mathcal{D}'$.
        \STATE Let $SO(\vec{\pi})$ be the separation oracle that on input $\vec{\pi}$ executes the separation oracle of Section~\ref{sec:separation} on input $\vec{\pi}$ for distribution $\mathcal{D}'$ and feasibility constraints $\mathcal{F}$. 
        \STATE Using $SO$, solve the Linear Program of Figure~\ref{fig:LPMDMD}. Store the output as $\vec{\pi},\vec{p}$.\label{step: solve LP}
        \STATE Run the decomposition algorithm (Algorithm~\ref{alg:decomposition}) with input $\mathcal{F}, \mathcal{D}', \vec{\pi}$. Store the output as $M'$. $M'$ is a distribution over at most $n\sum_{i=1}^{m} |T_i| + 1$ simple virtual VCG allocations. \label{step:decompose}
        \STATE Output the allocation rule $M'$ and pricing rule $\vec{p} - \delta\cdot \vec{1}$ (i.e. when bidder $i$ reports type $A$, charge her $p_i(A)-\delta$). \label{step: last step}
     \end{algorithmic}
\end{algorithm}

\begin{figure}[ht]
\colorbox{MyGray}{
\begin{minipage}{\textwidth} {
\noindent\textbf{Variables:}
\begin{itemize}
\item $p_i(\vec{v}_i)$, for all bidders $i$ and types $\vec{v}_i \in T_i$, denoting the expected price paid by bidder $i$ when reporting type $\vec{v}_i$ over the randomness of the mechanism and the other bidders' types.
\item $\pi_{ij}(\vec{v}_i)$, for all bidders $i$, items $j$, and types $\vec{v}_i \in T_i$, denoting the probability that bidder $i$ receives item $j$ when reporting type $\vec{v}_i$ over the randomness of the mechanism and the other bidders' types.
\end{itemize}
\textbf{Constraints:}
\begin{itemize}
\item $\vec{\pi}_i(\vec{v}_i) \cdot \vec{v}_i - p_i(\vec{v}_i) \geq \vec{\pi}_i(\vec{w}_i)\cdot \vec{v}_i - p_i(\vec{w}_i) - \delta $, for all bidders $i$, and types $\vec{v}_i,\vec{w}_i \in T_i$, guaranteeing that the reduced form mechanism $(\vec{\pi},\vec{p})$ is $\delta$-BIC.
\item $\vec{\pi}_i(\vec{v}_i) \cdot \vec{v}_i - p_i(\vec{v}_i) \geq 0$, for all bidders $i$, and types $\vec{v}_i \in T_i$, guaranteeing that the reduced form mechanism $(\vec{\pi},\vec{p})$ is individually rational.
\item $SO(\vec{\pi}) = $``yes,'' guaranteeing that the reduced form $\vec{\pi}$ is in $F(\mathcal{F},\mathcal{D}')$.
\end{itemize}
\textbf{Maximizing:}
\begin{itemize}
\item $\sum_{i=1}^{m} \sum_{\vec{v}_i \in T_i} \Pr[t_i = \vec{v}_i]\cdot p_i(\vec{v}_i)$, the expected revenue \textbf{when played by bidders sampled from the true distribution $\mathcal{D}$}.\\
\end{itemize}}
\end{minipage}}
\caption{A linear programming formulation for MDMDP.}
\label{fig:LPMDMD}
\end{figure}

\medskip \begin{prevproof}{Theorem}{thm:general}
We use Algorithm~\ref{alg: MDMD}. Using the additivity of the bidders, it follows that Step~\ref{step: solve LP} of the algorithm outputs a reduced form/pricing rule pair $(\vec{\pi},\vec{p})$ that is revenue-optimal with respect to all $\delta$-BIC, IR reduced form/pricing rule pairs, except that the reduced forms that are searched over belong to $F(\mathcal{F},\mathcal{D}')$ and may be infeasible with respect to ${\cal D}$. We proceed to argue that the achieved revenue is nearly-optimal with respect to all BIC, IR reduced form/pricing rule pairs for which the reduced form lies inside $F(\mathcal{F},\mathcal{D})$. So let $(\vec{\pi}^*,\vec{p}^*)$ denote an optimal such pair, and let $\opt$ denote its expected revenue. By Theorem~\ref{thm:appxSO}, we know that, with high probability, there is some reduced form $\vec{\pi}' \in F(\mathcal{F},\mathcal{D}')$ satisfying {$|\vec{\pi}'-\vec{\pi}^{*}|_\infty \leq \delta/2n$}. So, if we let {$\vec{p}' = \vec{p}^* - \frac{\delta}{2}\cdot \vec{1}$,} it is obvious that the reduced form $(\vec{\pi}',\vec{p}')$ is $\delta$-BIC. It is also obvious that it is individually rational. Finally, it is clear that value of the LP achieved by $(\vec{\pi}{'},\vec{p}{'})$ is exactly {$m\delta/2  = \epsilon/4$} less than the value of $(\vec{\pi}^*,\vec{p}^*)$. So because $(\vec{\pi}',\vec{p}{'})$ is in the feasible region of the LP of Figure~\ref{fig:LPMDMD}, the reduced form/price rule pair output by Step~\ref{step: solve LP} of Algorithm~\ref{alg: MDMD} has expected revenue at least {$\opt - \epsilon/4$}. Noticing that we subtract an additional $\delta$ from the price charged to each bidder in Step~\ref{step: last step} of the algorithm, we get that the end price rule makes expected revenue at least $\opt - \epsilon$.

We argue next that the mechanism output by Algorithm~\ref{alg: MDMD} is $\epsilon$-BIC and IR. To see this let $M'$ be the allocation rule (computed in Step~\ref{step:decompose} of the algorithm), which implements the reduced form $\vec{\pi}$ (computed in Step~\ref{step: solve LP}) with respect to $\mathcal{D}'$.  Let also $\vec{\pi}'$ denote the reduced form of $M'$ with respect to $\mathcal{D}$. By Theorem~\ref{thm:appxdecomp}, we know that, with high probability, {$|\vec{\pi} - \vec{\pi}'|_\infty \leq\delta/2n$.} Therefore, given that $(\vec{\pi},\vec{p})$ is $\delta$-BIC, we immediately get that $(\vec{\pi}',{\vec{p}- \delta\cdot \vec{1}})$ is {$2\delta$}-BIC. {So allocation rule $M'$ with pricing rule $\vec{p}- \delta\cdot \vec{1}$  comprises an $\epsilon$-BIC mechanism. Also because $(\vec{\pi},\vec{p})$ is IR and $|\vec{\pi}-\vec{\pi}'|_\infty \leq \delta/2n$, we immediately get that $(\vec{\pi}',\vec{p} - \delta\cdot \vec{1})$ is IR, and hence that allocation rule $M'$ with pricing rule $\vec{p}- \delta\cdot \vec{1}$ is IR.}

Overall the above imply that the allocation rule $M'$ and the pricing rule $\vec{p}- \delta\cdot \vec{1}$ output in Step~\ref{step: last step} of our algorithm comprise an $\epsilon$-BIC and IR mechanism, whose pricing rule achieves revenue at least $OPT -\epsilon$.
Moreover, it is immediate from Theorems~\ref{thm:appxSO} and~\ref{thm:appxdecomp} that Algorithm~\ref{alg: MDMD} runs in time polynomial in {c}, $n$, $\sum_{i=1}^{m}|T_{i}|$, $1/\epsilon$, and $rt_{\mathcal{F}}(\poly(n\sum_{i=1}^{m} |T_i|, {\log 1/\epsilon, c}))$, {where $c$ is as in the statement of the theorem.}

{Finally, the way we chose our parameters in Algorithm~\ref{alg: MDMD} the probability of failure of the algorithm is $1-e^{-\Omega(m n^2 \sum_i |T_i|/\epsilon)}$ by Theorems ~\ref{thm:appxSO} and~\ref{thm:appxdecomp}. Trading off probability of error with $\epsilon$ (as per Remark~\ref{costasremark:FPRAS?}) we complete the proof of Theorem~\ref{thm:general}.}\end{prevproof}

{Theorems~\ref{thm:itemsym} and~\ref{thm:bounded} are obtained by combining Theorem~\ref{thm:general} with tools developed in~\cite{DW12}. Both theorems are based on the following observation, generalizing Theorem~2 of~\cite{DW12}: If ${\cal D}$ is item-symmetric and the feasibility constraints ${\cal F}$ are also item-symmetric then there exists an optimal mechanism that is:
\begin{enumerate}
\item item-symmetric, i.e. for all bidders $i$, all types $\vec{v}_i \in T_i$, and all item-permutations $\sigma$, the reduced form of the mechanism satisfies $\vec{\pi}_i(\sigma(\vec{v}_i)) = \sigma(\vec{\pi}_i(\vec{v}_i))$; this means that the reduced form on a permuted type of a bidder is the same permutation of the reduced form of the un-permuted type of the bidder.
\item strongly-monotone, i.e. for all bidders $i$, and items $j$ and $j'$, $v_{ij} \ge v_{ij'} \implies \pi_{ij}(\vec{v}_i) \ge \pi_{ij'}(\vec{v}_i)$.
\end{enumerate}
Using this structural observation for optimal mechanisms, we sketch the proofs of Theorems~\ref{thm:itemsym} and~\ref{thm:bounded}.

\smallskip \begin{prevproof}{Theorem}{thm:itemsym} (Sketch) Given our structural observation for optimal mechanisms, we can---without loss of generality---rewrite the LP of Figure~\ref{fig:LPMDMD}, while at the same time enforcing the above constraints, i.e. searching over the set of item-symmetric, strongly-monotone reduced forms. Indeed, to save on computation we can write a succinct LP on variables $\{\pi_{ij}(\vec{v}_i)\}_{i, j, \vec{v}_i \in E_i}$ where, for every bidder $i$, $E_i$ is a sufficient (based on the above symmetries) set of representative types, e.g. we can take $E_i = \{\vec{v}_i~|~v_{i1} \ge \ldots \ge v_{in} \}$. We refer the reader to~\cite{CaiDW12} for the explicit form of the succinct LP. The benefit of the succinct formulation is that $|E_i| \le n^c$, where $c$ is as in the statement of Theorem~\ref{thm:itemsym}, so the size of the succinct LP is polynomial in $m$, $n^c$ and $\ell$, where $\ell$ is as in the statement of the theorem. 

But we also need to come up with an efficient separation oracle for our setting. Checking violation of the strong-monotonicity property is easy to do in time linear in $O(m n)$ and the description of $\{\pi_{ij}(\vec{v}_i)\}_{i, j, \vec{v}_i \in E_i}$. So it remains to describe a separation oracle determining the feasibility of a succinct description $\{\pi_{ij}(\vec{v}_i)\}_{i, j, \vec{v}_i \in E_i}$ of an item-symmetric reduced form. One approach to this would be to expand out the succinct description of the item-symmetric reduced form to a full-fledged reduced-form and invoke the separation oracle developed in Sections~\ref{sec:algorithms} through~\ref{sec:approximations}. However, this would make us pay computation time polynomial in $\sum_i |T_i|$, and the whole point of using item-symmetries is to avoid this cost. To circumvent this, we take the following approach:
\begin{itemize}
\item First, let $F_S({\cal F},{\cal D})$ be the set of item-symmetric reduced forms that are feasible with respect to ${\cal F}$ and ${\cal D}$;

\item $F_S({\cal F},{\cal D})$ is a polytope, as it is the intersection of the polytope $F({\cal F},{\cal D})$ and the item-symmetry constraints; moreover, every point in $F_S({\cal F},{\cal D})$ has a succinct description of the form $\{\pi_{ij}(\vec{v}_i)\}_{i, j, \vec{v}_i \in E_i}$ where, for all $i$, $E_i$ is defined as above;

\item What are the corners of $F_S({\cal F},{\cal D})$? These can be implemented by item-symmetric virtual VCG allocation rules whose weight-functions are item-symmetric. The proof of this is identical to the proof of Proposition~\ref{prop:VCG} noticing that $F_S({\cal F},{\cal D})$ lies in the lower-dimensional space spanned by the item-symmetries. We note that we do not require the virtual VCG allocation rules to be {\em simple} in the same sense defined in Section~\ref{sec:independent}, as this could violate item-symmetry. 

\item However, given an  item-symmetric weight vector $\vec{w}$ how do we even run an item-symmetric virtual VCG allocation rule corresponding to $\vec{w}$? There are two issues with this: (a) how to enforce the item-symmetry of the virtual VCG allocation rule; and (b) there could be multiple item-symmetric virtual VCG allocation rules consistent with $\vec{w}$, e.g., if $\vec{w}$ is perpendicular to a facet of $F_S({\cal F},{\cal D})$. Here is how we resolve these issues: First, we apply Lemma~\ref{lem:tiebreaking} to the symmetric weight vector $\vec{w}$ to get a non-symmetric weight vector $\vec{w}'$ (we may do this transformation explicitly or do a lazy-evaluation of it---this is relevant only for our computational results three bullets down). When bidders submit their types, we {pick a permutation $\sigma$ uniformly at random}, and permute the names of the items. Then we use the simple virtual VCG allocation rule corresponding to $\vec{w}'$. Finally, we un-permute the names of the items {in the allocation}. {We denote this allocation rule by $S.VVCG_{\mathcal{F}}(\vec{w})$.} It is clear that $S.VVCG_{\mathcal{F}}(\vec{w})$ is well-defined (i.e. no tie-breaking will ever be required), is item-symmetric, and defines a virtual VCG allocation rule w.r.t. the original weight vector $\vec{w}$. Given the above discussion, every item-symmetric weight vector $\vec{w}$ has a succinct description of the form $\{w_{ij}(\vec{v}_i)\}_{i, j, \vec{v}_i \in E_i}$, which defines uniquely an item-symmetric virtual VCG allocation rule w.r.t. $\vec{w}$ {(namely, $S.VVCG_{\mathcal{F}}(\vec{w})$)}; 

\item Given the above definitions and interpretations, we can generalize the results of Sections~\ref{sec:independent} and~\ref{sec:algorithms} to the polytope $F_S({\cal F},{\cal D})$.
\item Next we discuss how to extend the computationally-friendly results of Section~\ref{sec:approximations} to the item-symmetric setting, while maintaining the computational complexity of all algorithms polynomial in  $m$, $\max_i |E_i| = O(n^c)$ and $\ell$, where $\ell$ is as in the statement of the theorem. We define an item-symmetric distribution ${\cal D'}$ as follows: We draw $k''=k+k' \sum_{i=1}^m |E_i|$ profiles of bidders $P_1,\ldots,P_ {k''}$ from ${\cal D}$ as in Section~\ref{sec:approximations}, except that, for each bidder $i$, we draw $k'$ profiles conditioning on the type of the bidder being each element of $E_i$ and not $T_i$. Then we define (without explicitly writing down) ${\cal D'}$ to be the {two-stage} distribution that {in the first stage} draws a random profile from $P_1,\ldots,P_{k''}$ and {in the second-stage} permutes the items using a uniformly random item-permutation. {We claim that using $T= \poly(m n^c)$ in Algorithm~\ref{alg:preprocess} suffices to obtain an analog of Theorems~\ref{thm:appxSO} and~\ref{thm:appxdecomp} for our setting with probability of success $1-e^{-1/\epsilon}$. (We address the running time shortly.) The reason we can save on the number of samples is that we are working with $F_S({\cal F},{\cal D})$ and hence all reduced forms are forced to be item-symmetric. So we need to prove concentration of measure for a smaller number of marginal allocation probabilities.}

\item Unfortunately, we cannot afford to {compute reduced forms with respect to ${\cal D}'$, as we can't in general} evaluate the reduced form of $S.VVCG_{\mathcal{F}}(\vec{w})$ {on a given type profile without making prohibitively many queries to $A_{\cal F}$}. Instead, we will also independently sample item permutations $\sigma_1,\ldots,\sigma_{k''}$, and {associate the permutation $\sigma_{\alpha}$ with $P_{\alpha}$ in the following sense. If a given type profile was sampled by ${\cal D}'$ after sampling $P_{\alpha}$ in the first stage of ${\cal D}'$, we will permute the items by (just) $\sigma_{\alpha}$ when evaluating $S.VVCG_{\mathcal{F}}(\vec{w})$ on that profile, instead of taking a uniformly random permutation. In other words, we have removed the randomness in evaluating $S.VVCG_{\mathcal{F}}(\vec{w})$ and fixed the applied item-permutation to some $\sigma_{\alpha}$, which was chosen uniformly at random. Doing so, we still have the same expectations as in the previous bullet, and we can deduce that the reduced form of $S.VVCG_{\mathcal{F}}(\vec{w})$ when consumers are sampled from $\mathcal{D}$ is very close to the reduced form of $S.VVCG_{\mathcal{F}}(\vec{w})$ when consumers are sampled from $\mathcal{D}'$ (while only using item-permutation $\sigma_{\alpha}$ to run $S.VVCG_{\mathcal{F}}(\vec{w})$ on all profiles sampled from ${\cal D}'$ after sampling $P_{\alpha}$ in the first stage of ${\cal D}'$, as explained above). }

\item With the above modifications, for both Theorems~\ref{thm:appxSO} and~\ref{thm:appxdecomp} the running time of the corresponding algorithm can be made polynomial in $\ell'$, $m$, $n^c$, $1/\epsilon$ and $rt_{\mathcal{F}}(\poly(n^c, m,\log{1/\epsilon},\ell'))$, where $\ell'$ is the max of $\ell$ (see statement) and the bit complexity of the coordinates of the input to the algorithms. To achieve this we only do computations with succinct descriptions of item-symmetric reduced forms and weight vectors. What we need to justify further is that we can do exact computations on these objects with respect to the distribution ${\cal D'}$ given oracle access to $A_{\cal F}$ in the afore-stated running time. For this it suffices to be able compute the succinct description of the reduced form $\vec{\pi}$ of {$S.VVCG_{\mathcal{F}}(\vec{w})$ (using permutation $\sigma_{\alpha}$ on profiles coming from $P_{\alpha}$ as explained above).} The small obstacle is that the support of ${\cal D}'$ is not polynomial in the required running time, but it suffices to do the following. For each profile $P_{\alpha}$ find the allocation output by the (non item-symmetric) virtual VCG allocation rule corresponding to the perturbed vector $\vec{w}'$, {after relabeling the items according to $\sigma_{\alpha}$.} This we can do in the allotted running time with a lazy evaluation of the perturbation. Then, for all $i$, $\vec{v}_i$ and $j$, to compute ${\pi}_{ij}(\vec{v}_i)$ do the following: for all profiles $P_{\alpha}$, let $x_{\alpha}(i,\vec{v}_i)$ denote the number of $\tau$ such that $\tau(\vec{v}_i)$ matches the type $t_i(\alpha)$ of bidder $i$ in $P_{\alpha}$. Let $y_{\alpha}(i,\vec{v}_i)$ denote the number of $\tau$ such that $\tau(\vec{v}_i)$ matches the type of bidder $i$ in $P_{\alpha}$, and item $\tau(j)$ is awarded to bidder $i$. It is easy to compute $x_{\alpha}(i,\vec{v}_i)$: let $J^1_v = \{j | v_{ij} = v\}$, and $J^2_v$ be the set of items that bidder $i$ values at $v$ in profile $P_{\alpha}$. Then if $|J^1_v| \neq |J^2_v|$ for any $v$, $x_{\alpha}(i,\vec{v}_i) = 0$. Otherwise, $x_{\alpha}(i,\vec{v}_i) = \prod_v |J^1_v|!$, because $\tau(\vec{v}_i)$ matches the type of bidder $i$ iff $\tau$ maps all of $J^1_v$ to $J^2_v$ for all $v$. Computing $y_{\alpha}(i,\vec{v}_i)$ is also easy: simply break up $J^2_v$ into two sets: $J^2_v(W)$ of items that bidder $i$ values at $v$ and wins, and $J^2_v(L)$ of items that bidder $i$ values at $v$ and loses. Then if $x_{\alpha}(i,\vec{v}_i) \neq 0$, $y_{\alpha}(i,\vec{v}_i) = |J^2_{v_{ij}}(W)|\cdot (|J^2_{v_{ij}}|-1)! \cdot \prod_{v \neq v_{ij}} |J^2_v|!$, because bidder $i$ is awarded item $\tau(j)$ and $\tau(\vec{v}_i)$ matches the type of bidder $i$ iff $\tau(j) \in J^2_{v_{ij}}(W)$, and $\tau$ maps $J^1_v$ to $J^2_v$ for all $v$. Once we've computed $x_{\alpha}(i,\vec{v}_i)$ and $y_{\alpha}(i,\vec{v}_i)$, it is easy to see that:

$$\pi_{ij}(\vec{v}_i) = \frac{1}{|\{\alpha | x_{\alpha}(i,\vec{v}_i) > 0\}|}\sum_{\alpha | x_{\alpha}(i,\vec{v}_i) > 0} \frac{y_{\alpha}(i,\vec{v}_i)}{x_{\alpha}(i,\vec{v}_i)}.$$

\notshow{
\item Next we discuss how to extend the computationally-friendly results of Section~\ref{sec:approximations} to the item-symmetric setting, while maintaining the computational complexity of all algorithms polynomial in  $m$, $\max_i |E_i| = O(n^c)$ and $\ell$, where $\ell$ is as in the statement of the theorem. We define an item-symmetric distribution ${\cal D'}$ as follows: We draw $k''=k+k' \sum_{i=1}^m |E_i|$ profiles of bidders $P_1,\ldots,P_ {k''}$ from ${\cal D}$ as in Section~\ref{sec:approximations}, except that, for each bidder $i$, we draw $k'$ profiles conditioning on the type of the bidder being each element of $E_i$ and not $T_i$. Then we define (without explicitly writing down) ${\cal D'}$ to be the distribution that first draws a random profile from $P_1,\ldots,P_{k''}$ and then permutes the items using a uniformly random item-permutation. We claim that using $T= \poly(m n^c)$ in Algorithm~\ref{alg:preprocess} suffices to obtain an analog of Theorems~\ref{thm:appxSO} and~\ref{thm:appxdecomp} for our setting with probability of success $1-e^{-1/\epsilon}$ and better running time discussed in the next bullet. The reason we can save on the number of samples is that we are working on $F_S({\cal F},{\cal D})$ and hence all reduced forms are forced to be item-symmetric. So we need to prove concentration of measure for a smaller number of marginal allocation probabilities.

\item For both theorems the running time of the corresponding algorithm can be polynomial in $\ell'$, $m$, $n^c$, $1/\epsilon$ and $rt_{\mathcal{F}}(\poly(n^c, m,\log{1/\epsilon},\ell'))$, where $\ell'$ is the max of $\ell$ (see statement) and the bit complexity of the coordinates of the input to the algorithms. To do this we only do computations with succinct descriptions of item-symmetric reduced forms and weight vectors. What we need to justify is that we can do exact computations on these objects with respect to the distribution ${\cal D'}$ given oracle access to ${\cal A}_{\cal F}$ in the afore-stated running time. For this it suffices to be able compute the succinct description of the reduced form $\vec{\pi}$ of the item-symmetric virtual VCG allocation rule corresponding to a succinctly-specified item-symmetric $\vec{w}$ in the allotted time. The small obstacle is that the support of ${\cal D}'$ is not polynomial in this time. But it suffices to do the following. For each profile $P_{\alpha}$ find the allocation output by the (non item-symmetric) virtual VCG allocation rule corresponding to the perturbed vector $\vec{w}'$, as per the perturbation defined earlier. This we can do in the allotted running time with a lazy evaluation of the perturbation. Then, for all $i$, $\vec{v}_i$ and $j$, to compute ${\pi}_{ij}(\vec{v}_i)$ do the following: for all profiles $P_{\alpha}$, sum over all item permutations $\sigma$ the indicator function of whether the type of bidder $i$ in $P_{\alpha}$ is $\sigma(\vec{v}_i)$ and whether item $\sigma(j)$ was given to bidder $i$. Then sum these numbers over all profiles and divide out by $k'' n!$.}
\end{itemize} 
The above bullet points explain briefly how our ideas from the previous sections are modified for item-symmetric distributions. The complete details are omitted.\end{prevproof}}

\begin{prevproof}{Theorem}{thm:bounded} We combine Theorem~\ref{thm:itemsym} with (i) a discretization of the hypercube so that every $\vec{v}$ in the support of the distribution has $v_{ij} = k\delta$, $k \in \mathbb{N}$, for all $i,j$; and (ii) the approximately-BIC to BIC reduction of Section 6 of~\cite{DW12}, informally stated below.
\begin{informaltheorem}(Reworded from~\cite{DW12}) Let $\mathcal{C} = \times_i \mathcal{C}_i$ and $\mathcal{C}' = \times_i \mathcal{C}'_i$ be product distributions sampling every additive bidder independently from $[0,1]^n$. Suppose that, for all $i$, $\mathcal{C}_i$ and $\mathcal{C}'_i$ can be coupled so that with probability $1$, $\vec{v}_i$ sampled from $\mathcal{C}_i$ and $\vec{v}'_i$ sampled from $\mathcal{C}'_i$ satisfy $v_{ij} \geq v'_{ij} \geq v_{ij} - \delta$ for all $j$. If $M'$ is any $\epsilon$-BIC mechanism for $\mathcal{C}'$, then with \emph{exact} knowledge of the reduced form of $M'$ with respect to $\mathcal{C}'$, we can transform $M'$ into a BIC mechanism for $\mathcal{C}$ while only losing {$O(C(\sqrt{\delta} + \sqrt{\epsilon}))$} revenue, {where $C$ is the maximum number of items that are allowed to be allocated simultaneously.} In item symmetric settings, the reduction runs in time polynomial in $n^{1/\delta},m$.
\end{informaltheorem}
 While doing the discretization is straihgtforward, there is an issue with applying the aforementioned approximately-BIC to BIC reduction. To directly apply the reduction in our setting, one might try to take $\mathcal{C}'$ to be the $\mathcal{D}'$ from the sampling procedure. Unfortunately, this doesn't work because $\mathcal{D}'$ is correlated across bidders. Instead, we might try to take $\mathcal{C}'$ to be $\mathcal{D}$. This too doesn't work because we can't exactly compute the reduced form of a mechanism with respect to $\mathcal{D}$. We can, however, compute the reduced form of a mechanism with respect to $\mathcal{D}$ with quite good accuracy. So we will prove a quick lemma about the quality of the reduction proposed in~\cite{DW12} in our setting. Virtually the same lemma is used in~\cite{HartlineKM11} where the ideas behind this reduction originated, but in a different setting. 

\begin{lemma}\label{lem:reduction} Let $\mathcal{C}$ and $\mathcal{C}'$ satisfy the hypotheses of Theorem 4 in~\cite{DW12}, and let $M'$ be a $\gamma$-BIC mechanism whose reduced form with respect to $\mathcal{C}'$ is $\vec{\pi}'$ (which is possibly unknown). Then with knowledge of some $\vec{\pi}$ such that $|\vec{\pi}-\vec{\pi}'|_1 \leq \epsilon$, the reduction of~\cite{DW12} transforms $M'$ into a $2\epsilon$-BIC mechanism for $\mathcal{C}$ while only losing {$O(C(\sqrt{\delta} + \sqrt{\gamma}))$} revenue, where $C$ is as above.
\end{lemma}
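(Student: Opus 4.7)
The plan is to follow the reduction of~\cite{DW12} essentially unchanged, but carefully track the $\ell_1$-error $\epsilon$ of our approximate reduced form through the two places where it enters the argument: (i) the Bayesian Incentive Compatibility inequalities, and (ii) the revenue guarantee. Recall that the reduction proceeds bidder-by-bidder: when bidder $i$ of type $\vec{v}_i \sim \mathcal{C}_i$ reports to the new mechanism, we run an internal \emph{type-surrogate matching} step that draws many ``replicas'' from $\mathcal{C}'_i$ and produces a surrogate type $\vec{s}_i$. The original mechanism $M'$ is then executed on the surrogates, and bidder $i$ is charged the expected price of $\vec{s}_i$ under $M'$. The matching is computed to maximize the total \emph{declared value} $\vec{\pi}'(\vec{s}_i) \cdot \vec{v}_i - p'(\vec{s}_i)$ summed over replica/surrogate edges, which is precisely what requires knowing $\vec{\pi}'$.

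First I would substitute $\vec{\pi}$ for $\vec{\pi}'$ everywhere $\vec{\pi}'$ appears in the matching LP of~\cite{DW12} and run the reduction verbatim. Since the coordinates of any $\vec{v}_i \in [0,1]^n$ are bounded by $1$ in $\ell_\infty$ and $|\vec{\pi}-\vec{\pi}'|_1 \leq \epsilon$, for any surrogate type $B$ and any reported type $A$, H\"{o}lder's inequality gives
\[
\bigl|(\vec{\pi}(B) - \vec{\pi}'(B)) \cdot A\bigr| \;\leq\; |\vec{\pi}(B) - \vec{\pi}'(B)|_1 \cdot |A|_\infty \;\leq\; \epsilon.
\]
Consequently, for each bidder, the perceived utility used in the matching differs from the true utility under $\mathcal{C}'$ by at most $\epsilon$. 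In the BIC analysis of~\cite{DW12}, one shows that truthfully reporting $\vec{v}_i$ in the new mechanism weakly dominates any misreport $\vec{v}_i'$, via a swapping argument inside the optimal matching using quantities of the form $\vec{\pi}'(\cdot) \cdot \vec{v}_i - p'(\cdot)$. Replacing $\vec{\pi}'$ with $\vec{\pi}$ in these calculations perturbs both sides of every BIC inequality by at most $\epsilon$, so truthful reporting is still optimal up to an additive $2\epsilon$ slack, i.e.\ the resulting mechanism is $2\epsilon$-BIC for $\mathcal{C}$.

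Next, for the revenue bound, I would observe that the $O(C(\sqrt{\delta}+\sqrt{\gamma}))$ loss in~\cite{DW12} arises from two sources that are independent of whether we use $\vec{\pi}$ or $\vec{\pi}'$: the discretization gap $\delta$ between $\mathcal{C}$ and $\mathcal{C}'$ (which costs $O(C\sqrt{\delta})$ because with probability $1$ each replica's value for the allocation decreases by at most $C\delta$, and a standard variance bound converts this to $O(\sqrt{\delta})$ per item), and the initial $\gamma$-BIC slack of $M'$ (which costs $O(C\sqrt{\gamma})$ by the surrogate-matching analysis). The revenue charged by the new mechanism is simply the expected price paid by the surrogate under $M'$, so the only way using $\vec{\pi}$ rather than $\vec{\pi}'$ could affect revenue is by choosing a different matching; but since we take the optimum of the perturbed objective, and the true objective is within $\epsilon$ of the perturbed one, the matching chosen by the perturbed objective has true value at most $2\epsilon$ below optimal. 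This loss is absorbed into the $O(C(\sqrt{\delta}+\sqrt{\gamma}))$ term provided $\epsilon$ is no larger than this quantity, which is the regime of interest.

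The main obstacle I anticipate is making the BIC perturbation argument rigorous without unpacking the entire surrogate-matching analysis of~\cite{DW12}: concretely, showing that the matching LP solved with the perturbed objective is a near-optimal solution for the true LP in a sense strong enough that the swapping argument establishing BIC still goes through with only an additive $2\epsilon$ degradation. The cleanest way to handle this is to note that any feasible fractional matching has the same value under both objectives up to $\epsilon$, so the BIC comparison a bidder can make between truth-telling and any deviation is witnessed by the \emph{same} alternative matching used in~\cite{DW12}, whose value under $\vec{\pi}$ is within $\epsilon$ of its value under $\vec{\pi}'$ on both sides of the inequality; this immediately yields the $2\epsilon$-BIC guarantee without further combinatorial work.
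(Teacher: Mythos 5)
Your argument is essentially identical to the paper's. The paper also frames the reduction as a per-bidder surrogate-selection stage that is truthful when run with exact knowledge of the reduced form (there, phrased as a VCG auction over surrogates, which is the same max-weight bipartite matching you describe), observes that $\vec{v}_i \in [0,1]^n$ and $|\vec{\pi}_i(\cdot)-\vec{\pi}'_i(\cdot)|_1\le\epsilon$ imply $|\vec{v}_i\cdot(\vec{\pi}_i(\cdot)-\vec{\pi}'_i(\cdot))|\le\epsilon$ (your H\"older step), and deduces that the VCG inequality degrades by at most $\epsilon$ on each side, yielding $2\epsilon$-BIC; the revenue bound it simply inherits from the~\cite{DW12} analysis, as you do. One small correction to your setup: the replicas are drawn from $\mathcal{C}_i$ (make-believe copies of the real bidder), while the surrogates are drawn from $\mathcal{C}'_i$; this symmetry is what makes the surrogate distribution entering $M'$ equal to $\mathcal{C}'$ and is used implicitly in both your and the paper's revenue claim.
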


\begin{prevproof}{Lemma}{lem:reduction} We avoid repeating a complete description of the reduction and refer the reader to~\cite{DW12} for more details. At a high level, the reduction is the following. The new mechanism $M$ is a two-stage mechanism. First, each bidder $i$, independently from the other bidders, plays a VCG auction against make-believe replicas of herself, drawn independently from ${\cal C}_i$, to purchase a surrogate from a collection of surrogates drawn independently from ${\cal C}'_i$.  (The items of the per-bidder VCG auction are the surrogates and, in particular, they have nothing to do with the items that $M$ is selling. Moreover, the feasibility constraints of the VCG auction are just unit-demand constraints on the bidder-side and unit-supply constraints on the item-side.) After each bidder buys a surrogate, the purchased surrogates play $M'$ against each other. Each bidder receives the items their surrogate is awarded and pays the price the surrogate pays in $M'$, as well as a little extra in order to buy the surrogate in the per-bidder VCG auction. The truthfulness of the two-stage mechanism $M$ boils down to the truthfulness of VCG: If we can exactly evaluate the value of bidder $i$ of type $\vec{v}_i$ for being represented by each surrogate $\vec{s}_i$, and we use these values in computing the VCG allocation in the per-bidder VCG auction, then $M$ is  BIC. Moreover, it is shown in~\cite{DW12} that the distribution of surrogates that play $M'$ is exactly $\mathcal{C}'$. So, this implies that, if we know exactly the reduced form of $M'$ with respect to $\mathcal{C}'$, we can exactly compute the value of the bidder for each surrogate, and $M$ will be BIC.

If we only know the reduced form of $M'$ with respect to $\mathcal{C}'$ within $\epsilon$ in $\ell_1$-distance, then we cannot exactly evaluate the value of bidder $\vec{v}_i$ for being represented by surrogate $\vec{s}_i$, but we can evaluate it within $\epsilon$. Suppose that we run per-bidder VCG auctions using our estimates. In the VCG auction corresponding to bidder $i$, suppose $\vec{v}_i$ is the true type of the bidder, let $\vec{\pi}_1$ be our estimate of the reduced form of the surrogate that was sold to the bidder, let $p_1$ be the price that surrogate pays in $M'$, and let $q_1$ denote the price paid for that surrogate in VCG. Let $(\vec{\pi}_2,p_2,q_2)$ be the corresponding triplet for another surrogate that the bidder would win by misreporting her type. By the truthfulness of VCG, 
$$\vec{v}_i \cdot \vec{\pi}_1 - p_1 - q_1 \geq \vec{v}_i \cdot \vec{\pi}_2 - p_2 - q_2.$$
The question is how much the bidder regrets not misreporting to the VCG auction given that the true reduced form of surrogate $\vec{s}_i$ in $M'$ is some $\vec{\pi}'_i$, which was false-advertised in the VCG auction as $\vec{\pi}_i$, $i=1,2$.  Given that $|\vec{\pi}'_i - \vec{\pi}_i|_1 \le \epsilon$ and remembering that each $\vec{v}_{i} \in [0,1]^n$ we get:
\begin{align*}
\vec{v}_i \cdot \vec{\pi}'_1 &\geq \vec{v}_i \cdot \vec{\pi}_1 - \epsilon \\
\vec{v}_i \cdot \vec{\pi}_2 &\geq \vec{v}_i \cdot \vec{\pi}'_2- \epsilon.
\end{align*}
Therefore, $$\vec{v}_i \cdot \vec{\pi}'_1 - p_1 - q_1 \geq \vec{v}_i \cdot \vec{\pi}'_2 - p_2-q_2 -2\epsilon.$$
This means that if bidder $i$ were to misreport her type to get surrogate $\vec{s}_2$, her true utility would increase by at most $2\epsilon$. So $M$ is $2\epsilon$-BIC.

Finally, we can use the same argument as in \cite{DW12} to show that the reduction loses at most {$O(C(\sqrt{\delta}+\sqrt{\gamma}))$ in revenue.}\end{prevproof}

\noindent Coming back to the proof of Theorem~\ref{thm:bounded}, if we just discretized the value distribution without running the approximately-BIC to BIC reduction of Lemma~\ref{lem:reduction}, we would get a mechanism that is $(\epsilon+\delta)$-BIC and suboptimal by $(\epsilon+\delta)C$ (Lemma 3 of~\cite{DW12} pins down the loss of truthfulness/revenue due to discretization of the value distribution in multiples of $\delta$, and Theorem~\ref{thm:itemsym} bounds the additional loss due to computational constraints). If we apply the reduction of Lemma~\ref{lem:reduction} on this mechanism, we can turn it into one that is $\epsilon$-BIC and suboptimal by $O(\sqrt{\epsilon}+\sqrt{\delta})C$. The reason we even bother running the approximately-BIC to BIC reduction when we don't get a truly BIC mechanism in the end is because, while keeping our algorithm efficient, $\epsilon$ can be made as small as $1/\poly(n,m)$, while $\delta$ needs to stay a fixed constant. So after our reduction we obtained a qualitatively stronger result, namely one whose distance from truthfulness can be made arbitrarily small in polynomial time.
\end{prevproof}


\section{Accommodating Budget Constraints}\label{app:budgets}
In this section we show a simple modification to our solutions that allows them to accommodate budget constraints as well. To do this, we simply cite an observation from~\cite{DW12}. There, it is observed that if the solution concept is interim individual rationality, then the LP that finds the revenue-optimal reduced form can be trivially modified to accommodate budget constraints. In Figure~\ref{fig:LPbudgets} we show how to modify our LP from Figure~\ref{fig:LPMDMD} to accommodate budget constraints.

\begin{figure}[ht]
\colorbox{MyGray}{
\begin{minipage}{\textwidth} {
\noindent\textbf{Variables:}
\begin{itemize}
\item $p_i(\vec{v}_i)$, for all bidders $i$ and types $\vec{v}_i \in T_i$, denoting the expected price paid by bidder $i$ when reporting type $\vec{v}_i$ over the randomness of the mechanism and the other bidders' types.
\item $\pi_{ij}(\vec{v}_i)$, for all bidders $i$, items $j$, and types $\vec{v}_i \in T_i$, denoting the probability that bidder $i$ receives item $j$ when reporting type $\vec{v}_i$ over the randomness of the mechanism and the other bidders' types.
\end{itemize}
\textbf{Constraints:}
\begin{itemize}
\item $\vec{\pi}_i(\vec{v}_i) \cdot \vec{v}_i - p_i(\vec{v}_i) \geq \vec{\pi}_i(\vec{w}_i)\cdot \vec{v}_i - p_i(\vec{w}_i) - \delta $, for all bidders $i$, and types $\vec{v}_i,\vec{w}_i \in T_i$, guaranteeing that the reduced form mechanism $(\vec{\pi},\vec{p})$ is $\delta$-BIC.
\item $\vec{\pi}_i(\vec{v}_i) \cdot \vec{v}_i - p_i(\vec{v}_i) \geq 0$, for all bidders $i$, and types $\vec{v}_i \in T_i$, guaranteeing that the reduced form mechanism $(\vec{\pi},\vec{p})$ is individually rational.
\item $SO(\vec{\pi}) = $``yes,'' guaranteeing that the reduced form $\vec{\pi}$ is in {$F(\mathcal{F},\mathcal{D}')$}.
\item $p_i(\vec{v}_i) \leq B_i$, for all bidders $i$ and types $\vec{v}_i \in T_i$, \textbf{guaranteeing that no bidder $i$ pays more than their budget $B_i$.}
\end{itemize}
\textbf{Maximizing:}
\begin{itemize}
\item $\sum_{i=1}^{m} \sum_{\vec{v}_i \in T_i} \Pr[t_i = \vec{v}_i]\cdot p_i(\vec{v}_i)$, the expected revenue {when played by bidders sampled from the true distribution $\mathcal{D}$}.\\
\end{itemize}}
\end{minipage}}
\caption{A linear programming formulation for MDMDP that accommodates budget constraints.}
\label{fig:LPbudgets}
\end{figure}
It is also shown in~\cite{DW12} that accommodating budget constraints comes at a cost. First, it is shown that without budget constraints, one can turn any interim-IR mechanism into an ex-post IR mechanism with no loss in revenue. However, with budget constraints, there is a potentially large gap between the revenue of the optimal ex-post IR mechanism and the optimal interim IR mechanism. In other words, accommodating budget constraints requires accepting interim IR instead of ex-post IR. Second, the approximately-BIC to BIC reduction of~\cite{DW12} (used in the proof of Theorem~\ref{thm:bounded}) does not respect budget constraints. So to accommodate budgets in Theorem~\ref{thm:bounded} the output mechanism needs to be $\delta$-BIC instead of $\epsilon$-BIC.\footnote{Recall that the runtime required to find and execute the mechanism of Theorem~\ref{thm:bounded} is polynomial in $1/\epsilon$ but exponential in $1/\delta$.} Relative to the ability to naturally accommodate budget constraints, these costs are minor, but we state them in order to correctly quantify the settings our techniques solve.
\bibliographystyle{plain}
\bibliography{costasbib}

\begin{thebibliography}{10}

\bibitem{Alaei11}
Saeed Alaei.
\newblock {Bayesian Combinatorial Auctions: Expanding Single Buyer Mechanisms
  to Many Buyers}.
\newblock In {\em the 52nd Annual IEEE Symposium on Foundations of Computer
  Science (FOCS)}, 2011.

\bibitem{AlaeiFHHM12}
Saeed Alaei, Hu~Fu, Nima Haghpanah, Jason Hartline, and Azarakhsh Malekian.
\newblock {Bayesian Optimal Auctions via Multi- to Single-agent Reduction}.
\newblock In {\em the 13th ACM Conference on Electronic Commerce (EC)}, 2012.
\newblock {Posted to arXiv on March 22, 2012: http://arxiv.org/abs/1203.5099}.

\bibitem{BeiH11}
Xiaohui Bei and Zhiyi Huang.
\newblock {Bayesian Incentive Compatibility via Fractional Assignments}.
\newblock In {\em the Twenty-Second Annual ACM-SIAM Symposium on Discrete
  Algorithms (SODA)}, 2011.

\bibitem{BhattacharyaGGM10}
Sayan Bhattacharya, Gagan Goel, Sreenivas Gollapudi, and Kamesh Munagala.
\newblock {Budget Constrained Auctions with Heterogeneous Items}.
\newblock In {\em the 42nd ACM Symposium on Theory of Computing (STOC)}, 2010.

\bibitem{Border91}
Kim~C. Border.
\newblock Implementation of reduced form auctions: A geometric approach.
\newblock {\em Econometrica}, 59(4):1175--1187, 1991.

\bibitem{BriestCKW10}
Patrick Briest, Shuchi Chawla, Robert Kleinberg, and S.~Matthew Weinberg.
\newblock {Pricing Randomized Allocations}.
\newblock In {\em the Twenty-First Annual ACM-SIAM Symposium on Discrete
  Algorithms (SODA)}, 2010.

\bibitem{CaiD11}
Yang Cai and Constantinos Daskalakis.
\newblock {Extreme-Value Theorems for Optimal Multidimensional Pricing}.
\newblock In {\em the 52nd Annual IEEE Symposium on Foundations of Computer
  Science (FOCS)}, 2011.

\bibitem{CaiDW12}
Yang Cai, Constantinos Daskalakis, and S.~Matthew Weinberg.
\newblock {An Algorithmic Characterization of Multi-Dimensional Mechanisms}.
\newblock In {\em the 44th Annual ACM Symposium on Theory of Computing (STOC)},
  2012.
\newblock Posted to arXiv on Dec 20, 2011: http://arxiv.org/abs/1112.4572.

\bibitem{CaiH12}
Yang Cai and Zhiyi Huang.
\newblock {Simple and Nearly Optimal Multi-Item Auctions}.
\newblock {\em Manuscript}, 2012.

\bibitem{ChawlaHK07}
Shuchi Chawla, Jason~D. Hartline, and Robert~D. Kleinberg.
\newblock {Algorithmic Pricing via Virtual Valuations}.
\newblock In {\em the 8th ACM Conference on Electronic Commerce (EC)}, 2007.

\bibitem{ChawlaHMS10}
Shuchi Chawla, Jason~D. Hartline, David~L. Malec, and Balasubramanian Sivan.
\newblock {Multi-Parameter Mechanism Design and Sequential Posted Pricing}.
\newblock In {\em the 42nd ACM Symposium on Theory of Computing (STOC)}, 2010.

\bibitem{ChawlaMS10}
Shuchi Chawla, David~L. Malec, and Balasubramanian Sivan.
\newblock {The Power of Randomness in Bayesian Optimal Mechanism Design}.
\newblock In {\em the 11th ACM Conference on Electronic Commerce (EC)}, 2010.

\bibitem{CM85}
Jacques Cremer and Richard~P. McLean.
\newblock Optimal selling strategies under uncertainty for a discriminating
  monopolist when demands are interdependent.
\newblock {\em Econometrica}, 53(2):345--361, 1985.

\bibitem{CM88}
Jacques Cremer and Richard~P. McLean.
\newblock Full extraction of the surplus in bayesian and dominant strategy
  auctions.
\newblock {\em Econometrica}, 56(6):1247--1257, 1988.

\bibitem{DW12}
Constantinos Daskalakis and S.~Matthew Weinberg.
\newblock {Symmetries and Optimal Multi-Dimensional Mechanism Design}.
\newblock In {\em the 13th ACM Conference on Electronic Commerce (EC)}, 2012.
\newblock Posted to arXiv on Dec 17, 2011: http://arxiv.org/abs/1112.4006.

\bibitem{DobzinskiFK11}
Shahar Dobzinski, Hu~Fu, and Robert~D. Kleinberg.
\newblock {Optimal Auctions with Correlated Bidders are Easy}.
\newblock In {\em the 43rd ACM Symposium on Theory of Computing (STOC)}, 2011.

\bibitem{HartlineKM11}
Jason~D. Hartline, Robert Kleinberg, and Azarakhsh Malekian.
\newblock {Bayesian Incentive Compatibility via Matchings}.
\newblock In {\em the Twenty-Second Annual ACM-SIAM Symposium on Discrete
  Algorithms (SODA)}, 2011.

\bibitem{HartlineL10}
Jason~D. Hartline and Brendan Lucier.
\newblock {Bayesian Algorithmic Mechanism Design}.
\newblock In {\em the 42nd ACM Symposium on Theory of Computing (STOC)}, 2010.

\bibitem{Hoeffding63}
Wassily Hoeffding.
\newblock Probability inequalities for sums of bounded random variables.
\newblock {\em Journal of the American Statistical Association},
  58(301):13--30, 1963.

\bibitem{KleinbergW12}
Robert Kleinberg and S.~Matthew Weinberg.
\newblock Matroid prophet inequalities.
\newblock In {\em the 44th Annual ACM Symposium on Theory of Computing (STOC)},
  2012.
\newblock Posted to arXiv on Jan 23, 2012: http://arxiv.org/abs/1201.4764.

\bibitem{MR84}
Eric Maskin and John Riley.
\newblock {Optimal Auctions with Risk Averse Buyers.}
\newblock {\em Econometrica}, 52(6):1473--1518, 1984.

\bibitem{Matthews84}
Steven Matthews.
\newblock {On the Implementability of Reduced Form Auctions.}
\newblock {\em Econometrica}, 52(6):1519--1522, 1984.

\bibitem{MR92}
R.~Preston McAfee and Philip~J. Reny.
\newblock Correlated information and mechanism design.
\newblock {\em Econometrica}, 60(2):395--421, 1992.

\bibitem{Myerson81}
Roger~B. Myerson.
\newblock {Optimal Auction Design}.
\newblock {\em Mathematics of Operations Research}, 6(1):58--73, 1981.

\end{thebibliography}

\end{document}